\title{Multi-Leader Congestion Games with an Adversary}
 \author {
     Tobias Harks\textsuperscript{\rm 1},
     Mona Henle\textsuperscript{\rm 2},
     Max Klimm\textsuperscript{\rm 3},
     Jannik Matuschke\textsuperscript{\rm 4},
     Anja Schedel\textsuperscript{\rm 1}
 }
\newtheorem{theorem}{Theorem}
\newtheorem{definition}[theorem]{Definition}
\newtheorem{lemma}[theorem]{Lemma}
\newtheorem{corollary}[theorem]{Corollary}
\newtheorem{example}[theorem]{Example}
\newtheorem*{notation}{Notation} 
\newtheorem{claim}{Claim} 
\newtheorem{applemma}{Lemma}[section]
\newtheorem{appexample}[applemma]{Example}
\newenvironment{proofClaim}[1][]{\ifthenelse{\equal{#1}{}}{\begin{proof}}{\begin{proof}[#1]}}{\end{proof}}
\newcommand{\N}{\mathbb{N}}
\newcommand{\Z}{\mathbb{Z}}
\DeclareMathOperator*{\dev}{dev}
\begin{document}

\maketitle
\begin{abstract}
    We study a multi-leader single-follower congestion game where multiple users (leaders)  choose one resource out of a set of resources and, after observing the realized loads, an adversary (single-follower) attacks the resources with maximum loads, causing additional costs for the leaders. For the resulting strategic game among the leaders, we show that pure Nash equilibria may fail to exist and therefore,  we consider approximate equilibria instead.
    As our first main result, we show that the existence of a $K$-approximate equilibrium can always be guaranteed, where $K\approx 1.1974$ is the unique solution of a cubic polynomial equation. To this end, we give a polynomial time combinatorial algorithm which computes a $K$-approximate equilibrium.  
    The factor $K$ is tight, meaning that there is an instance that does not admit an $\alpha$-approximate equilibrium for any $\alpha<K$. Thus $\alpha=K$ is the smallest possible value of $\alpha$ such that the existence of an $\alpha$-approximate equilibrium can be guaranteed for \emph{any} instance of the considered game. 
    Secondly, we focus on approximate equilibria of a given fixed instance. We show how to compute efficiently a best approximate equilibrium, that is, 
    with smallest possible $\alpha$ among all $\alpha$-approximate equilibria of the given instance. 
\end{abstract}
\section{Introduction}
Hierarchical leader-follower games have received considerable attention in the artificial intelligence community, especially, because
several real-world applications related to the protection of vulnerable systems can be modeled within this framework. Applications include the security domain~\citep{Kiekintveld09,MarchesiC019,SinhaFAKT18,GanEW18}, where a leader aims at protecting a set of valuable targets and moves first by applying a
defender strategy such as controls or fortification of resources.  The adversary acts as a follower and, after observing the leader’s defensive strategy, chooses a strategy incurring maximum damage. The leader anticipates
the followers' strategy. Thus, the computation of the defender strategy takes the follower reaction into account.

While most works in this literature consider the case of a single leader, the
case of \emph{multiple leaders} playing a simultaneous-move strategic game subject to one or more followers
has received much less attention and only very few results
with respect to the existence and computational complexity of equilibria are known.
Note that the multi-leader case applies
to several scenarios, for example, in the analysis of deregulated electricity markets in which some of the large energy producers are the leaders and the smaller energy producers and independent system operator are the followers; see~\citet{LeyfferM10} and references therein. Also in the security domain related to transport and communication networks, there are usually multiple leaders that compete over the network resources subject to a followers' response; see~\citet{Kulkarni15}. One well-known obstacle in the analysis of multi-leader games---even in the realm of continuous formulations with convex action spaces for the leaders and single-valuedness of the followers' response---is the inherent non-convexity of the best-response correspondence that results in non-existence of pure Nash equilibria; see~\citet{Kulkarni15}.

In this paper, we consider a class of multi-leader single-follower games on discrete strategy spaces that are motivated by security applications with congestion effects. Consider a standard singleton congestion game where multiple users (leaders) choose one resource out of a set of resources. After observing the realized loads, an adversary (single-follower) attacks the resources with maximum loads, causing additional disutilities for the users on the attacked resources. The adversary may be thought of as either being a malicious player attacking the resources in order to maximize the caused damage or as controls by a central authority to counter tax or fare evasion; see~\citet{CHKM17} for a related mathematical model of fare evasion without any congestion or load balancing effects. In both applications it is sensible to assume that the adversary has limited resources, 
modeled by a fixed budget for his interventions that can be distributed freely on the resources, and that he acts rationally, investing the budget only on resources 
with maximum load. 
The users anticipate this strategy. From their perspective,
every maximum-load resource is equally likely under attack, that is, they assume the budget
to be spent evenly among the resources with maximum load (this
can be interpreted as a randomized strategy of the adversary choosing the uniform distribution 
over maximum-load resources). For the users, the additional cost
term corresponds to the expected additional
damage cost due to an attack.

This fundamental model has, to the best of our knowledge, not been analyzed
before and we investigate the existence and computation of (approximate) pure Nash equilibria of this multi-leader single-follower game.

\subsection{Our Results and Proof Techniques}
We first observe that pure Nash equilibria do not always exist in the introduced game, not even for linear congestion costs. 
This motivates the analysis of approximate pure Nash equilibria, where any unilateral deviation cannot improve the cost of the deviating leader by more than a factor $\alpha$, for some $\alpha\geq 1$. (Note that the adversary is still assumed to act optimally.) We analyze existence and efficient computation of approximate equilibria for the introduced game with linear congestion costs. 

As our first main result, we show that $\alpha=K\approx 1.1974$ is the smallest possible value of $\alpha$ such that the existence of an $\alpha$-approximate pure Nash equilibrium can always be guaranteed ($K$ is the unique solution of some cubic polynomial equation). 
For the proof, we give an efficient algorithm which computes a $K$-approximate equilibrium. 
The basic approach is to start with an empty game, and add the players one after another, always placing them on a best-response resource. If the addition of a new player makes some of the already added players ``unhappy'', meaning that there is a unilateral deviation decreasing their cost by more than a factor $\alpha$, we let the unhappy players deviate one after another until all players are happy again, that is, an $\alpha$-approximate pure Nash equilibrium is reached for the subset of players already added. Only then the next player is added. 
By choosing the possible deviations carefully, we can show that this procedure terminates after a polynomial number of steps for $\alpha=K$, showing existence of $K$-approximate pure Nash equilibria and giving an efficient way of computing them.
A similar approach has been used before to compute \emph{exact} pure Nash equilibria in the context of weighted congestion games~\citep{Milchtaich96,Ackermann09}, but we are not aware of any results regarding approximate equilibria applying this technique.
We furthermore provide an instance which does not admit an $\alpha$-approximate pure Nash equilibrium for any $\alpha<K$. 
This shows that $\alpha=K$ is tight in the sense that it is the smallest possible value such that the existence of an $\alpha$-approximate equilibrium can be guaranteed for \emph{all} instances of the introduced game. 

However, for a single given instance, better approximate equilibria might exist, that is, there may be $\alpha$-approximate equilibria with $\alpha<K$. 
We show how to compute efficiently, for a given instance, a \emph{best} approximate equilibrium,
that is, an $\alpha$-approximate equilibrium for the smallest value of $\alpha$ for which such an equilibrium exists.
Note that this in particular implies that we can decide efficiently whether a given instance admits an exact pure Nash equilibrium, and in case of existence, we can also compute such an equilibrium. 
Our algorithm is based on a careful analysis of the structure of optimal approximate equilibria, 
which allows us to enumerate a polynomially-sized set of possible resource-load configurations, from which an optimal approximate equilibrium can then be found using a simple linear program.

\subsection{Related Work}
The game that we analyze in this paper constitutes a Stackelberg game with multiple leaders and a single follower. The leaders' game is a singleton congestion game and we assume symmetric strategies, meaning that all leaders have the same strategy space. 
Stackelberg games with an underlying congestion game for (a subset of) the players have received considerable attention in the literature. 
\citet{CastiglioniMGC19} and~\citet{MarchesiC019}  consider a game with a \emph{single} leader and multiple followers where all players participate in a congestion game (but the leader's congestion cost functions may be different from the followers'). Depending on the structure of strategy spaces and congestion cost functions, they analyze the computational complexity of computing exact equilibria. In particular, they find that efficient algorithms are only possible for singleton strategy spaces (unless $P=NP$), and derive such algorithms for singleton strategy spaces where either all followers have the same strategies~\citep{CastiglioniMGC19}, or the followers can be divided in ``classes'' having the same strategies~\citep{MarchesiC019}.

There are several works analyzing hierarchical situations with a subsequent nonatomic network routing game, where a set of infinitesimally small players chooses paths in a network, and each player aims to minimize the (load-dependent) length of her chosen path leading to a Wardrop equilibrium~\cite{Wardrop52}. 
For works analyzing situations where a single leader determines capacities or prices in order to reduce the total congestion (plus investments for the case of capacities) of the Wardrop equilibria in a subsequent network routing game, we refer to \citet{Marcotte85mp,GairingHK17} for setting capacities, and~\citet{Beckmann56} and \citet{Yang04} for setting prices. \citet{Labbe98} study a model where a single leader sets prices in order to maximize her profit in a subsequent network routing game (but without congestion effects). 
\citet{HarksSV19} and \citet{CorreaGLNS18} consider a game where multiple leaders set prices in order to maximize their own profits achieved in a network routing game. The prices that the leaders are allowed to choose are upper-bounded by price caps (leader-specific in~\citet{CorreaGLNS18}, equal for all leaders in~\citet{HarksSV19}), and the two papers consider the (three-level) problem of a system designer who chooses the cap(s) in order to minimize total congestion. Finally, models where multiple leaders choose prices and capacities to maximize their individual profits achieved in a network routing game are for example analyzed by~\citet{JohariWR10}, \citet{Liu2011}, and \citet{HarksSchedel19}.

Regarding the computation of approximate equilibria in atomic congestion games, we refer to~\citet{CaragiannisFGS11,CaragiannisFGS15}. Finally, we also mention here  congestion games
with an adversarial structure such as agent  or resource failures, see~\citet{BiloMV18,MeirTBK12,Yupeng17}  or games with malicious players~\cite{BabaioffKP09}. 

\begin{table*}[tb]%
\centering
\begin{tabular}{ccc}
\toprule
load profile & deviation (of some player using $r$ to $r'$, notation $r \rightarrow r'$) & resulting cost improvement\\ \midrule
$(5,0,0)$ & $r_1 \rightarrow r_2$ & $5a_{r_1}+B\phantom{/2}=6 > 2= \phantom{2}a_{r_2} \phantom{{}+{}B}$ \\
$(4,1,0)$ & $r_1 \rightarrow r_2$ & $4a_{r_1}+B\phantom{/2}=6 >4=2a_{r_2}\phantom{{}+{}B}$  \\
$(3,2,0)$ & $r_1 \rightarrow r_3$ & $3a_{r_1}+B\phantom{/2}=6>5=\phantom{2}a_{r_3}\phantom{{}+{}B}$ \\
$(3,1,1)$ & $r_3 \rightarrow r_2$ & $\phantom{3}a_{r_3} \phantom{ {}+{} B/2}=5> 4=2a_{r_2}\phantom{{}+{}B}$  \\
$(2,2,1)$ & $r_2 \rightarrow r_1$ & $2a_{r_2}+B/2=7>6= 3a_{r_1}+B$ \\\bottomrule
\end{tabular}
\caption{Improving deviations for the candidate profiles for the game in Example~\ref{example_noPNE}.}
\label{table_example_noPNE}
\end{table*}

\section{The Model}\label{sec:model}
For an integer $k \in \Z_{\geq 0}$, let $[k] := \{1,\dots,k\}$. 
 Let $N = [n]$ be a finite set of players (leaders) and $R = \{r_1,\ldots,r_m\}$ be a finite set of $m$ resources. For each player~$i$, the set of strategies available to player~$i$ is $X_i=R$. 
We call $x = (x_1,\dots,x_n)$ with $x_i \in X_i$ for all $i \in N$ a \emph{strategy profile}, and $X = X_1 \times \dots \times X_n$ the \emph{strategy space}.

We use standard game theory notation; for a strategy profile $x \in X$, we write $x = (x_i, x_{-i})$ meaning that $x_i$ is the strategy that player~$i$ plays in $x$ and ${x}_{-i}$ is the partial strategy of all players except $i$. 
Every strategy profile $x = (x_1,\dots, x_n) \in X$ induces a \emph{load} or \emph{congestion} on the resources given by
$ \ell_r(x):=|\{i \in N\mid  x_i=r\}|, r\in R.$
 We are further given linear cost functions $a_r\ell_r(x) ,r\in R$ with nonnegative coefficients $a_r\geq 0$.
In classical congestion games, 
 the private cost of player~$i$ under strategy profile $x \in X$ is defined as $\pi_i(x) = a_{x_i}\ell_{x_i}(x).$ 
Now we model the actions of an \emph{adversary} (follower) after the leaders have
chosen their joint strategy profile $x\in X$.
Formally, given $x\in X$ the adversary solves
\begin{equation}\tag{$LP$}\label{lp-follower}
\max\sum_{r\in R} \ell_r(x)\kappa_r\;\;
\text{s.t.:}\quad  \sum_{r\in R}\kappa_r \leq B, \; \kappa\geq 0.
\end{equation}
The linear program~\eqref{lp-follower} has the interpretation that the adversary has a budget of $B > 0$ that can be freely distributed among the resources. For each unit of budget spent on a resource, the adversary receives a utility equal to the number of players on that resource since any interaction with a leader on a resource is equally beneficial for the follower. Thus, the adversary
strategically selects
those resources that are used by the 
maximum number of players in order to maximize the caused damage. It is not hard to see that these are precisely the optimal solutions to~\eqref{lp-follower}. While~\eqref{lp-follower}
may have multiple optimal solutions,
a reasonable selection among the optimal
solutions is the following, where we use the notation $M(x):=\max_{r\in R}\{\ell_r(x)\}$ together with 
$M^{-1}(x):=\arg\max_{r\in R}\{\ell_r(x)\}$:
\begin{equation}\label{eq:attack}\kappa_r^*(x)=\begin{cases} \frac{B}{|M^{-1}(x)|}, & \text{ if } r \in M^{-1}(x), \\ 0, &\text{ else.}\end{cases}\end{equation}
Clearly, $\kappa_r^*(x)$ is an optimal solution to~\eqref{lp-follower} and has the intuitive interpretation 
that, assuming that every maximum-load resource is equally likely to be under attack by the adversary,  from the perspective of the players it represents the expected additional
resource cost due to an attack.

The multi-leader congestion game with an adversary  can be defined as the  game $G=(N,X,B,\pi)$ in strategic form,
where
\begin{equation}\label{eq:cost_attack}
\pi_i(x):= a_{x_i}\ell_{x_i}(x)+\kappa_{x_i}^*(x).\end{equation}
We furthermore define
\begin{equation}
    c_r(x):=a_r\ell_r(x)+\kappa_{r}^*(x)
\end{equation}
for all $r\in R$  (this is useful in case we do not want to consider a specific player using $r$). 

A strategy profile $x\in X$ is called a \emph{pure Nash equilibrium (PNE)} of $G$ if for all $i\in N$:
\[ \pi_i(x)\leq \pi_i(y_i,x_{-i}) \text{ for all }y_i\in X_i.\]

Let us give an example of a multi-leader congestion game with an adversary showing
that pure Nash equilibria need not exist in general.

\begin{algorithm*}[tb]
\caption{Computation of an $\alpha$-approximate PNE.}\label{alg:approxnash}
\SetVlineSkip{0.3em} 
\SetNlSkip{-0.75em}
\SetInd{1.6em}{0.7em}
\SetKwInput{Input}{Input}
\SetKwInput{Output}{Output}
\Indentp{1em}
\Input{Player set $N = [n]$, resource set $R=\{r_1,\ldots,r_m\}$, resource cost coefficients $0\leq a_{1}\leq \cdots \leq a_{m}$, and $\alpha\geq 1$.}
\Output{$\alpha$-approximate pure Nash equilibrium $x$.}

$x \leftarrow (0,0,\ldots,0)$; \\
\For{$k=1,2, \ldots, n$}{\label{for-loop}
$k'\leftarrow \min\{i\in[m]: c_{r_{i}}(x_{-k},r_{i}) \leq c_r(x_{-k},r)$ $\forall r \in R\}$; \label{newplayer_br1}\\
$x_k \leftarrow r_{k'}$;\label{newplayer_br2}\\
\While{$W_{\alpha}(x) \ne \emptyset$}{ \label{while}
$R_{\alpha}(x)\leftarrow \{r\in R: x_i=r \text{ for some }i\in W_{\alpha}(x)\}$; \\
$i^-\leftarrow \max\{j\in [m]: r_{j} \in R_{\alpha}(x) \text{ and } c_{r_{j}}(x)\geq c_{r}(x) \ \forall r\in R_{\alpha}(x)\}$; \label{while1}\\
$i\leftarrow$ some player with $x_i=r_{i^-}$;  \label{while2}\\
$i^+ \leftarrow \min\{j\in [m]: c_{r_{j}}(x_{-i},r_{j}) \leq  c_r(x_{-i},r) \ \forall r \in R\}$; \label{while3} \\
 $x_i \leftarrow r_{i^+}$;
}
}
\end{algorithm*}

\begin{example}\label{example_noPNE}
Consider the game with $m=3$ resources, $n=5$ players, budget $B=6$ and resource cost coefficients 
\[
a_{r_1}=0, \ a_{r_2}=2, \text{ and } a_{r_3}=5.
\]
We proceed to show that this game has no pure Nash equilibrium. 
To this end, we show for each strategy profile~$x$ that there exists a player who can decrease her cost by a unilateral deviation from $x$. 
Since the players are symmetric, it suffices to analyze all ``load'' profiles $(\ell_{r_1}(x),\ell_{r_2}(x),\ell_{r_3}(x))$, that is, all vectors $(\ell_{r_1},\ell_{r_2},\ell_{r_3})\in \N^3$ with $\ell_{r_1}+\ell_{r_2}+\ell_{r_3} =n=5$. 
Since $a_{r_1}<a_{r_2}<a_{r_3}$, the only candidates for a PNE are those strategy profiles $x$ with $\ell_{r_1}(x)\geq\ell_{r_2}(x)\geq\ell_{r_3}(x)$. 
Thus, it suffices to show that there is no PNE among the five load profiles $(5,0,0), (4,1,0), (3,2,0), (3,1,1)$ and $(2,2,1)$. 
Table~\ref{table_example_noPNE} provides an improving deviation for each of these candidate profiles, showing that the game has no PNE. 
\end{example}
This example motivates the analysis of approximate equilibria defined as follows. 
\begin{definition}
A strategy profile $x\in X$ is an  $\alpha$-approximate pure Nash equilibrium ($\alpha$-PNE) of $G$ for some $\alpha\geq 1$, if for all $i\in N$:
\[ \pi_i(x)\leq \alpha \cdot \pi_i(y_i,x_{-i}) \text{ for all }y_i\in X_i.\]
A unilateral deviation which decreases the cost of the deviating player more than a factor $\alpha$ is called an $\alpha$-improving deviation, or an $\alpha$-improving move.
\end{definition}
For $\alpha=1$, we obtain the standard PNE.
For general $\alpha\geq 1$, the interpretation is that no player can improve her cost by a unilateral deviation gaining more than a factor $\alpha$.
We remark that, while one can similarly define \emph{additively} approximate equilibria, no existence guarantees can be given for such equilibria for any additive constant due to the scale-invariance of the games studied in this article (see Appendix~\ref{app:additive-approximate} for details).

\section{Computing $K$-approximate PNE}\label{sec:unrestricted}

In this section, we analyze approximate PNE of the introduced multi-leader congestion games with an adversary. As Example~\ref{example_noPNE} shows, existence of (exact) PNE can not be guaranteed for these games.
We show that $\alpha=K\approx 1.1974$ is the smallest possible $\alpha$ such that the existence of an $\alpha$-PNE can be guaranteed for any instance, where
\begin{align}
\label{def:K}
K:= \frac{1}{6} \biggl(1+\sqrt[3]{109-6\sqrt{330}}+\sqrt[3]{109+6\sqrt{330}}\,\biggr)
\end{align}
is the unique solution of the equation $-x^3+x^2/2+1=0$. 
To this end, we provide Algorithm~\ref{alg:approxnash} which efficiently computes a $K$-approximate PNE (see Theorem~\ref{theo_symm_and_linear_1}). This result is complemented by an instance where no $\alpha$-approximate PNE with $\alpha<K$ exists (see Theorem~\ref{theo_symm_and_linear_2}). 
As an easy consequence of the proof, we also get that exact PNE are guaranteed to exist if $n\leq 4$ or $m\leq 2$ (see Corollary~\ref{cor_2}). 

\subsection{An Algorithm for Computing $\alpha$-Approximate Equilibria}
For computing an $\alpha$-approximate PNE, we use the following basic approach. 
Starting with an empty game, we add the players one after another to the game, where a newly added player is always placed on a best response. If the addition of a new player makes some of the earlier added players ``unhappy'', meaning that they now have an $\alpha$-improving move, we let unhappy players deviate one after another to a best response until all players are happy again. Only then we add the next player, etc. 

Note that this approach has been used before to compute \emph{exact} PNE, for example for player-specific costs or weighted congestion games on matroids~\citep{Milchtaich96,Ackermann09}. However, to the best of our knowledge, it has not been utilized in the context of approximate PNE. We believe that this technique will be useful for showing existence and computing approximate equilibria beyond the class of games that we analyze here. 

For the formal description of our algorithm see Algorithm~\ref{alg:approxnash}. We assume that the resource set $R=\{r_1,\ldots,r_m\}$ is ordered such that $a_{1}\leq \cdots \leq a_{m}$ (where $a_j:=a_{r_j}$).
Since we need to consider strategy profiles for subsets of the players, let us extend the notion of a strategy profile to the set of vectors $x\in (R\cup\{0\})^n$, where $x_i=0$ means that player~$i$ has not yet been added to the game. Given such a vector $x\neq 0$, the cost $c_{x_i}(x)$ incurred to player~$i$ with $x_i\neq 0$ is defined as the cost which is experienced by her in the game where only the players~$j$ with $x_j\neq 0$ are present. 
Now define, for a strategy profile $x\in (R\cup\{0\})^n$, the following set of players~$W_{\alpha}(x)$ who are ``unhappy'' with their strategy, meaning that they have an $\alpha$-improving move: 
\begin{align*}
W_{\alpha}(x):=\{&i \in N: x_i\neq 0 \text{ and }\
c_{x_i}(x) > \alpha \cdot c_r(x_{-i},r) \\ &\text{for a resource } r \in R\}.
\end{align*}
In each iteration of the for-loop in line~\ref{for-loop} of Algorithm~\ref{alg:approxnash}, a new player~$k$ is added to the game. We place~$k$ on a best response (with respect to the strategies of the players $\{1,\ldots,k-1\}$ already added to the game). If there is more than one best response, we choose the one with smallest index, see lines~\ref{newplayer_br1} and~\ref{newplayer_br2}. 
After having added player~$k$, it may be the case that some players are not happy with their strategy, that is, $W_{\alpha}(x)\neq \emptyset$, where $x$ denotes the current strategy profile. 
In the while-loop starting in line~\ref{while}, we iteratively choose a player who is ``maximally unhappy'', meaning that she experiences maximum cost among all unhappy players, and let her deviate to a best response, until all players in~$\{1,\ldots,k\}$ are happy. If there is more than one resource with maximum cost among the resources used by unhappy players, we choose a player on the maximum cost resource with largest index, and if there is more than one best response, we choose the one with smallest index (see lines~\ref{while1}, \ref{while2} and~\ref{while3}). 
After this, we return to line~\ref{for-loop} where the next player~$k+1$ is added to the game. 

It is clear that if Algorithm~\ref{alg:approxnash} terminates, the computed strategy profile is an $\alpha$-PNE. In the next subsection, we show that Algorithm~\ref{alg:approxnash} terminates for $\alpha=K$. 
Note that the special choices made by the algorithm in case of non-unique best responses or non-unique most expensive resources, together with the fact that resources are ordered such that $a_{1} \leq a_{2} \leq \cdots \leq a_{m}$, ensure that the loads are always decreasing along the resources, that is, $\ell_{r_1}(x)  \geq \ell_{r_2}(x)  \geq \cdots \geq \ell_{r_m}(x)$ always holds during the algorithm. 

Appendix~\ref{subsec:example} contains an illustrating example for the application of Algorithm~\ref{alg:approxnash}.

\subsection{Termination of Algorithm~\ref{alg:approxnash} for $\alpha=K$}
In this subsection, we show that Algorithm~\ref{alg:approxnash} terminates for $\alpha=K$, and thus computes a $K$-approximate PNE. We also show that the running time of Algorithm~\ref{alg:approxnash} can be bounded by $\min\{O(n^2m),O(nm^2)\}$.

We need to prove that the while-loop terminates in each iteration~$k\in \{1,\ldots,n\}$ of the for-loop. 
For $k=1$, the while-loop obviously terminates since we only have one player, and she is placed on the best response $r_1$. Therefore, $W_{K}(x)=W_{K}(r_1,0,\ldots,0)=\emptyset$ and the while-loop terminates without changing anything. 
For $k=2$, the second player is placed on $r_1$ if $2a_1 + B \leq a_2 + B/2$, and is placed on $r_2$ otherwise. It is easy to see that in both cases, both players are happy and the while-loop immediately terminates. 

For iteration~$k\geq 3$, it may be the case that some players change their strategy during the while-loop. 
We show inductively that the while-loop also terminates in iteration~$k\geq 3$. Note that since the while-loop terminated in iteration~$k-1$, the players in $\{1,\ldots,k-1\}$ were happy before the next player~$k$ was added to the game. That is, with respect to the profile $x$ directly before player~$k$ is added, no player $i \in \{1, \ldots, k-1\}$ has a $K$-improving deviation, i.e., an alternative resource~$r$ with $c_{x_i}(x)>K\cdot c_r(x_{-i},r)$. 
But it may be the case that some players are unhappy after the addition of player~$k$. Additionally, the deviation of one of these players may cause further players to be unhappy. For the termination of the while-loop, we need to show that after finitely many deviations, all players in~$\{1,\ldots,k\}$ are happy again. We thus need to keep track of the set of unhappy players during the course of the while-loop. 
To this end, we derive necessary properties for players who either become unhappy due to the addition of player~$k$, or due to the subsequent deviation of some other player during the while-loop (see Lemma~\ref{lemma_cond1} and Lemma~\ref{lemma_cond2} in the appendix). 
The derived properties are mostly in terms of loads, e.g., conditions on the load of an unhappy player~$i$'s current strategy $x_i$, and on the load of a corresponding $K$-improving deviation~$r$. 
Using these two lemmas, as well as some further structural insights (see Appendix~\ref{subsec:lemmas}), we then proceed by a careful case distinction regarding the sequence of deviating players in iteration~$k$, and show that this sequence terminates in all cases. 
Let us now briefly sketch the mentioned case distinction (the complete proof of Theorem~\ref{theo_symm_and_linear_1} can be found in Appendix~\ref{appendix_maintheo}).  

Let $x$ and $x'$ denote the profiles directly before and after the new player~$k$ is added. 
If all players are happy with their strategy in $x'$, the statement follows; thus assume that player~$i$ changes from $x_i'=x_i$ to $r$ in the while-loop. 
Using Lemma~\ref{lemma_cond1}, we know that there are three possible cases regarding the load of player~$k$'s current strategy $x_k'$. Namely, the load $\ell_{x_k'}(x)$ of $x_k'$ (\emph{before} player~$k$ is added) needs to be in $\{M-2,M-1,M\}$, where $M:=M(x)$ denotes the maximum load with respect to $x$. 
We then analyze all three cases. 
As it turns out, the cases $\ell_{x_k'}(x)=M-2$ and $\ell_{x_k'}(x)=M-1$ are very simple, whereas $\ell_{x_k'}(x)=M$ is more complicated and requires further subcases. However, by repeated use of the lemmas contained in Appendix~\ref{subsec:lemmas}, in particular Lemma~\ref{lemma_cond2}, we can show termination of the while-loop also for this case.  

\begin{figure}[h]
\centering
\scalebox{0.8}{
\begin{tikzpicture}[decoration=brace]
\draw  (1, 0) rectangle (1.5, 1);
\draw[blue,dashed]  (1, 1) rectangle (1.5, 1.5);
\draw[red] (1.05, 1.05) rectangle (1.45, 1.45);
\node[below] at (1.25, -0.2) {$r_1$};
\draw  (3, 0) rectangle (3.5, 0.5);
\draw[red,dashed]  (3, 0.5) rectangle (3.5, 1);
\draw[green!50!black] (3.05, 0.55) rectangle (3.45, 0.95);
\node[below] at (3.25, -0.2) {$r_2$};
\draw[green!50!black,dashed]  (5, 0) rectangle (5.5, 0.5);
\draw[blue] (5.05, 0.05) rectangle (5.45, 0.45);
\node[below] at (5.25, -0.2) {$r_3$};
\path[->, >=stealth, blue] 
(1.25,1.5) edge[bend left=50]
node[above]{$2.$} (5.25,0.55);
\path[->, >=stealth, green!50!black] 
(5.25,0.5) edge[bend right=20]
node[above]{$1.$} (3.45,0.75);
\path[->, >=stealth, red]
(3.23,1) edge[bend right=20]
node[above]{$3.$} (1.45,1.25);
\end{tikzpicture}
}
\caption{A cycling sequence of deviations which might occur during the while-loop of Algorithm~\ref{alg:approxnash}: First, some player moves from $r_3$ to $r_2$, then another player changes her strategy from $r_1$ to $r_3$, and finally, a player using $r_2$ moves to~$r_1$.}
\label{fig_forK}
\end{figure}
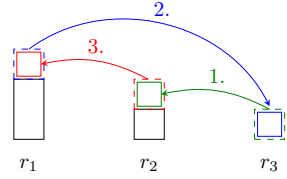

At the end of this proof sketch, we want to briefly indicate the role of the constant $K$. To this end, consider Figure~\ref{fig_forK} which shows a simplified version of a subcase occurring in the proof. In particular, the displayed sequence of deviations might occur during the while-loop of the algorithm, and this implies that the following three inequalities need to hold: 
\begin{align*}
a_3&>\alpha \cdot 2a_2 \\
3a_1+B&>\alpha \cdot  a_3\\
2a_2\cdot+B/2&>\alpha\cdot (3a_1+B) 
\end{align*}
From this, one can derive $(1 + \alpha^2/2 - \alpha^3)B > (\alpha^3 - 1) 3a_1$, which yields a contradiction for $\alpha = K$ and any $a_1 > 0$, as the left-hand side is equal to $0$, whereas the right-hand side is non-negative. Thus, cycles of this form cannot occur during the algorithm.

For a complete proof of Theorem~\ref{theo_symm_and_linear_1}, see Appendix~\ref{appendix_maintheo}.

\begin{theorem}\label{theo_symm_and_linear_1}
For $\alpha=K$, where $K=1/6 \cdot (1+\sqrt[3]{109-6\sqrt{330}}+\sqrt[3]{109+6\sqrt{330}})\approx 1.1974$ is the unique solution of the equation $-x^3+x^2/2+1=0$, Algorithm~\ref{alg:approxnash} computes a $K$-approximate PNE. 
\end{theorem}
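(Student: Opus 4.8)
The plan is to prove that the while-loop terminates in every iteration $k$ of the for-loop, by induction on $k$; correctness then follows for free, since the algorithm only leaves the while-loop when $W_K(x)=\emptyset$, which is precisely the definition of a $K$-approximate PNE on the players added so far, and in particular on all $n$ players after iteration $k=n$. The base cases $k\in\{1,2\}$ are immediate: one player is placed on her unique best response $r_1$ and is happy; for two players a one-line check of whether $2a_1+B\le a_2+B/2$ shows that in either placement both players are happy. So all the work is in the inductive step $k\ge 3$, where I may assume (by induction, since the while-loop terminated in iteration $k-1$) that every player in $\{1,\dots,k-1\}$ is happy in the profile $x$ just before player $k$ is added.

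For the inductive step I would first record the structural invariant that the algorithm's tie-breaking rules, together with the ordering $a_1\le\dots\le a_m$, keep the loads sorted, $\ell_{r_1}(x)\ge\dots\ge\ell_{r_m}(x)$, throughout the run. Next I would isolate, as Lemma~\ref{lemma_cond1} and Lemma~\ref{lemma_cond2}, necessary conditions on any player who becomes unhappy. A player $i$ can become unhappy for only two reasons: the addition of player $k$, or a preceding deviation inside the while-loop. In the first case, using that $i$ was happy under $x$ and that the addition of $k$ only increased $\ell_{x'_k}$ by one, one deduces that $i$'s unhappiness must be caused by the load on $x'_k$ crossing or tying a threshold, which forces $\ell_{x'_k}(x)\in\{M-2,M-1,M\}$ with $M:=M(x)$, together with matching bounds on the load of $i$'s improving target. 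In the second case one extracts the analogous local constraints relating the loads before and after the previous deviation. These load-level conditions are what make the bookkeeping of the unhappy set tractable.

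Armed with these, I would run the case distinction on $\ell_{x'_k}(x)$. The cases $\ell_{x'_k}(x)=M-2$ and $\ell_{x'_k}(x)=M-1$ are easy: the addition of $k$ can create only very limited unhappiness, so only a short, bounded chain of deviations is triggered before the profile stabilizes. The delicate case is $\ell_{x'_k}(x)=M$, where the load landscape near the top is flat enough that a deviation can cascade. Here I would follow the sequence of deviating players while tracking the loads of their source and target resources, and argue that no deviation can ever reproduce a resource-and-load configuration seen earlier; equivalently, the ``cyclic'' patterns of deviations — such as the three-step $r_3\!\to\!r_2$, $r_1\!\to\!r_3$, $r_2\!\to\!r_1$ pattern of Figure~\ref{fig_forK} — are infeasible. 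This is exactly where the constant $K$ enters: the $K$-improving inequalities such a cycle would require combine to $(1+\alpha^2/2-\alpha^3)B>(\alpha^3-1)\,3a_1$, whose left-hand side is $0$ at $\alpha=K$ (by the defining cubic $-x^3+x^2/2+1=0$) while the right-hand side is nonnegative — a contradiction, ruling out the cycle. Since only finitely many configurations are reachable and none is revisitable, the while-loop terminates.

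Finally, for the running time I would note that each for-loop iteration triggers only $O(m)$ deviations (alternatively $O(n)$), each requiring $O(m)$ work to locate the maximally unhappy player and her best response, giving the stated $\min\{O(n^2m),O(nm^2)\}$ bound. I expect the case $\ell_{x'_k}(x)=M$ to be the main obstacle: controlling the possible cascade requires assembling several of the local load constraints from Lemma~\ref{lemma_cond1} and Lemma~\ref{lemma_cond2} through a somewhat intricate subcase analysis of which configuration can follow which, and it is only once enough of these constraints are stacked that the cubic inequality in $K$ can be invoked to close the argument.
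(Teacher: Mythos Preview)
Your proposal is essentially the paper's own proof: induction on $k$ with trivial base cases, the sorted-loads invariant, the two lemmas classifying how a player can become unhappy (after the insertion of $k$ or after a prior deviation), the three-way case split on $\ell_{x'_k}(x)\in\{M-2,M-1,M\}$ with the first two cases short, and the constant $K$ entering exactly through the contradiction $(1+\alpha^2/2-\alpha^3)B>(\alpha^3-1)\,3a_1$ that kills the three-step cycle.

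One small point of friction: in the hard case $\ell_{x'_k}(x)=M$ you phrase termination as ``no configuration is revisited, and there are finitely many configurations''. The paper does not argue global non-revisitability; instead it walks through the concrete subcases of Figure~\ref{fig_tree} (do the players on $r_1$ ever move; if so, to a resource of load $M-1$ or to the last vacated $r^t$; what can follow) and in each branch exhibits an explicit $O(m)$ bound on the number of moves. The cycle contradiction is used only at one specific node ($\bar r^t\to r_1$), not as a blanket ``no repeats'' principle. Since you already anticipate an ``intricate subcase analysis of which configuration can follow which'', this is only a matter of making that analysis concrete rather than leaning on a non-revisitability claim you have not established.
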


The proof of Theorem~\ref{theo_symm_and_linear_1} also yields the following upper bound on the running time of Algorithm~\ref{alg:approxnash}.
\begin{corollary}
For $\alpha=K$, the running time of Algorithm~\ref{alg:approxnash} can be bounded by $\min\{O(n^2m),O(nm^2)\}$.
\end{corollary}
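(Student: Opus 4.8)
The plan is to extract the running-time bound directly from the structure of the termination proof of Theorem~\ref{theo_symm_and_linear_1}. The algorithm consists of an outer for-loop with $n$ iterations, and inside iteration $k$ a while-loop that processes the unhappy players after player~$k$ is added. Since the cost of running a single for-loop iteration (finding a best response, checking $W_\alpha(x)$) is clearly polynomial, the only thing to control is the number of while-loop iterations, i.e.\ the number of deviations performed in iteration~$k$. Thus I would first argue that each while-loop iteration costs $O(m)$ elementary operations: computing $c_r(x_{-i},r)$ for all $r\in R$ to find the best response in line~\ref{while3}, scanning $R_\alpha(x)$ in lines~\ref{while1}--\ref{while2}, and re-evaluating $W_\alpha(x)$ (which, using the load-monotonicity invariant $\ell_{r_1}\geq\cdots\geq\ell_{r_m}$, can be done by checking only a bounded set of candidate target resources per player, or in the crude bound $O(nm)$, which still fits). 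The remaining task is to bound the number of deviations per for-loop iteration.

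The key claim feeding both bounds is that in iteration~$k$, the number of deviations in the while-loop is $O(\min\{k,m\})$, in fact $O(1)$ times either. I would obtain this from the case analysis already carried out in Appendix~\ref{appendix_maintheo}: the proof that the while-loop terminates proceeds by showing that only a bounded-length sequence of deviations is possible (the cases $\ell_{x_k'}(x)\in\{M-2,M-1\}$ are immediate, and the case $\ell_{x_k'}(x)=M$ with its subcases produces only a constant number of further deviations before all players are happy). So each for-loop iteration performs $O(1)$ while-iterations. However, to get the \emph{symmetric} bound I would also exhibit the two trivial global bounds: the total number of deviations across the whole algorithm is at most $O(nm)$ (each while-iteration costs $O(m)$, and there are $O(n)$ of them if each for-loop iteration has $O(1)$ deviations), giving $O(nm)\cdot O(m)=O(nm^2)$, while alternatively bounding the work of re-evaluating $W_\alpha$ and best responses by $O(nm)$ and multiplying by the $O(n)$ for-loop iterations (each with $O(1)$ deviations) gives $O(n^2m)$. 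Taking the minimum yields $\min\{O(n^2m),O(nm^2)\}$.

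Concretely, I would structure the proof as: (i) recall the invariant that loads are non-increasing in the resource order, which is guaranteed by the tie-breaking rules, so that best responses and the sets $M^{-1}(x)$, $R_\alpha(x)$ are easy to maintain; (ii) cite from the proof of Theorem~\ref{theo_symm_and_linear_1} that the while-loop in iteration~$k$ terminates after $O(1)$ deviations; (iii) bound the per-deviation work by $O(m)$ for recomputing a best response and by $O(nm)$ (or $O(m)$ with a smarter data structure) for recomputing $W_\alpha(x)$; (iv) add an alternative accounting where the per-for-loop-iteration work is $O(nm)$; (v) multiply out and take the minimum. The main obstacle is step~(ii): making precise that the appendix case analysis not only proves termination but actually yields a \emph{constant} (or at worst $O(\min\{k,m\})$) bound on the number of deviations, which requires checking that none of the subcases allows the deviation sequence to revisit configurations or to grow with the instance size. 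Once that is in hand, everything else is bookkeeping; I do not expect any new inequality or use of the constant $K$ beyond what Theorem~\ref{theo_symm_and_linear_1} already provides.
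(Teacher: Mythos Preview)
Your central step~(ii) is wrong: the termination proof in Appendix~\ref{appendix_maintheo} does \emph{not} show that the while-loop performs only $O(1)$ deviations. In the case $\ell_{x_k'}(x)=M-1$ the proof explicitly says ``the while-loop terminates after $O(m)$ iterations,'' and in the case $\ell_{x_k'}(x)=M$ several subcases also end with ``this terminates after $O(m)$ iterations,'' while the final subcase allows a chain of $t$ preliminary deviations followed by up to $t-3$ further ones, where $t$ can be as large as $\Theta(m)$ (it counts distinct resources $\bar r^1,\ldots,\bar r^t$). So the per-iteration bound extracted from the case analysis is $O(m)$, not $O(1)$, and with only this you get $O(nm)$ while-iterations in total, hence $O(nm^2)$---but not $O(n^2m)$.

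The paper obtains the second bound differently: it observes from the case analysis that in iteration~$k$ \emph{no player moves more than twice}, hence at most $2k$ deviations occur. Summing $2k$ over $k=1,\ldots,n$ gives $O(n^2)$ while-iterations; together with $O(m)$ work per while-iteration this yields $O(n^2m)$. You should replace your $O(1)$ claim by these two genuine bounds ($O(m)$ and $2k$), and then your accounting becomes straightforward: $\min\{O(nm),O(n^2)\}$ while-iterations times $O(m)$ per iteration gives the stated $\min\{O(nm^2),O(n^2m)\}$. As a side remark, your bookkeeping in the paragraph deriving the two bounds is internally inconsistent (you first assert $O(n)$ while-iterations, then multiply ``$O(nm)\cdot O(m)$''); once the correct per-iteration bounds are in place this confusion disappears.
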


\begin{proof}
First note that each iteration of the while-loop can be implemented in $O(m)$ (note that although the cost of deviating to a resource $r$ is in general player-specific, since it depends on the load of the deviating player's current resource, it can in fact only be different for two players if one of these players is using a resource with maximum load, and the other not, cf. Lemma~\ref{lemma_deviation}). 
Furthermore, in the $k$th iteration of the for-loop, we can bound the number of iterations of the while-loop either by $O(m)$, or alternatively by $2k$ since no player moves more than twice, see the proof of Theorem~\ref{theo_symm_and_linear_1} in Appendix~\ref{appendix_maintheo}. 
Since there are $n$ iterations of the for-loop, we get $\min\{O(n^2),O(nm)\}$ as an upper bound for the total number of iterations of the while-loop and this yields the given bound on the total running time. 
\end{proof}

\begin{table*}[t]%
\begin{center}
\begin{tabular}{cp{6cm}c}
\toprule
load profile & $\alpha$-improving deviation (of some player \newline using $r$ to $r'$, notation $r\rightarrow r'$) & conditions on $\alpha$ \\ \midrule
$(5,0,0)$ &  $r_1 \rightarrow r_2$ & $5a_1+B>\alpha a_2 \ \Leftrightarrow \ \alpha<1/a_2=\frac{2}{K-1/2}\approx 2.86$ \\
$(4,1,0)$ & $r_1 \rightarrow r_2$ & $4a_1+B>\alpha 2a_2 \ \Leftrightarrow \ \alpha<1/(2a_2)=\frac{1}{K-1/2}\approx 1.43$ \\
$(3,2,0)$ & $r_1 \rightarrow r_3$ & $3a_1+B>\alpha a_3 \ \Leftrightarrow \ \alpha<1/a_3=K$ \\
$(3,1,1)$ & $r_3 \rightarrow r_2$ & $a_3>\alpha 2a_2 \ \Leftrightarrow \ \alpha<\frac{1}{K(K-1/2)}=K$ \\
$(2,2,1)$ & $r_2 \rightarrow r_1$ & $2a_2+B/2>\alpha (3a_1+B) \ \Leftrightarrow \ \alpha<2a_2+1/2=K$ \\\bottomrule
\end{tabular}
\caption{$\alpha$-improving deviations for the candidate profiles.}
\label{table_no_alpha_approx}
\end{center}
\end{table*}

The proof of Theorem~\ref{theo_symm_and_linear_1} also reveals that $n\geq 5$ and \mbox{$m\geq 3$} need to hold for any instance which has no exact PNE. 
This follows from the fact that the case displayed in Figure~\ref{fig_forK} essentially is the only situation where the while-loop might not terminate for $\alpha=1$, 
and this case requires at least five players and at least three resources (for a complete proof, see Appendix~\ref{appendix_maintheo}).
Thus we get the following corollary.

\begin{corollary}\label{cor_2}
Exact PNE are guaranteed to exist if $n\leq 4$ or $m\leq 2$. 
\end{corollary}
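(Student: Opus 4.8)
The plan is to derive this as a direct consequence of the termination argument for Algorithm~\ref{alg:approxnash} specialized to $\alpha = 1$. Recall that Algorithm~\ref{alg:approxnash} with input $\alpha$ computes an $\alpha$-PNE provided the while-loop terminates in every iteration $k$ of the for-loop; for $k \le 2$ this was already checked unconditionally in the excerpt, and for $k \ge 3$ the proof of Theorem~\ref{theo_symm_and_linear_1} proceeds by a case distinction on $\ell_{x_k'}(x) \in \{M-2, M-1, M\}$. The key observation is that for every branch of that case distinction \emph{except} the one depicted in Figure~\ref{fig_forK}, the termination argument goes through for all $\alpha \ge 1$, in particular for $\alpha = 1$; only the cyclic subcase of Figure~\ref{fig_forK} needs $\alpha = K$ strictly above $1$ to reach a contradiction (the inequality $(1 + \alpha^2/2 - \alpha^3)B > (\alpha^3-1)\,3a_1$ becomes $B > 0$ on the left and $0$ on the right when $\alpha = 1$, so it is satisfiable and no contradiction is obtained). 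Hence, the only obstruction to running Algorithm~\ref{alg:approxnash} with $\alpha = 1$ and thereby producing an exact PNE is the possible occurrence of a deviation cycle of the Figure~\ref{fig_forK} type.

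The second step is to argue that such a cycle structurally requires $n \ge 5$ and $m \ge 3$. The cycle in Figure~\ref{fig_forK} involves three \emph{distinct} resources $r_1, r_2, r_3$ (a player moves $r_3 \to r_2$, then $r_1 \to r_3$, then $r_2 \to r_1$), so at least three resources must be present, giving $m \ge 3$. For the player count: the governing inequalities in the excerpt are $a_3 > \alpha\cdot 2a_2$, $3a_1 + B > \alpha\cdot a_3$, and $2a_2 + B/2 > \alpha\cdot(3a_1+B)$. The loads appearing here — $\ell_{r_1} = 3$ at the moment of the move $r_1 \to r_3$ (so that the deviator's resource had load $3$), load $2$ on $r_2$ when $r_2 \to r_1$ occurs, and load $1$ on $r_3$ when $r_3 \to r_2$ occurs — together with the monotonicity $\ell_{r_1} \ge \ell_{r_2} \ge \ell_{r_3}$ maintained by the algorithm, force the total number of players along the relevant profiles to be at least $3 + 1 + 1 = 5$; more carefully, one checks (this is the content worked out in Appendix~\ref{appendix_maintheo}) that a configuration realizing all three inequalities simultaneously cannot be attained with fewer than five players, because shrinking any load weakens one of the inequalities past the point of satisfiability when $\alpha=1$. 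Thus $n \ge 5$.

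Combining the two steps: if $n \le 4$ or $m \le 2$, the Figure~\ref{fig_forK} cycle cannot occur, so every branch of the case distinction in the proof of Theorem~\ref{theo_symm_and_linear_1} yields termination already for $\alpha = 1$; hence Algorithm~\ref{alg:approxnash} run with $\alpha = 1$ terminates and, as noted in the excerpt, any strategy profile it returns upon termination is an exact PNE. Therefore an exact PNE exists in these cases, proving the corollary.

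The main obstacle I anticipate is making the claim ``the Figure~\ref{fig_forK} cycle needs at least five players'' fully rigorous: one must verify that in \emph{every} subcase of $\ell_{x_k'}(x) = M$ where a potential cycle arises, the relevant loads are pinned down (via Lemma~\ref{lemma_cond1}, Lemma~\ref{lemma_cond2}, and the auxiliary structural lemmas of Appendix~\ref{subsec:lemmas}) to exactly the values $3$, $2$, $1$ on the three resources — or to values that still sum to at least $5$ — rather than something smaller that might slip under the bound. This is the same bookkeeping over load configurations that drives the proof of Theorem~\ref{theo_symm_and_linear_1}, so the honest route is simply to point to that proof: the corollary is not a new argument but a reading-off of the $\alpha = 1$ specialization, restricted to the parameter regimes in which the single non-terminating branch is structurally impossible.
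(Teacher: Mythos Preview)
Your proposal is correct and follows essentially the same approach as the paper: the proof of Theorem~\ref{theo_symm_and_linear_1} goes through for $\alpha=1$ in every branch except the cyclic subcase (Figure~\ref{fig_4} in the appendix, of which Figure~\ref{fig_forK} is the simplified sketch), and that subcase requires three distinct resources and at least five players. The paper's counting is slightly crisper than yours---it observes that in the profile $x'$ the resource $r_1$ carries at least $M+1\ge 3$ players while $\bar r^t$ and $r^t$ each carry at least one---but the substance is the same.
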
  

\subsection{Tightness of $\alpha=K$}
\label{sec:tightness}
In this subsection, we provide an instance where no $\alpha$-approximate PNE with $\alpha<K$ exists, showing that $\alpha=K$ is the smallest possible value such that the existence of an $\alpha$-approximate equilibrium can be guaranteed.
\begin{theorem}\label{theo_symm_and_linear_2}
There exists an instance with three resources and five players 
such that there is no $\alpha$-approximate PNE for any $\alpha < K$, where $K \approx 1.1974$ is as specified in \eqref{def:K}. 
\end{theorem}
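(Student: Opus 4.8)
The plan is to exhibit one explicit instance with $n=5$ players and $m=3$ resources and to show that, for every $\alpha<K$, no strategy profile of this instance is an $\alpha$-approximate PNE, i.e.\ every profile contains a player with an $\alpha$-improving deviation. By the scale-invariance of the game I would fix $B=1$ and set $a_{r_1}=0$, so that only $a_{r_2}$ and $a_{r_3}$ remain to be chosen. As in Example~\ref{example_noPNE}, symmetry of the players together with the ordering $a_{r_1}\le a_{r_2}\le a_{r_3}$ lets me restrict attention to the five ``monotone'' load vectors $(5,0,0),(4,1,0),(3,2,0),(3,1,1),(2,2,1)$: the remaining (non-monotone) load vectors each admit a large --- indeed, because $a_{r_1}=0$, often unbounded --- improving move obtained by relocating a player to a strictly less loaded and no more expensive resource, and a short case check disposes of them.

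The heart of the construction is choosing $a_{r_2},a_{r_3}$ so that the three ``bottleneck'' profiles $(3,2,0)$, $(3,1,1)$, $(2,2,1)$ are \emph{maximally} unstable, meaning a suitable single deviation in each has cost-improvement ratio exactly $K$. In $(3,2,0)$ a player moving $r_1\to r_3$ has ratio $(3a_{r_1}+B)/a_{r_3}=1/a_{r_3}$; in $(2,2,1)$ a player moving $r_2\to r_1$ has ratio $(2a_{r_2}+B/2)/(3a_{r_1}+B)=2a_{r_2}+\tfrac12$; in $(3,1,1)$ a player moving $r_3\to r_2$ has ratio $a_{r_3}/(2a_{r_2})$. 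Setting the first two ratios equal to $K$ forces $a_{r_3}=1/K$ and $a_{r_2}=(K-\tfrac12)/2$; substituting into the third, the requirement $a_{r_3}/(2a_{r_2})=K$ becomes $\tfrac{1/K}{K-1/2}=K$, that is $-K^3+K^2/2+1=0$, which is exactly the cubic defining $K$ in \eqref{def:K}. Thus the three equalities are simultaneously solvable precisely for this value of $K$, and since $K>\tfrac12$ one gets $0=a_{r_1}<a_{r_2}<a_{r_3}$, so the instance is legitimate.

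It then remains to carry out the finite verification recorded in Table~\ref{table_no_alpha_approx}: for each of the five monotone load vectors, name a player and a deviating resource, compute her cost before and after, and check the ratio is $\ge K$. For $(3,2,0),(3,1,1),(2,2,1)$ it equals $K$ by the choice of parameters; for $(5,0,0)$ and $(4,1,0)$ it works out (using $K-\tfrac12=1/K^2$) to $1/a_{r_2}=2K^2$ and $1/(2a_{r_2})=K^2$, both strictly larger than $K$. Hence every monotone profile, and (by the case check above) every non-monotone profile, contains a player whose deviation improves her cost by a factor exceeding $\alpha$ whenever $\alpha<K$, so the instance admits no $\alpha$-approximate PNE for any $\alpha<K$.

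As for the main difficulty: there is no conceptually hard step, but the instance in the second paragraph has to be \emph{found}, not merely checked --- one must correctly guess which deviations to make binding, and only upon writing down the resulting equations does the cubic $-x^3+x^2/2+1=0$, and with it the exact constant $K$ matching the upper bound of Theorem~\ref{theo_symm_and_linear_1}, appear. The one recurring technical trap in the verification is that the adversary's attack term $\kappa^*$ is piecewise and jumps whenever the set of maximum-load resources changes upon a deviation; each cost computation must therefore track, both before and after the move, which resources are attacked and how the budget splits among them.
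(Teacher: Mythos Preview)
Your proposal is correct and follows essentially the same approach as the paper: the instance (with $B=1$, $a_{r_1}=0$, $a_{r_2}=(K-\tfrac12)/2$, $a_{r_3}=1/K$) and the five deviations you identify are exactly those of the paper's Table~\ref{table_no_alpha_approx}. Your derivation of the parameters by forcing the three bottleneck ratios to equal $K$ is a nice addition that the paper does not make explicit, and your simplification of the $(5,0,0)$ and $(4,1,0)$ ratios to $2K^2$ and $K^2$ via the identity $K-\tfrac12=1/K^2$ is equivalent to the paper's expressions $2/(K-\tfrac12)$ and $1/(K-\tfrac12)$; the only cosmetic difference is that the paper dispatches the non-monotone profiles by invoking Lemma~\ref{lemma_decreasingload} rather than a direct case check.
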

\begin{proof}
Consider the instance with $m=3$ resources, $n=5$ players, budget $B=1$, and resource cost coefficients 
\begin{align*}
a_1&:=a_{r_1}=0, \ a_2:=a_{r_2}=K/2-1/4 \approx 0.3487, \text{ and }\\ a_3&:=a_{r_3}=1/K \approx 0.8351.
\end{align*}
We proceed to show that there is no $\alpha$-approximate PNE for $\alpha<K$. 
To this end, note that it suffices to show that there is no $\alpha$-approximate PNE among the five load profiles $(5,0,0), (4,1,0), (3,2,0), (3,1,1)$, $(2,2,1)$ (since $a_1<a_2<a_3$; if there exists an $\alpha$-approximate PNE, there is also one with corresponding load profile among the five listed load profiles, see Lemma~\ref{lemma_decreasingload}). 
Let $\alpha<K$. We show that for any of the five load profiles, there exists an $\alpha$-improving deviation, showing the claim. To this end, consider Table~\ref{table_no_alpha_approx}, where we provide a deviation for each candidate profile which is $\alpha$-improving if the given conditions on $\alpha$ are satisfied. It is easy to check that these conditions are indeed fulfilled for $\alpha<K$.
\end{proof}

\section{Computing Optimal Approximate Equilibria}\label{sec:optimal}
In the last section, we showed that $\alpha=K$ is the smallest possible value for $\alpha$ such that the existence of an $\alpha$-approximate PNE can be \emph{guaranteed} for \emph{any} instance of a multi-leader congestion game with an adversary. However, there are clearly instances where $\alpha$-PNE with $\alpha<K$ exist (in particular, all instances exhibiting an exact PNE). We show in this section how to compute efficiently a best approximate PNE \emph{for a given instance}, that is, with smallest possible~$\alpha$ such that an $\alpha$-PNE exists for \emph{this} instance. 

To this end, consider a multi-leader congestion game with an adversary with resource set $R=[m]$ and $a_1\leq \cdots \leq a_m$. 
We can restrict our attention to strategy profiles~$x$ with decreasing loads, that is, with $\ell_1(x)\geq \cdots \geq \ell_m(x)$, since if an $\alpha$-PNE exists, there also exists one with decreasing loads (see Lemma~\ref{lemma_decreasingload} and note that we can assume $\alpha\in [1,K]$ since we want to find the smallest possible $\alpha$). 
Thus let $x$ be a strategy profile with $\ell_1(x)\geq \cdots \geq \ell_m(x)$. 
Note that, clearly, $M(x)=\max\{\ell_r(x): r\in R\}\geq \lceil \frac{n}{m} \rceil$ holds. Furthermore, if $M(x)m=n$ (which is equivalent to $\ell_r(x)=M(x)$ for all $r\in R$), $x$ is an $\alpha$-PNE if and only if $a_m\cdot M(x)+B/m\leq \alpha(a_1\cdot(M(x)+1)+B)$ holds. 
Thus we can assume in the following that there are resources with load $<M(x)$. 
We denote by $k=k(x)<m$ the largest resource having maximum load $M=M(x)$. Similarly, $k'=k'(x)$ denotes the smallest resource with load strictly smaller than $M-1$, and $k''=k''(x)$ denotes the smallest resource with load strictly smaller than $M-2$. 
In other words, $\ell_r(x)=M$ for all $r\in \{1,\ldots,k\}$, $\ell_r(x) = M-1$ for all $r\in \{k+1,\ldots,k'-1\}$, $\ell_r(x) = M-2$ for all $r\in \{k',\ldots,k''-1\}$, and $\ell_r(x)\leq M-3$ for all $r\in \{k'',\ldots,m\}$, see Figure~\ref{fig_5} for illustration. Note that $k'=k+1$ or $k'=k''$ are possible, in which case there are no resources with load $M-1$ or $M-2$, respectively.
\begin{figure*}[t]
\centering\scalebox{0.85}{
\begin{tikzpicture}[decoration=brace]
\node[below] at (-0.25, -0.2) {$R=[m]$:};
\draw  (1, 0) rectangle (1.5, 3.5);
\draw  (1, 3.5) rectangle (1.5, 4);
\node[below] at (1.25, -0.2) {$1$};
\node[below] at (2.25, 2) {$\ldots$};
\node[below] at (2.25, -0.42) {$\ldots$};
\draw  (3, 0) rectangle (3.5, 3.5); 
\draw  (3, 3.5) rectangle (3.5, 4);
\node[below][blue] at (3.25, -0.2) {$k$};
 \draw[decorate,  decoration={brace, mirror}]  (1,-1) -- node[below=0.6ex]{load $M$}  (3.5,-1);
 \draw  (4, 0) rectangle (4.5, 3);  
\draw  (4, 3) rectangle (4.5, 3.5);
\node[below] at (4.25, -0.2) {$k+1$};
\node[below] at (5.25, 1.75) {$\ldots$};
\node[below] at (5.25, -0.42) {$\ldots$};
 \draw  (6, 0) rectangle (6.5, 3); 
\draw  (6, 3) rectangle (6.5, 3.5);
\node[below] at (6.25, -0.2) {$k'-1$};
 \draw[decorate,  decoration={brace, mirror}]  (4,-1) -- node[below=0.6ex]{load $M-1$}  (6.5,-1);
\draw  (7, 0) rectangle (7.5, 2.5); 
\draw  (7, 2.5) rectangle (7.5, 3);
\node[below][green!50!black] at (7.25, -0.2) {$k'$};
\node[below] at (8.25, 1.5) {$\ldots$};
\node[below] at (8.25, -0.42) {$\ldots$};
\draw  (9, 0) rectangle (9.5, 2.5); 
\draw  (9, 2.5) rectangle (9.5, 3);
\node[below] at (9.25, -0.2) {$k''-1$};
\draw[decorate,  decoration={brace, mirror}]  (7,-1) -- node[below=0.6ex]{load $M-2$}  (9.5,-1);
\draw  (10, 0) rectangle (10.5, 2.5); 
\node[below][magenta] at (10.25, -0.2) {$k''$};
\node[below] at (11.25, 1.25) {$\ddots$};
\node[below] at (11.25, -0.42) {$\ldots$};
\draw  (12, 0) rectangle (12.5, 1); 
\node[below] at (12.25, -0.2) {$m$};
\draw[decorate,  decoration={brace, mirror}]  (10,-1) -- node[below=0.6ex]{load $\leq M-3$}  (12.5,-1);
\end{tikzpicture}
}
\caption{Illustration for the definition of \textcolor{blue}{$k$}, \textcolor{green!50!black}{$k'$} and \textcolor{magenta}{$k''$}.}
\label{fig_5}
\end{figure*}
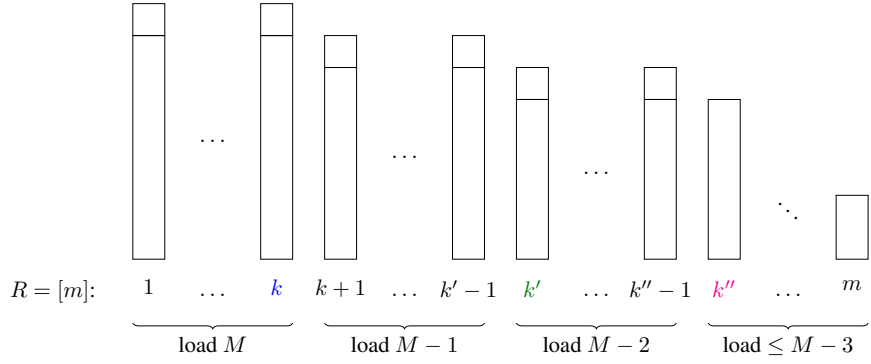

We now define the following values $\bar{c}_M=\bar{c}_M(x)$ and $\bar{c}_{< M}=\bar{c}_{< M}(x)$, which essentially describe the cost of a best alternative resource for a player using a resource with load $M$, and load smaller than $M$, respectively:
\[
\bar{c}_M=\begin{cases}  
   \min  \{a_{k+1}\cdot M+B/k \text{ if }k'\geq k+2, &\\
 \phantom{\min \{} a_{k'}\cdot (M-1)+B/k' \text{ if } k'<k'',&\\
\phantom{\min \{} a_r\cdot(\ell_r(x)+1)\ \forall \ r\geq k''\}, 
   &\textrm{if $k=1$,}\\

\\
\min \{a_{1}\cdot (M+1)+B, & \\
 \phantom{\min \{}a_{k+1}\cdot M+B/k \text{ if }k'\geq k+2, &\\
 \phantom{\min \{} a_{k'}\cdot (M-1) \text{ if } k'<k'', &\\
 \phantom{\min \{}a_r\cdot(\ell_r(x)+1) \ \forall \ r\geq k''\},  
&\textrm{if $k\geq 2$,}\\
\end{cases}
\]
and 
\begin{align*}
\bar{c}_{< M}=\min\{&a_{1}\cdot (M+1)+B, \\
&a_{k+1}\cdot M+B/(k+1) \text{ if }k'\geq k+2, \\
&a_{k'}\cdot (M-1) \text{ if } k'<k'', \\
&a_r\cdot(\ell_r(x)+1) \ \forall \ r\geq k''\}.
\end{align*}

Using these definitions, $x$ is an $\alpha$-approximate PNE iff
\begin{align}
    a_r\cdot\ell_r(x)\leq \alpha\cdot \bar{c}_{< M} \text{ for all } r>k, \text{ and }\label{eq_nash_1} \\ a_k\cdot M+B/k\leq\alpha\cdot  \bar{c}_M. \label{eq_nash_2}
\end{align}
Regarding this, note that there are some cases in which $\bar{c}_M$ or $\bar{c}_{< M}$ do not denote the cost of a best alternative for some players, but only provide a lower bound. 
This happens in two cases. 
Firstly, if $k\geq 2$ and the minimum in the definition of $\bar{c}_M$ is uniquely attained at $a_{1}\cdot (M+1)+B$, then $\bar{c}_M$ is in general not the cost of a best alternative for players currently using resource $r=1$. However, it is then also clear that these players are satisfied with their strategies anyways. 
Similarly, if the minimum in the definition of $\bar{c}_{<M}$ is uniquely attained for some resource $r>k$, then $\bar{c}_{<M}$ does not denote the cost of a best alternative for the players using $r$. But it is again clear that these players do not want to deviate. 

Also note that the minimum properties of  $\bar{c}_M$ and $\bar{c}_{<M}$, as well as the Nash condition~\eqref{eq_nash_1}, yield upper and lower bounds for the loads $\ell_r(x)$ of all resources $r\geq k''$ (for $r<k''$, the load $\ell_r(x)$ is uniquely determined by the definitions of $M,k,k'$ and $k''$).  
By using this, as well as the fact that for given values of $M,k,k'$ and $k''$, there are only polynomially many possible values for $\bar{c}_M$ and $\bar{c}_{<M}$, we can show the following result (for a complete proof, see Appendix~\ref{appendix_sec4}):

\begin{lemma}\label{lemma_existence}
Given $M\in \{\lceil \frac{n}{m} \rceil,\ldots,n\}$, $k\in \{1,\ldots,m-1\}$, $k'\in \{k+1,\ldots,m\}$ and $k''\in \{k',\ldots,m\}$, as well as $\alpha \in [1,K]$, 
we can decide efficiently if there exists an $\alpha$-approximate PNE $x$ such that $M(x)=M$, $k(x)=k$, $k'(x)=k'$ and $k''(x)=k''$ holds. 

In case of existence, we can furthermore compute a corresponding load vector efficiently.
\end{lemma}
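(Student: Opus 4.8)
The plan is to reduce the existence question, for each fixed choice of $M,k,k',k''$ and $\alpha$, to a polynomial-size enumeration followed by a linear feasibility test. First I would identify the actual degrees of freedom: for prescribed $M,k,k',k''$ the loads of all resources $r<k''$ are forced ($\ell_r=M$ for $r\le k$, $\ell_r=M-1$ for $k<r<k'$, and $\ell_r=M-2$ for $k'\le r<k''$), so the only unknowns are the loads $\ell_r$ for $r\in\{k'',\dots,m\}$, constrained by $0\le\ell_m\le\cdots\le\ell_{k''}\le M-3$ and $\sum_{r\ge k''}\ell_r = n-S$, where $S$ is the known total load on $\{1,\dots,k''-1\}$. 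The key observation is that each of $\bar c_M$ and $\bar c_{<M}$ is the minimum of a bounded number of terms that depend only on the fixed data $M,k,k',k''$ (such as $a_1(M+1)+B$, $a_{k+1}M+B/k$, $a_{k'}(M-1)$) together with the single ``load-dependent'' term $\min_{r\ge k''}a_r(\ell_r+1)$.

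The next step is to \emph{guess}, for each of $\bar c_M$ and $\bar c_{<M}$, which term attains the minimum, and, whenever this is the load-dependent term, to additionally guess the responsible resource $r^\ast\in\{k'',\dots,m\}$ together with the integer $j=\ell_{r^\ast}\in\{0,\dots,M-3\}$; there are $O(mn)$ options per quantity, hence polynomially many guesses overall. Having fixed a guess, $\bar c_M$ and $\bar c_{<M}$ become known constants, and I would then write down linear constraints on $\ell_{k''},\dots,\ell_m$. To certify that the guessed minima are correct, each fixed-data term must be at least the guessed value (constant-time checks) and $a_r(\ell_r+1)\ge\max\{\bar c_M,\bar c_{<M}\}$ for every $r\ge k''$ (a constant lower bound on each $\ell_r$), plus $\ell_{r^\ast}=j$ whenever the guess asserts the load-dependent term is attained at $r^\ast$. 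For the equilibrium property, condition \eqref{eq_nash_2} and the part of \eqref{eq_nash_1} concerning the fixed resources $k<r<k''$ are again constant-time checks, while the part of \eqref{eq_nash_1} for $r\ge k''$ yields an upper bound $\ell_r\le\lfloor\alpha\,\bar c_{<M}/a_r\rfloor$ (and $a_r=0$ makes the constraint vacuous). Finally one adds the structural constraints $0\le\ell_m\le\cdots\le\ell_{k''}\le M-3$ and $\sum_{r\ge k''}\ell_r=n-S$.

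What remains is a feasibility problem for a nonincreasing integer sequence with individual lower/upper bounds (whose endpoints can be taken monotone in $r$ since $a_1\le\cdots\le a_m$) and a prescribed total; this can be settled efficiently, e.g.\ by linear programming combined with a routine rounding/exchange argument, or by a direct greedy that computes the smallest and largest total load compatible with the box and monotonicity constraints and checks whether $n-S$ falls in between, constructing an explicit load vector when it does. If some guess produces a feasible vector, it is by construction the load vector of an $\alpha$-PNE with the prescribed parameters, and conversely any such $\alpha$-PNE induces a valid guess, so the procedure is correct; since there are polynomially many guesses and each is tested in polynomial time, it runs in polynomial time.

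I expect the main obstacle to be the bookkeeping forced by $\bar c_M$ and $\bar c_{<M}$ being \emph{minima}: one must impose all the ``$\le$'' inequalities and simultaneously pin down an attaining term while keeping the guess space polynomial, and, as already noted in the text, in the two exceptional cases (the minimum in $\bar c_M$ uniquely at $a_1(M+1)+B$ when $k\ge2$, or the minimum in $\bar c_{<M}$ uniquely at some $r>k$) these values are only \emph{lower} bounds on certain players' true best-alternative costs, so one must argue separately that those players are content --- which is immediate, since a valid lower bound still makes \eqref{eq_nash_1}--\eqref{eq_nash_2} sufficient for the $\alpha$-PNE property. The degenerate situations $a_r=0$ for some $r\ge k''$ (which forces $\bar c_M=\bar c_{<M}=0$) and the absent load levels $k'=k+1$ or $k'=k''$ require separate but routine treatment.
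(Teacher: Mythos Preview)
Your proposal is correct and follows essentially the same route as the paper: fix the loads on $\{1,\dots,k''-1\}$, enumerate the polynomially many possible values of $\bar c_M$ and $\bar c_{<M}$, translate the Nash conditions and the minimum properties into per-resource lower and upper bounds on $\ell_r$ for $r\ge k''$, and test feasibility of the resulting box-plus-sum system by a greedy argument. The one minor simplification in the paper's version is that it does not pin down an attaining term via an equality $\ell_{r^\ast}=j$; instead it only imposes the \emph{lower-bound} constraints $a_r(\ell_r+1)\ge\bar c_M$ and $a_r(\ell_r+1)\ge\bar c_{<M}$, so the output profile may have $\bar c_M(x)\ge\bar c_M$ and $\bar c_{<M}(x)\ge\bar c_{<M}$ rather than equality, which is still an $\alpha$-PNE by monotonicity of \eqref{eq_nash_1}--\eqref{eq_nash_2}. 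This saves a little bookkeeping but is otherwise the same idea.
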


Using Lemma~\ref{lemma_existence}, we get the main result of this section:
\begin{theorem}
We can efficiently compute the smallest possible $\alpha$ such that an $\alpha$-approximate PNE exists, as well as a corresponding load vector. 
\end{theorem}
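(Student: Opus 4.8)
The plan is to reduce the problem of finding the smallest $\alpha$ to a finite enumeration over the structural parameters $(M,k,k',k'')$ combined with the decision procedure from Lemma~\ref{lemma_existence}. First I would argue, via Lemma~\ref{lemma_decreasingload}, that it suffices to search among strategy profiles with decreasing loads $\ell_1(x)\geq\cdots\geq\ell_m(x)$: if any $\alpha$-approximate PNE exists, one with decreasing loads exists, so restricting the search space does not change the optimal $\alpha$. By Theorem~\ref{theo_symm_and_linear_1} we know the optimal $\alpha$ lies in $[1,K]$, so we only ever need to consider $\alpha\in[1,K]$; in particular the hypotheses of Lemma~\ref{lemma_existence} are always met.

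Next I would handle the two regimes separately. In the degenerate regime where all resources carry the maximum load (i.e.\ $Mm=n$), the profile is an $\alpha$-PNE exactly when $a_m M + B/m \leq \alpha\bigl(a_1(M+1)+B\bigr)$, which for each of the finitely many admissible values $M=n/m$ gives a closed-form threshold $\alpha_M := \bigl(a_m M + B/m\bigr)/\bigl(a_1(M+1)+B\bigr)$ (interpreting the quotient as $+\infty$ if the denominator is $0$, or handling $a_1=0$ and $B=0$ as a trivial subcase). In the main regime there exist resources with load below $M$, so the parameters $k<m$, $k'\in\{k+1,\dots,m\}$, $k''\in\{k',\dots,m\}$, and $M\in\{\lceil n/m\rceil,\dots,n\}$ are all well defined and range over polynomially many combinations. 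For each such tuple I would invoke Lemma~\ref{lemma_existence} to decide whether an $\alpha$-approximate PNE with exactly that structure exists, and moreover to extract the smallest $\alpha$ for which it does.

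The one subtlety is that Lemma~\ref{lemma_existence} as stated decides existence for a \emph{fixed} $\alpha$; I would promote this to computing the optimal $\alpha$ for a fixed tuple $(M,k,k',k'')$. The point is that, for fixed structural parameters, there are only polynomially many possible values of $\bar c_M$ and $\bar c_{<M}$ (each is a min over a constant-size list of affine expressions in the parameters), and the Nash conditions~\eqref{eq_nash_1}--\eqref{eq_nash_2} together with the minimum-defining inequalities for $\bar c_M,\bar c_{<M}$ impose only linear constraints on the load vector $(\ell_r(x))_{r\geq k''}$ and a single linear constraint relating $\alpha$ to those loads. Concretely, after branching over which term realizes each of the two minima, one is left with a linear program in the variables $(\ell_r)_{r\geq k''}$ (relaxed to reals, then I must argue integrality — e.g.\ via a rounding argument or by noting the feasible region, being defined by interval constraints on each coordinate plus a single summation constraint $\sum_r \ell_r = n - (\text{fixed total on } r<k'')$, is an integral polytope) and the objective is to minimize $\alpha$, which after the branching becomes a ratio of two affine functions and can be linearized. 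Taking the minimum of the resulting optimal values over all branches, all tuples $(M,k,k',k'')$, and all thresholds $\alpha_M$ from the degenerate regime, together with a check that this minimum does not exceed $K$ (which Theorem~\ref{theo_symm_and_linear_1} guarantees), yields the globally optimal $\alpha$ and a witnessing load vector.

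The main obstacle I anticipate is the bookkeeping inside the fixed-tuple subproblem: correctly enumerating which affine term attains each minimum in $\bar c_M$ and $\bar c_{<M}$, tracking the case distinctions $k=1$ versus $k\geq 2$ and $k'=k+1$ versus $k'\geq k+2$ versus $k'=k''$ (where the corresponding load bands are empty), and — most delicately — justifying that the load-vector LP can be solved integrally and that minimizing the fractional-linear objective in $\alpha$ is legitimate. Once the feasible region is seen to be a box intersected with a single hyperplane (hence integral), and once $\alpha$ is seen to enter only through one inequality of the form $\alpha\geq (\text{affine in loads})/(\text{affine in loads})$ with fixed sign of the denominator after branching, the optimization is routine; packaging all of this is exactly the content deferred to Appendix~\ref{appendix_sec4}, and the theorem follows by the polynomial-size outer enumeration.
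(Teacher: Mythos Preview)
Your overall strategy---enumerate the structural tuples $(M,k,k',k'')$ and combine with Lemma~\ref{lemma_existence}---matches the paper, but the inner step diverges. The paper does \emph{not} try to optimize $\alpha$ continuously for a fixed tuple. Instead it makes a single global observation: at the optimal $\alpha$ some Nash inequality must be tight, so $\alpha$ equals a ratio $(a_r\ell_r(x)+\kappa^*_r(x))/(a_{r'}\ell_{r'}(r',x_{-i})+\kappa^*_{r'}(r',x_{-i}))$; since numerator and denominator each take only $O(nm^2)$ values, there are only $O(n^2m^4)$ candidate values for $\alpha$. One then simply runs the decision procedure of Lemma~\ref{lemma_existence} for every tuple and every candidate $\alpha$, a total of $O(n^3m^7)$ calls. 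No LP, no integrality argument, no fractional-linear optimization.

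Your alternative route has two concrete gaps as written. First, $\alpha$ does not enter through ``one inequality'': the Nash condition~\eqref{eq_nash_1} gives one constraint $a_r\ell_r\leq \alpha\,\bar c_{<M}$ for \emph{each} $r>k$, plus~\eqref{eq_nash_2}, so after branching you are minimizing the maximum of several ratios, not a single fractional-linear objective. Second, the feasible region for the loads $(\ell_r)_{r\geq k''}$ is not merely a box intersected with a hyperplane: once you branch on which term attains the minimum in $\bar c_M$ and $\bar c_{<M}$, you must also impose the min-defining inequalities (e.g.\ $a_{r_0}(\ell_{r_0}+1)\leq a_r(\ell_r+1)$ for all competing $r$), which destroys the clean integrality argument you sketch. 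Both issues are dissolved by the paper's finite-candidate observation for $\alpha$, which lets you treat $\alpha$ as a fixed constant and invoke Lemma~\ref{lemma_existence} verbatim.
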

\begin{proof}
First note that if $x$ is an $\alpha$-PNE corresponding to the best possible $\alpha$, there need to be resources $r,r'\in R$ (as well as a player~$i$ with $x_i=r$) such that $a_r\ell_r(x)+\kappa^*_r(x)=\alpha\cdot(a_{r'}\ell_{r'}(r',x_{-i})+\kappa^*_{r'}(r',x_{-i}))$ (otherwise $\alpha$ cannot be smallest possible), or, equivalently, 
\[\alpha=\frac{a_r\ell_r(x)+\kappa^*_r(x)}{a_{r'}\ell_{r'}(r',x_{-i})+\kappa^*_{r'}(r',x_{-i})}.\] 
Since there are only $O(nm^2)$ many possible values for $a_r\ell_r(x)+\kappa^*_r(x)$, as well as for $a_{r'}\ell_{r'}(r',x_{-i})+\kappa^*_{r'}(r',x_{-i})$, there are only $O(n^2m^4)$ many possible values for $\alpha$. 

The result then follows by applying the underlying procedure of Lemma~\ref{lemma_existence} for all $M\in \{\lceil \frac{n}{m} \rceil,\ldots,n\}$, $k\in \{1,\ldots,m-1\}$, $k'\in \{k+1,\ldots,m\}$ and $k''\in \{k',\ldots,m\}$, as well as the $O(n^2m^4)$ many possible values for $\alpha$ (thus in total $O(n^3m^7)$ many times). 
\end{proof}

\section{Conclusion}
We introduced a multi-leader congestion game with an adversary which is motivated by security applications with congestion effects. 
Since PNE do not exist in general, we studied approximate equilibria. Our first main result shows that a $K$-approximate PNE always exists, where $K\approx1.1974$ is the unique solution of a cubic polynomial equation. To this end, we presented an efficient algorithm which computes a $K$-approximate PNE. Furthermore, we showed that the factor $K$ is tight by providing an instance where no $\alpha$-approximate PNE with $\alpha<K$ exists. 
However, for a \emph{specific} instance there might be a better $\alpha$-approximate PNE, i.e., with $\alpha<K$. We presented an efficient procedure that computes a \emph{best} approximate PNE of a given instance.

Our work also suggests several interesting directions for further research regarding multi-leader congestion games with an adversary. 
For example, one could analyze whether the results from Section~\ref{sec:unrestricted} continue to hold if one allows more general strategy spaces in the leaders' congestion game. A first natural generalization in this regard would be to consider asymmetric strategies, and/or bases of matroids. 
It would furthermore be interesting to analyze the quality of approximate PNE. 
For example, one could measure the social cost of a strategy profile by the total cost of all players, and then compare a (best or worse) approximate PNE to a social optimum. 

\clearpage

\paragraph{Acknowledgements}
We thank the anonymous referees for their comments that helped to improve the presentation of the paper. This work was supported by Deutsche Forschungsgemeinschaft (DFG -- German Research Foundation) under grants HA 8041/4-1, MA 8439/1-1, and under Germany's Excellence Strategy -- The Berlin Mathematics Research Center MATH+ (EXC-2046/1, project ID: 390685689).

\bibliography{master-bib.bib}

\begin{thebibliography}{27}
\providecommand{\natexlab}[1]{#1}

\bibitem[{Ackermann, R\"{o}glin, and V\"{o}cking(2009)}]{Ackermann09}
Ackermann, H.; R\"{o}glin, H.; and V\"{o}cking, B. 2009.
\newblock Pure {N}ash equilibria in player-specific and weighted congestion
  games.
\newblock \emph{Theoret. Comput. Sci.}, 410(17): 1552--1563.

\bibitem[{Babaioff, Kleinberg, and Papadimitriou(2009)}]{BabaioffKP09}
Babaioff, M.; Kleinberg, R.; and Papadimitriou, C.~H. 2009.
\newblock Congestion games with malicious players.
\newblock \emph{Games Econ. Behav.}, 67(1): 22--35.

\bibitem[{Beckmann, McGuire, and Winsten(1956)}]{Beckmann56}
Beckmann, M.; McGuire, C.; and Winsten, C. 1956.
\newblock \emph{Studies in the Economics and Transportation}.
\newblock New Haven, CT, USA: Yale University Press.

\bibitem[{Bil{\`{o}}, Moscardelli, and Vinci(2018)}]{BiloMV18}
Bil{\`{o}}, V.; Moscardelli, L.; and Vinci, C. 2018.
\newblock Uniform Mixed Equilibria in Network Congestion Games with Link
  Failures.
\newblock In Chatzigiannakis, I.; Kaklamanis, C.; Marx, D.; and Sannella, D.,
  eds., \emph{45th International Colloquium on Automata, Languages, and
  Programming, {ICALP} 2018, July 9-13, 2018, Prague, Czech Republic}, volume
  107 of \emph{LIPIcs}, 146:1--146:14. Schloss Dagstuhl - Leibniz-Zentrum
  f{\"{u}}r Informatik.

\bibitem[{Caragiannis et~al.(2011)Caragiannis, Fanelli, Gravin, and
  Skopalik}]{CaragiannisFGS11}
Caragiannis, I.; Fanelli, A.; Gravin, N.; and Skopalik, A. 2011.
\newblock Efficient Computation of Approximate Pure Nash Equilibria in
  Congestion Games.
\newblock In Ostrovsky, R., ed., \emph{{IEEE} 52nd Annual Symposium on
  Foundations of Computer Science, {FOCS} 2011, Palm Springs, CA, USA, October
  22-25, 2011}, 532--541. {IEEE} Computer Society.

\bibitem[{Caragiannis et~al.(2015)Caragiannis, Fanelli, Gravin, and
  Skopalik}]{CaragiannisFGS15}
Caragiannis, I.; Fanelli, A.; Gravin, N.; and Skopalik, A. 2015.
\newblock Approximate Pure Nash Equilibria in Weighted Congestion Games:
  Existence, Efficient Computation, and Structure.
\newblock \emph{{ACM} Trans. Economics and Comput.}, 3(1): 2:1--2:32.

\bibitem[{Castiglioni et~al.(2019)Castiglioni, Marchesi, Gatti, and
  Coniglio}]{CastiglioniMGC19}
Castiglioni, M.; Marchesi, A.; Gatti, N.; and Coniglio, S. 2019.
\newblock Leadership in singleton congestion games: What is hard and what is
  easy.
\newblock \emph{Artif. Intell.}, 277: 103177.

\bibitem[{Correa et~al.(2018)Correa, Guzm{\'{a}}n, Lianeas, Nikolova, and
  Schr{\"{o}}der}]{CorreaGLNS18}
Correa, J.~R.; Guzm{\'{a}}n, C.; Lianeas, T.; Nikolova, E.; and Schr{\"{o}}der,
  M. 2018.
\newblock Network Pricing: How to Induce Optimal Flows Under Strategic Link
  Operators.
\newblock In \emph{Proc. 19th ACM Conf. Electronic Commerce (EC)}, 375--392.

\bibitem[{Correa et~al.(2017)Correa, Harks, Kreuzen, and Matuschke}]{CHKM17}
Correa, J.~R.; Harks, T.; Kreuzen, V. J.~C.; and Matuschke, J. 2017.
\newblock Fare Evasion in Transit Networks.
\newblock \emph{Oper. Res.}, 65(1): 165--183.

\bibitem[{Gairing, Harks, and Klimm(2017)}]{GairingHK17}
Gairing, M.; Harks, T.; and Klimm, M. 2017.
\newblock Complexity and Approximation of the Continuous Network Design
  Problem.
\newblock \emph{SIAM J. Optim.}, 27(3): 1554--1582.

\bibitem[{Gan, Elkind, and Wooldridge(2018)}]{GanEW18}
Gan, J.; Elkind, E.; and Wooldridge, M.~J. 2018.
\newblock Stackelberg Security Games with Multiple Uncoordinated Defenders.
\newblock In Andr{\'{e}}, E.; Koenig, S.; Dastani, M.; and Sukthankar, G.,
  eds., \emph{Proceedings of the 17th International Conference on Autonomous
  Agents and MultiAgent Systems, {AAMAS} 2018, Stockholm, Sweden}, 703--711.
  International Foundation for Autonomous Agents and Multiagent Systems
  Richland, SC, {USA} / {ACM}.

\bibitem[{Harks and Schedel(2019)}]{HarksSchedel19}
Harks, T.; and Schedel, A. 2019.
\newblock Capacity and Price Competition in Markets with Congestion Effects.
\newblock In \emph{Proc. 15th Internat. Conference on Web and Internet
  Econom.}, 341.

\bibitem[{Harks, Schr{\"{o}}der, and Vermeulen(2019)}]{HarksSV19}
Harks, T.; Schr{\"{o}}der, M.; and Vermeulen, D. 2019.
\newblock Toll caps in privatized road networks.
\newblock \emph{Eur. J. Oper. Res.}, 276(3): 947 -- 956.

\bibitem[{Johari, Weintraub, and {Van Roy}(2010)}]{JohariWR10}
Johari, R.; Weintraub, G.~Y.; and {Van Roy}, B. 2010.
\newblock Investment and Market Structure in Industries with Congestion.
\newblock \emph{Oper. Res.}, 58(5): 1303--1317.

\bibitem[{Kiekintveld et~al.(2009)Kiekintveld, Jain, Tsai, Pita,
  Ord\'{o}\~{n}ez, and Tambe}]{Kiekintveld09}
Kiekintveld, C.; Jain, M.; Tsai, J.; Pita, J.; Ord\'{o}\~{n}ez, F.; and Tambe,
  M. 2009.
\newblock Computing Optimal Randomized Resource Allocations for Massive
  Security Games.
\newblock In \emph{Proceedings of The 8th International Conference on
  Autonomous Agents and Multiagent Systems - Volume 1}, AAMAS '09, 689–696.
  Richland, SC: International Foundation for Autonomous Agents and Multiagent
  Systems.

\bibitem[{{Kulkarni} and {Shanbhag}(2015)}]{Kulkarni15}
{Kulkarni}, A.~A.; and {Shanbhag}, U.~V. 2015.
\newblock An Existence Result for Hierarchical {S}tackelberg v/s {S}tackelberg
  Games.
\newblock \emph{IEEE Transactions on Automatic Control}, 60(12): 3379--3384.

\bibitem[{Labb{\'e}, Marcotte, and Savard(1998)}]{Labbe98}
Labb{\'e}, M.; Marcotte, P.; and Savard, G. 1998.
\newblock A Bilevel Model of Taxation and Its Application to Optimal Highway
  Pricing.
\newblock \emph{Management Science}, 44(12): 1608--1622.

\bibitem[{Leyffer and Munson(2010)}]{LeyfferM10}
Leyffer, S.; and Munson, T. 2010.
\newblock Solving multi-leader-common-follower games.
\newblock \emph{Optim. Methods Softw.}, 25(4): 601--623.

\bibitem[{Li et~al.(2017)Li, Jia, Tan, Wang, Han, and Lau}]{Yupeng17}
Li, Y.; Jia, Y.; Tan, H.; Wang, R.; Han, Z.; and Lau, F. C.~M. 2017.
\newblock Congestion Game With Agent and Resource Failures.
\newblock \emph{IEEE Journal on Selected Areas in Communications}, 35(3):
  764--778.

\bibitem[{Liu, Chen, and Huang(2011)}]{Liu2011}
Liu, T.-L.; Chen, J.; and Huang, H.-J. 2011.
\newblock Existence and efficiency of oligopoly equilibrium under toll and
  capacity competition.
\newblock \emph{Transportation Research Part E: Logistics and Transportation
  Review}, 47(6): 908 -- 919.

\bibitem[{Marchesi, Castiglioni, and Gatti(2019)}]{MarchesiC019}
Marchesi, A.; Castiglioni, M.; and Gatti, N. 2019.
\newblock Leadership in Congestion Games: Multiple User Classes and
  Non-Singleton Actions.
\newblock In Kraus, S., ed., \emph{Proc. 28th Internat. Joint Conf. Artif.
  Intell. (IJCAI)}, 485--491.

\bibitem[{Marcotte(1986)}]{Marcotte85mp}
Marcotte, P. 1986.
\newblock Network Design Problem with Congestion Effects: A Case of Bilevel
  Programming.
\newblock \emph{Math. Program., Ser. A}, 34: 142--162.

\bibitem[{Meir et~al.(2012)Meir, Tennenholtz, Bachrach, and Key}]{MeirTBK12}
Meir, R.; Tennenholtz, M.; Bachrach, Y.; and Key, P.~B. 2012.
\newblock Congestion Games with Agent Failures.
\newblock In Hoffmann, J.; and Selman, B., eds., \emph{Proceedings of the
  Twenty-Sixth {AAAI} Conference on Artificial Intelligence, July 22-26, 2012,
  Toronto, Ontario, Canada}. {AAAI} Press.

\bibitem[{Milchtaich(1996)}]{Milchtaich96}
Milchtaich, I. 1996.
\newblock Congestion Games with Player-Specific Payoff Functions.
\newblock \emph{Games Econom. Behav.}, 13(1): 111--124.

\bibitem[{Sinha et~al.(2018)Sinha, Fang, An, Kiekintveld, and
  Tambe}]{SinhaFAKT18}
Sinha, A.; Fang, F.; An, B.; Kiekintveld, C.; and Tambe, M. 2018.
\newblock Stackelberg Security Games: Looking Beyond a Decade of Success.
\newblock In Lang, J., ed., \emph{Proc. 27th Internat. Joint Conf. Artif.
  Intell. (IJCAI)}, 5494--5501.

\bibitem[{Wardrop(1952)}]{Wardrop52}
Wardrop, J. 1952.
\newblock Some theoretical aspects of road traffic research.
\newblock \emph{Proc. Inst. Civil Engineers}, 1(Part II): 325--378.

\bibitem[{Yang and Huang(2004)}]{Yang04}
Yang, H.; and Huang, H.-J. 2004.
\newblock The multi-class, multi-criteria traffic network equilibrium and
  systems optimum problem.
\newblock \emph{Transportation Res.}, 38(B): 1--15.

\end{thebibliography}

\clearpage

\appendix

\section{An Illustrating Example for Algorithm~\ref{alg:approxnash}}\label{subsec:example}
\begin{appexample}
Consider a game with seven players and five resources with $a_{1} = 1$, $a_2 = 4$, $a_3 = 4$, $a_4 = 10$, and $a_5 =10$. Let $B = 9$ and $\alpha = K \approx 1.1974$ as specified in \eqref{def:K}.
We now describe the steps performed by Algorithm~\ref{alg:approxnash}, see also Figure~\ref{fig_example}. 
Note that in the first six iterations of the for-loop, no players change during the while-loop. Only after player~$7$ is added, the players~$5$ and $6$ become unhappy. Since the players~$5$ and $6$ currently experience the same cost and player~$6$ uses the resource with a larger index, player~$6$ changes from $r_5$ to $r_2$ due to the tie-breaking rule of the algorithm. Note that the resources $r_2$ and $r_3$ both constitute a best response but $r_2$ has the smaller index. The deviation of player~$6$ causes the players on $r_1$ to be unhappy since they can improve their cost from $12$ to $10$ (more than a factor $K$) by deviating to $r_3$. Furthermore player~$5$ can improve her cost from $10$ to $8$. Since $12>10$ a player on $r_1$ changes due to the tie-breaking rule of the algorithm. The resulting strategy profile if one of these players, say player~$7$, deviates to $r_3$, is a $K$-approximate PNE since no player has a $K$-improving deviation anymore. (In fact, the profile even is a $12.5/12=25/24$-PNE.)  

Note that if some player now deviates from $r_2$ to $r_1$ (which is a best response), the resulting strategy profile is not a $K$-approximate PNE anymore---the situation is essentially as in Figure~\ref{fig_example}~(a). This also shows that, starting from a $K$-PNE, a sequence of best responses does not necessarily yield a $K$-PNE.
\begin{figure}[h]
\begin{center}
\subfloat[Situation after player~7 is added. In the while-loop, player~$6$ wants to deviate to $r_2$ since $10>K\cdot 8\approx 9.58$.]{
\begin{tikzpicture}[decoration=brace]
\draw  (1, 0) rectangle (1.5, 0.5);
\node at (1.25,0.25) {$1$};
\draw  (1, 0.5) rectangle (1.5, 1);
\node at (1.25,0.75) {$4$};
\draw  (1, 1) rectangle (1.5, 1.5);
\node at (1.25,1.25) {$7$};
\node[below] at (1.25, -0.2) {$r_1$};
\draw  (2, 0) rectangle (2.5, 0.5); 
\node at (2.25,0.25) {$2$};
\node[below] at (2.25, -0.2) {$r_2$};
\draw  (3, 0) rectangle (3.5, 0.5); 
\node at (3.25,0.25) {$3$};
\node[below] at (3.25, -0.2) {$r_3$};
\draw  (3, 0) rectangle (3.5, 0.5); 
\node at (4.25,0.25) {$5$};
\node[below] at (4.25, -0.2) {$r_4$};
\draw  (4, 0) rectangle (4.5, 0.5); 
\node at (5.25,0.25) {$6$};
\node[below] at (5.25, -0.2) {$r_5$};
\draw  (5, 0) rectangle (5.5, 0.5); 
\node at (0,0) {};
\node at (4,0) {};
\end{tikzpicture}
}
\hspace{1cm}
\subfloat[Situation after player~$6$ deviated to $r_2$. Player~$7$ now wants to deviate to $r_5$ since $12>K\cdot 10 \approx 11.97$.]{
\begin{tikzpicture}[decoration=brace]
\draw  (1, 0) rectangle (1.5, 0.5);
\node at (1.25,0.25) {$1$};
\draw  (1, 0.5) rectangle (1.5, 1);
\node at (1.25,0.75) {$4$};
\draw  (1, 1) rectangle (1.5, 1.5);
\node at (1.25,1.25) {$7$};
\node[below] at (1.25, -0.2) {$r_1$};
\draw  (2, 0) rectangle (2.5, 0.5); 
\draw  (2, 0.5) rectangle (2.5, 1);
\node at (2.25,0.25) {$2$};
\node at (2.25,0.75) {$6$};
\node[below] at (2.25, -0.2) {$r_2$};
\draw  (3, 0) rectangle (3.5, 0.5); 
\node at (3.25,0.25) {$3$};
\node[below] at (3.25, -0.2) {$r_3$};
\draw  (3, 0) rectangle (3.5, 0.5); 
\node at (4.25,0.25) {$5$};
\node[below] at (4.25, -0.2) {$r_4$};
\draw  (4, 0) rectangle (4.5, 0.5); 
\node[below] at (5.25, -0.2) {$r_5$};
\node at (0,0) {};
\node at (4,0) {};
\end{tikzpicture}
}
\hspace{1cm}
\subfloat[Situation after player~$7$ deviated to $r_5$. The while-loop terminates since no player has a $K$-improving deviation.]{
\begin{tikzpicture}[decoration=brace]
\draw  (1, 0) rectangle (1.5, 0.5);
\node at (1.25,0.25) {$1$};
\draw  (1, 0.5) rectangle (1.5, 1);
\node at (1.25,0.75) {$4$};
\node[below] at (1.25, -0.2) {$r_1$};
\draw  (2, 0) rectangle (2.5, 0.5); 
\draw  (2, 0.5) rectangle (2.5, 1);
\node at (2.25,0.25) {$2$};
\node at (2.25,0.75) {$6$};
\node[below] at (2.25, -0.2) {$r_2$};
\draw  (3, 0) rectangle (3.5, 0.5); 
\draw  (3, 0) rectangle (3.5, 0.5); 
\node at (3.25,0.25) {$3$};
\node[below] at (3.25, -0.2) {$r_3$};
\draw  (4, 0) rectangle (4.5, 0.5);
\node at (4.25,0.25) {$5$};
\node[below] at (4.25, -0.2) {$r_4$};
\draw  (5, 0) rectangle (5.5, 0.5);
\node at (5.25,0.25) {$7$};
\node[below] at (5.25, -0.2) {$r_5$};
\node at (0,0) {};
\node at (4,0) {};
\end{tikzpicture}
}
\end{center}
\caption{Example for Algorithm~\ref{alg:approxnash} with $n=7$ players, and five resources $r_1,\dots,r_5$. Player~$i\in [n]$ is represented by a square containing number~$i$.}%
\label{fig_example}%
\end{figure}
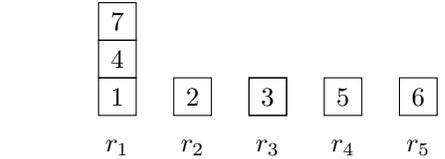
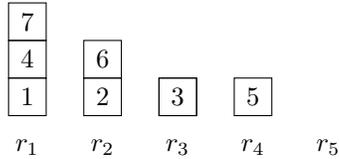
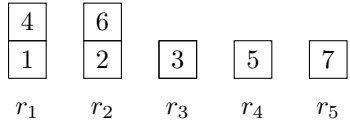
\end{appexample}

\section{Omitted Lemmas from Section~\ref{sec:unrestricted} }\label{subsec:lemmas}
This section contains various lemmas which are used to prove Theorem~\ref{theo_symm_and_linear_1}.
Before we state the Lemmas~\ref{lemma_cond1} and~\ref{lemma_cond2} which are the basis of the case distinction used in the proof of Theorem~\ref{theo_symm_and_linear_1}, we introduce the following notation, and provide some simple lemmas. 
\begin{notation}
Given a strategy profile~$x$, a player~$i$ with $x_i\neq 0$ and a resource~$r\neq x_i$, we write $\dev(i,r,x):=c_r(x_{-i},r)$ for the cost experienced by player~$i$ if she unilaterally deviates from her current strategy~$x_i$ to a different resource~$r$. We also call $\dev(i,r,x)$ the deviation cost of player~$i$ for $r$ (with respect to $x$). 
\end{notation}
\begin{applemma}\label{lemma_deviation}
Let $x$ be a strategy profile, $i$ and $j$ two players with strategies $x_i,x_j\in R$, and $r$ a resource different from $x_i$ and $x_j$. Furthermore assume that $\ell_{x_i}(x)\geq \ell_{x_j}(x)$, and define $M:=M(x)=\max\{\ell_{r'}(x): r'\in R\}$. 
Then: 
\begin{align*}
\dev(i,r,x) \neq \dev(j,r,x) \\ \Leftrightarrow \\  \text{ One of the following two cases holds:} \\
\ell_r(x)=M-1, \ell_{x_i}(x)=M,  \text{ and }\ell_{x_j}(x)<M, \ or \\ 
\ell_r(x)=M-2 \text{ and $x_i$ the only resource with load $M$ in $x$.}
\end{align*}
\end{applemma}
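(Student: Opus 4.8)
The plan is to analyze directly the quantities $\dev(i,r,x) = c_r(x_{-i},r)$ and $\dev(j,r,x) = c_r(x_{-j},r)$, and to observe that they can differ only through the term $\kappa^*_r$, since the congestion contribution $a_r \ell_r(x_{-i},r) = a_r(\ell_r(x)+1) = a_r\ell_r(x_{-j},r)$ is the same in both deviation profiles (player $i$ resp.\ $j$ leaves some other resource and joins $r$, so the load on $r$ becomes $\ell_r(x)+1$ in both cases). Hence $\dev(i,r,x) \ne \dev(j,r,x)$ if and only if $\kappa^*_r(x_{-i},r) \ne \kappa^*_r(x_{-j},r)$, and the whole lemma reduces to understanding when removing player $i$ from $x_i$ versus removing player $j$ from $x_j$ (and in both cases adding a player to $r$) yields a different attack value on resource $r$.

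Next I would make the following elementary observations about $\kappa^*_r$ of a profile. Writing $M' := M(x_{-i},r)$ for the new maximum load and similarly for $j$: the value $\kappa^*_r$ on resource $r$ is $B/|M^{-1}|$ if $r$ attains the new maximum load, and $0$ otherwise. After player $i$ moves to $r$, resource $r$ has load $\ell_r(x)+1$; all resources other than $x_i$ and $r$ keep load $\ell_{r'}(x)$; and $x_i$ drops to $\ell_{x_i}(x)-1$. I would then do a short case analysis on $\ell_r(x)$ relative to $M$:
\begin{itemize}
\item If $\ell_r(x) = M$: then after either deviation $r$ has load $M+1$, which is strictly larger than every other load (all other loads are $\le M$), so $r$ is the unique maximum-load resource and $\kappa^*_r = B$ in both cases — no difference.
\item If $\ell_r(x) = M-1$: then after the deviation $r$ has load $M$. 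Whether $r$ is a maximum-load resource and how many there are depends on whether the old maximum-load resources survive, i.e.\ on whether the player who left was on a resource of load $M$ and whether that was the only such resource. This is exactly where $i$ and $j$ can differ: if $\ell_{x_i}(x)=M$ and $\ell_{x_j}(x)<M$, then removing $i$ may reduce the number of maximum-load resources (possibly to zero other than $r$), while removing $j$ does not touch the load-$M$ resources, so the counts $|M^{-1}|$ differ, giving $\kappa^*_r(x_{-i},r) \ne \kappa^*_r(x_{-j},r)$.
\item If $\ell_r(x) = M-2$: then after the deviation $r$ has load $M-1 < M$, so $r$ is not attacked ($\kappa^*_r = 0$) \emph{unless} the move destroyed all load-$M$ and load-$(M-1)$ resources; since $\ell_{x_i}(x) \ge \ell_{x_j}(x)$ and we are told $\ell_r(x) = M-2 < \ell_{x_j}(x)$ would be needed for $j$'s move to matter... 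I would argue that a difference arises precisely when $x_i$ is the unique load-$M$ resource (so removing $i$ makes $r$, now at load $M-1$, tie for the new maximum $M-1$), whereas removing $j$ (which is at load $\le M-1$, and there is still another resource at load $M$... wait, no other resource at load $M$) cannot make $r$ maximal.
\item If $\ell_r(x) \le M-3$: after the deviation $r$ has load $\le M-2$, and at least two resources of load $\ge M-1$ remain after removing a single player, so $r$ is never attacked in either profile — no difference.
\end{itemize}
Carrying this out with care, matching up exactly with the two stated cases, gives the equivalence.

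The main obstacle I anticipate is the $\ell_r(x) = M-2$ case and, more generally, bookkeeping the cardinalities $|M^{-1}(x_{-i},r)|$ versus $|M^{-1}(x_{-j},r)|$ precisely enough to rule out ``accidental'' equalities of $\kappa^*_r$ (e.g.\ both being $B$, or both being $0$, even though the configurations differ) and to confirm that in the two listed cases the values genuinely differ rather than merely potentially differ. Here the hypothesis $\ell_{x_i}(x) \ge \ell_{x_j}(x)$ is essential: it is what lets me treat $i$ as ``the one who might be sitting on a maximum-load resource'' and forces the asymmetry between the two cases. I would also need to double-check the edge case $\ell_{x_i}(x) = \ell_{x_j}(x)$ (then by symmetry the two deviation costs must be equal, consistent with the claim, since neither of the two listed cases can hold — the first requires $\ell_{x_i}(x)=M>\ell_{x_j}(x)$, and the second requires $x_i$ to be the \emph{only} load-$M$ resource, which combined with $\ell_{x_j}(x)\le\ell_{x_i}(x)=M$ forces $\ell_{x_j}(x)<M$, hence $x_j\ne x_i$ has load $\le M-1$, again an asymmetry). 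Apart from these cardinality subtleties, the argument is a finite, mechanical case check once the reduction to ``$\kappa^*_r$ only'' is in place.
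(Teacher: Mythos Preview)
Your approach is correct and essentially the same as the paper's: both reduce to comparing the adversary term $\kappa^*_r$ (the congestion term $a_r(\ell_r(x)+1)$ is identical in the two deviations) and then do a finite case analysis; the only difference is organizational---you split on $\ell_r(x)\in\{M,\,M-1,\,M-2,\,\le M-3\}$, whereas the paper splits on $M_{i,r}:=M(x_{-i},r)\in\{M+1,\,M,\,M-1\}$, and the two decompositions line up case by case. Two small slips to tighten when you write it out: in the case $\ell_r(x)\le M-3$ only \emph{one} surviving resource of load $\ge M-1$ is guaranteed (not two), but one already forces $\kappa^*_r=0$; and in the case $\ell_r(x)=M-2$, $r$ becomes maximal precisely when the move eliminates the unique load-$M$ resource---load-$(M-1)$ resources need not be destroyed, since $r$ only has to \emph{tie} for the maximum, not be the sole maximizer.
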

\begin{proof}
If a player~$i$ with $\ell_{x_i}(x)=M$ changes from her current strategy $x_i$ to $r \neq x_i$ with $\ell_r(x)=M-1$, she has to pay $\dev(i,r,x)=a_r (\ell_r(x)+1)+\frac{B}{p}$, where $p:=|M^{-1}(x)|$ denotes the number of resources with load $M$ in $x$. If a player~$j$ with $\ell_{x_j}(x) < M$ changes from her current strategy $x_j$ to $r \neq x_j$ with $\ell_r(x)=M-1$, she has to pay $\dev(j,r,x)=a_r (\ell_r(x)+1)+\frac{B}{p+1}$. Thus, $\dev(i,r,x) \ne \dev(j,r,x)$. If $x_i$ is the only resource with load $M$ in $x$ and player $i$ changes from her current strategy $x_i$ to $r \neq x_i$ with $\ell_r(x)=M-2$, she has to pay $\dev(i,r,x)=a_r (\ell_r(x)+1)+\frac{B}{\tilde{p}+1}$, where $\tilde{p}$ is the number of resources with load $M-1$ in $x$. If a player~$j$ changes from her current strategy $x_j \neq x_i$ to $r \neq x_j$ with $\ell_r(x)=M-2$, she has to pay $\dev(j,r,x)=a_r (\ell_r(x)+1)$. Thus, $\dev(i,r,x) \ne \dev(j,r,x)$.

Now we will show that if $\dev(i,r,x) \neq \dev(j,r,x)$, then either
$\ell_r(x)=M-1$, $\ell_{x_i}(x)=M$ and $\ell_{x_j}(x)<M$ or 
$\ell_r(x)=M-2$ and $x_i$ is the only resource with load $M$ in $x$. We consider a deviation to $r \neq x_i,x_j$ for player~$i$ and $j$ in $x$, respectively. It is clear that the congestion part of the deviation cost of player~$i$ or $j$ for $r$, which is $a_r (\ell_r(x)+1)$, is the same. Since we assume $\dev(i,r,x) \neq \dev(j,r,x)$, the adversary part, which is $\kappa_r^*(x_{-i},r)$ and $\kappa_r^*(x_{-j},r)$, must be different. In particular, the adversary parts cannot be $0$ for both. We write $M_{i,r}:=\max\{\ell_{r'}(x_{-i},r):r' \in R\}$ for the maximum load after player~$i$ deviated from her current strategy $x_i$ to $r$. We now distinguish between the possible values for $M_{i,r} \in \{M-1,M,M+1\}$. 

Case $M_{i,r}=M+1$: This case implies $\ell_r(x)=M$ and $\ell_r(x_{-i},r)=\ell_r(x_{-j},r)=M+1$. Thus, both players have to pay the whole adversary budget $B$ after changing to $r$ in $x$. Remark that the other resources have load at most $M$ after player~$i$ or $j$ deviates. This contradicts our assumption that the adversary parts must be different for the two players.

Case $M_{i,r}=M-1$: This case implies that $x_i$ is the only resource with load $M$ in $x$ and $\ell_r(x) \leq M-2$. If player~$j$ changes from her current strategy $x_j \neq x_i$ to $r$ the adversary part is $0$ (the resource $x_i$ has still load $M$). Since the adversary part should be different after the deviations to the resource $r$, we conclude that the adversary part of the deviation cost of player~$i$ must be positive. Therefore, $\ell_r(x)+1=M-1$ which is equivalent to $\ell_r(x)=M-2$.

Case $M_{i,r}=M$: This case implies $\ell_r(x) \leq M-1$ (if $\ell_r(x)=M$ we would get $M_{i,r}=M+1$). If $\ell_r(x) \leq M-2$ and player~$i$ changes to $r$ the adversary part of player $i$'s  cost is $0$. Note that $x_i$ can not be the only resource with load $M$ which would imply $M_{i,r}=M-1$. Since the adversary part should be different after the deviations to the resource $r$, we conclude that the adversary part of the deviation cost of player~$j$ must be positive. Therefore, $x_j$ must be the only resource with load $M$ in $x$ and $\ell_r(x)=M-2$. Thus, $\ell_{x_j}(x) > \ell_{x_i}(x)$ which contradicts the assumption that $\ell_{x_j}(x) \leq \ell_{x_i}(x)$. Hence the case $\ell_r(x) \leq M-2$ can not occur and we know $\ell_r(x) = M-1$. This implies $\ell_r(x_{-i},r)=\ell_r(x_{-j},r)=M$ and both adversary parts of the deviation costs of player~$i$ and $j$ for the resource $r$ must be positive. Since the adversary part must be different the number of resources with maximal load must be different after the respective deviations to $r$. Let $p:=|M^{-1}(x)|$ be the number of resources with load $M$ in $x$, and $p_i$ and $p_j$ the number of resources with load $M$ in $(x_{-i},r)$ and $(x_{-j},r)$, respectively. As mentioned above the adversary part must be different, that is, $p_i \ne p_j$. If $\ell_{x_i}(x)=M$ we get $p_i=p$ (note that $\ell_r(x) = M-1$). If, additionally, $\ell_{x_j}(x)=M$, we get $p_j=p=p_i$ which is a contradiction to our assumption. Since $\ell_{x_j}(x) \leq \ell_{x_i}(x)$ we can conclude $\ell_{x_j}(x) \leq M-1$. Finally, we exclude the case $\ell_{x_i}(x) \leq M-1$. Since we assume $\ell_{x_j}(x) \leq \ell_{x_i}(x)\le M-1$, we get $p_i=p+1=p_j$ (note that $\ell_r(x) = M-1$), which is a contradiction to our assumption.
\end{proof}

\begin{applemma}\label{lemma_load-1}
Let $x$ be a strategy profile occurring in Algorithm~\ref{alg:approxnash}, $i\in N$ a player with strategy $x_i\in R$, and $r\in R$ a resource with load $\ell_r(x)= \ell_{x_i}(x)-1$. 
Then player~$i$ does not want to deviate to~$r$.  
\end{applemma}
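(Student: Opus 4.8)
The plan is to analyze the cost of player~$i$ at her current resource $x_i$ versus the deviation cost $\dev(i,r,x)$, exploiting the crucial structural invariant noted in the text: during Algorithm~\ref{alg:approxnash} the loads are non-increasing along the resource ordering, i.e., $\ell_{r_1}(x)\geq\cdots\geq\ell_{r_m}(x)$, and the resources are indexed so that $a_1\leq\cdots\leq a_m$. Since $r$ has load $\ell_r(x)=\ell_{x_i}(x)-1<\ell_{x_i}(x)$, the monotone-load invariant forces $r$ to have a larger index than $x_i$, hence $a_r\geq a_{x_i}$. This already controls the congestion part: after deviation, $r$ carries load $\ell_r(x)+1=\ell_{x_i}(x)$, so the congestion cost of player~$i$ on $r$ is $a_r\cdot\ell_{x_i}(x)\geq a_{x_i}\cdot\ell_{x_i}(x)$, which is exactly the congestion part of her current cost $c_{x_i}(x)$.

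Next I would handle the adversarial term $\kappa^*$. Let $M:=M(x)$ be the maximum load. There are two cases. If $\ell_{x_i}(x)<M$, then player~$i$ currently pays no adversarial cost, so $c_{x_i}(x)=a_{x_i}\ell_{x_i}(x)\leq a_r\ell_{x_i}(x)\leq \dev(i,r,x)$, and she does not want to deviate. If $\ell_{x_i}(x)=M$, then $r$ has load $M-1$; after player~$i$ moves to $r$, resource $r$ reaches load $M$, so $r$ becomes (one of) the maximum-load resource(s) in the new profile $(x_{-i},r)$, and player~$i$ pays a strictly positive adversarial term $B/|M^{-1}(x_{-i},r)|$ there. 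I would compare $|M^{-1}(x_{-i},r)|$ with $|M^{-1}(x)|$: removing player~$i$ from $x_i$ drops $x_i$ below load $M$ unless there were other players on $x_i$—but the key point is that the number of maximum-load resources after the move is at least as large as before when $x_i$ was unique, and in general one checks that $|M^{-1}(x_{-i},r)|=|M^{-1}(x)|$ when $\ell_{x_i}(x)=M$ and $x_i$ had load $M$ shared, or $|M^{-1}(x_{-i},r)|=|M^{-1}(x)|$ exactly (one max-load resource $x_i$ is possibly lost, one new max-load resource $r$ is gained). Hence the adversarial term player~$i$ pays on $r$ is at least $B/|M^{-1}(x)|=\kappa^*_{x_i}(x)$, the term she currently pays. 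Combining with the congestion comparison gives $\dev(i,r,x)\geq a_r\ell_{x_i}(x)+B/|M^{-1}(x_{-i},r)|\geq a_{x_i}\ell_{x_i}(x)+\kappa^*_{x_i}(x)=c_{x_i}(x)$, so again the deviation is not improving.

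The main obstacle I anticipate is the precise bookkeeping of $|M^{-1}(x_{-i},r)|$ versus $|M^{-1}(x)|$ in the subcase $\ell_{x_i}(x)=M$: one must verify that moving player~$i$ off a max-load resource and onto $r$ (which then attains load $M$) cannot \emph{decrease} the number of max-load resources, which would otherwise \emph{increase} the adversarial cost on $r$ and could even help. Concretely, if $x_i$ was the unique max-load resource and some player leaves it, $r$ becomes the new unique max-load resource and $|M^{-1}(x_{-i},r)|=1=|M^{-1}(x)|$; if $x_i$ was shared, then after the move both $x_i$ (still at load $M$) and $r$ are max-load, so the count weakly increases. Either way the count does not drop, so $\kappa^*$ on $r$ does not exceed $\kappa^*_{x_i}(x)$; I would spell this out carefully as the one nontrivial verification. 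Everything else reduces to the monotone-load invariant plus $a_r\geq a_{x_i}$.
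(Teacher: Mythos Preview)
Your approach is correct and matches the paper's: use the monotone-load invariant to get $a_r\geq a_{x_i}$, then compare congestion and adversary parts separately. The paper's proof is in fact a one-liner: it simply observes that ``the adversary part of the cost remains unchanged'' (since moving from a resource of load $\ell_{x_i}(x)$ to one that then also has load $\ell_{x_i}(x)$ swaps one resource for another at the same load level, so both the maximum load and the number of maximum-load resources are preserved), and then the congestion part can only increase because $a_r\geq a_{x_i}$.

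One slip in your bookkeeping is worth fixing: in the subcase where $\ell_{x_i}(x)=M$ and $x_i$ is shared, you write that after the move ``both $x_i$ (still at load $M$) and $r$ are max-load, so the count weakly increases.'' This is wrong---$x_i$ drops to load $M-1$---and, worse, if the count of max-load resources \emph{did} increase, the adversary term $B/|M^{-1}(x_{-i},r)|$ would \emph{decrease}, which would point the inequality the wrong way. The correct statement (which you also give earlier) is that one max-load resource is lost and one is gained, so $|M^{-1}(x_{-i},r)|=|M^{-1}(x)|$ exactly, and hence the adversary term is \emph{equal} to $\kappa^*_{x_i}(x)$ in both the unique and the shared case. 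That equality is precisely the paper's ``unchanged'' observation.
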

\begin{proof}
We show that a deviation to $r$ can only increase player~$i$'s cost. Clearly, the adversary part of the cost remains unchanged. Furthermore, the congestion part can only be larger, since $a_{x_i}\leq a_r$ holds: Recall that loads are always decreasing along the resources, that is, $\ell_{r_1}(x)  \geq \ell_{r_2}(x) \geq \cdots \geq \ell_{r_m}(x)$ holds, and furthermore $a_{r_1}\leq a_{r_2}\leq \cdots \leq a_{r_m}$. Altogether, the deviation is not beneficial for player~$i$. 
\end{proof}

\begin{applemma}\label{lemma_x_i_preference}
Let $x$ and $x'$ be two strategy profiles occurring in Algorithm~\ref{alg:approxnash}, where $x'$ results from $x$ by a deviation in the while-loop of player~$i$ from $x_i$ to $x_i'$. Then, if a player~$j \in N$ wants to deviate to a resource $r \in R$ with $\ell_{r}(x')=\ell_{x_i}(x')$, she prefers $x_i$. 
\end{applemma}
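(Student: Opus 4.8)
The plan is to prove the (slightly stronger) statement that the deviation cost of player~$j$ for $x_i$ does not exceed her deviation cost for $r$ in the profile $x'$, i.e.\ $\dev(j,x_i,x')\le \dev(j,r,x')$; this is what ``prefers $x_i$'' should mean, and in particular it implies that $x_i$ is a best response for $j$ whenever $r$ is. Write $L:=\ell_{x_i}(x')$. Since $x'$ is obtained from $x$ by player~$i$ leaving $x_i$, we have $\ell_{x_i}(x)=L+1$, the load of $x_i'$ increases by one, and all other loads agree in $x$ and $x'$; by assumption $\ell_r(x')=L$.

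The first step is to compare the cost coefficients of $x_i$ and $r$. If $r\ne x_i'$, then $\ell_r(x)=\ell_r(x')=L<L+1=\ell_{x_i}(x)$; if $r=x_i'$, then $\ell_{x_i'}(x)=\ell_{x_i'}(x')-1=L-1<L+1=\ell_{x_i}(x)$. In either case $r$ carries strictly smaller load than $x_i$ in $x$, so by the invariant that loads are non-increasing along the index-ordered resources (which holds for every profile occurring in the algorithm), $r$ has a larger index than $x_i$, and hence $a_{x_i}\le a_r$.

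The second step handles the adversary part. Since player~$j$ wants to deviate to $r$ she is not on $r$ in $x'$, and we may assume $x_j\ne x_i$ (the case $x_j=x_i$ being trivial, as then player~$j$ already uses $x_i$), so $\ell_{x_i}(x'_{-j})=\ell_r(x'_{-j})=L$. Hence the profiles $(x'_{-j},x_i)$ and $(x'_{-j},r)$ both arise from the common profile $x'_{-j}$ by raising the load of a resource currently at load $L$ by one, so they induce the same multiset of loads; by the definition of the adversary's strategy in~\eqref{eq:attack}, $\kappa^*$ on a resource depends only on this multiset together with that resource's own load, which equals $L+1$ for $x_i$ in $(x'_{-j},x_i)$ and for $r$ in $(x'_{-j},r)$. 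Therefore $\kappa^*_{x_i}(x'_{-j},x_i)=\kappa^*_r(x'_{-j},r)$, and combining this with $\ell_{x_i}(x')=\ell_r(x')=L$ and $a_{x_i}\le a_r$ from the first step yields $\dev(j,x_i,x')=a_{x_i}(L+1)+\kappa^*_{x_i}(x'_{-j},x_i)\le a_r(L+1)+\kappa^*_r(x'_{-j},r)=\dev(j,r,x')$. The only slightly delicate points are bookkeeping which resources change load when player~$i$ moves (handled by the case split $r=x_i'$ versus $r\ne x_i'$) and invoking the load-monotonicity invariant; once these are in place the adversary part requires no computation, since both candidate targets sit at the same load in $x'_{-j}$.
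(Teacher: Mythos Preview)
Your argument is correct and follows essentially the same approach as the paper: both compare the load of $x_i$ and $r$ in the pre-deviation profile $x$, use the invariant that loads are non-increasing along resource indices to conclude $a_{x_i}\le a_r$, and then observe that since $x_i$ and $r$ have equal load in $x'$ the deviation costs satisfy $\dev(j,x_i,x')\le\dev(j,r,x')$. Your treatment of the adversary part is in fact more explicit than the paper's, which simply asserts the inequality after establishing $a_{x_i}\le a_r$.

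One small point worth making explicit: the paper interprets ``prefers $x_i$'' to include the algorithm's tie-breaking rule (line~\ref{while3}), i.e.\ among resources of equal deviation cost the one with smallest index is chosen. You have already shown that $x_i$ has strictly smaller index than any other $r$ with $\ell_r(x')=L$, so this tie-breaking conclusion follows immediately from what you have; it just deserves a sentence, since this is precisely how the lemma is applied later (e.g.\ in justifying the ``$(x_i)$'' annotations in Table~\ref{table_new_possible_deviations}).
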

\begin{proof}
Note that since player~$i$ deviates from $x_i$, we know that $\ell_{x_i}(x)>\ell_{x_i}(x')=\ell_{r}(x') \geq \ell_{r}(x)$ for $r \ne x_i$. Thus, since the loads are decreasing along the resources, $a_{x_i}\leq a_{r}$ holds. Consequently, $\dev(j,x_i,x')\leq \dev(j,r,x')$ for $r \in R$ with $\ell_{x_i}(x')=\ell_{r}(x')$. Furthermore, the resource $x_i$ has a smaller index than the resource $r$ since the loads are decreasing. Thus, by construction, player~$j$ prefers the resource $x_i$.
\end{proof}

\begin{applemma}\label{lemma_equal_load}
Let $x$ and $x'$ be two strategy profiles occurring in Algorithm~\ref{alg:approxnash}, where $x'$ results from $x$ either due to the addition of some new player, or by a deviation of a player in the while-loop.  
Denote the new or deviating player by~$i$.    
Then any player~$j$ with $\ell_{x_j'}(x')=\ell_{x_i'}(x')$ is happy with respect to $x'$. 
\end{applemma}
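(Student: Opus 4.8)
The plan is to establish something slightly stronger than asked: every such player $j$ has \emph{no} improving deviation at all in $x'$, hence certainly none gaining a factor $\alpha$, for any $\alpha\ge 1$. The whole argument turns on one structural observation about where $i$ lands. Let $x$ be the profile immediately preceding $i$'s move and write $L:=\ell_{x_i'}(x')$. Since $i$ joins $x_i'$ in the step producing $x'$, we have $\ell_{x_i'}(x)=L-1$, whereas any resource $x_j'$ with $\ell_{x_j'}(x')=L$ had load $L$ in $x$ (or load $L+1$, if $i$ itself occupied $x_j'$ in $x$); in either case $\ell_{x_i'}(x)<\ell_{x_j'}(x)$. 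Because $x$ occurs in Algorithm~\ref{alg:approxnash}, its loads are non-increasing along the resource order, so $x_j'$ has strictly smaller index than $x_i'$, and therefore $a_{x_j'}\le a_{x_i'}$. As $x_i'$ and $x_j'$ carry the same load $L$ in $x'$, their adversary terms coincide ($\kappa^*_{x_i'}(x')=\kappa^*_{x_j'}(x')$, equal to $B/|M^{-1}(x')|$ if $L=M(x')$ and to $0$ otherwise), whence
\[
c_{x_j'}(x')=a_{x_j'}L+\kappa^*_{x_j'}(x')\le a_{x_i'}L+\kappa^*_{x_i'}(x')=c_{x_i'}(x').
\]

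Next I would use that $i$ sits on a best response in $x'$, i.e. $c_{x_i'}(x')\le\dev(i,r,x')$ for every resource $r$, and transfer this bound to $j$. For a deviation target $r\notin\{x_i',x_j'\}$: since $\ell_{x_i'}(x')=\ell_{x_j'}(x')$, neither exceptional configuration of Lemma~\ref{lemma_deviation} can occur (the first needs $\ell_{x_i'}(x')=M>\ell_{x_j'}(x')$, the second needs $x_i'$ to be the unique maximum-load resource while $x_j'$ also has load $\ell_{x_i'}(x')$), so $\dev(j,r,x')=\dev(i,r,x')$; combined with $c_{x_j'}(x')\le c_{x_i'}(x')\le\dev(i,r,x')$ this shows $j$ cannot profit from moving to $r$. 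For the one remaining target $r=x_i'$ (relevant only when $x_j'\neq x_i'$): after $j$ joins, $x_i'$ has load $L+1$, so $\dev(j,x_i',x')\ge a_{x_i'}(L+1)\ge a_{x_j'}(L+1)>a_{x_j'}L+\kappa^*_{x_j'}(x')=c_{x_j'}(x')$, using that $\kappa^*_{x_j'}(x')=0$ when $L<M(x')$, while when $L=M(x')$ the resource $x_i'$ becomes the unique maximum and contributes the full budget $B>\kappa^*_{x_j'}(x')$ to $\dev(j,x_i',x')$. The degenerate cases $j=i$ (then $j$ is on a best response by construction) and $x_j'=x_i'$ (then $c_{x_j'}(x')=c_{x_i'}(x')$ and, for any $r\neq x_i'$, the deviation costs of $i$ and $j$ from their common resource agree because the two resulting load profiles are identical) are immediate.

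The main obstacle is the structural observation of the first paragraph — pinning down that $x_i'$ has the \emph{largest} index among the resources of load $L$. It relies on the invariant that every profile occurring in Algorithm~\ref{alg:approxnash} has non-increasing loads along the resource order, and on careful bookkeeping of the loads of $x_i'$ and $x_j'$ in the profile \emph{before} $i$'s move, including the subcase in which $i$ itself vacated $x_j'$ (which still yields $\ell_{x_i'}(x)<\ell_{x_j'}(x)$). Once that is in place, everything reduces to Lemma~\ref{lemma_deviation} together with the best-response property of $i$, and the value of $K$ plays no role, so the conclusion in fact holds verbatim for the exact-equilibrium notion as well.
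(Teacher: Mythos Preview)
Your proof is correct and follows essentially the same route as the paper's: establish $a_{x_j'}\le a_{x_i'}$ (hence $c_{x_j'}(x')\le c_{x_i'}(x')$), use that $x_i'$ is a best response for $i$, invoke Lemma~\ref{lemma_deviation} to transfer the deviation cost from $i$ to $j$ for $r\notin\{x_i',x_j'\}$, and handle $r=x_i'$ directly. Your justification of $a_{x_j'}\le a_{x_i'}$ via the loads in the \emph{preceding} profile $x$ is in fact more explicit than the paper's one-line appeal to the decreasing-load invariant; and the ``stronger'' conclusion you flag (no improving deviation at all) is exactly what the paper's argument yields as well.
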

\begin{proof}
Note that since the loads are decreasing along the resources, $a_{x_j'}\leq a_{x_i'}$ holds. Therefore, if $x_j'\neq x_i'$, player~$j$ does not want to deviate to $x_i'$ (her cost can only increase).  
Now let $r\neq x_j'$ be some possible alternative for player~$j$. We show $c_{x_j'}(x')\leq \dev(j,r,x')$, that is, player~$j$ does not want to deviate to $r$. Due to the above, we may assume that $r\neq x_i'$. 
First note that $c_{x_j'}(x')\leq c_{x_i'}(x')$ holds since $a_{x_j'}\leq a_{x_i'}$. 
Secondly, $c_{x_i'}(x')\leq \dev(i,r,x')$ since $x_i'$ was a best response for player~$i$. 
Finally, $\dev(i,r,x')=\dev(j,r,x')$ holds since the deviation cost can only be different for two players if they have different load (see Lemma~\ref{lemma_deviation}). 
Altogether, we conclude $c_{x_j'}(x')\leq \dev(j,r,x')$ and thus player~$j$ does not want to deviate to $r$. 
\end{proof}

We can now state and prove the Lemmas~\ref{lemma_cond1} and~\ref{lemma_cond2} which are the basis of the case distinction used in the proof of Theorem~\ref{theo_symm_and_linear_1}. 
\begin{applemma}\label{lemma_cond1}
Assume that $i\in \{1,\ldots,k-1\}$ is a player who becomes unhappy due to the addition of the new player~$k$, that is, player~$i$ has a $K$-improving deviation $r$ with respect to the strategy profile $x':=(x_k',x_{-k})$ which results from $x$ due to the addition of the new player~$k$. 
Then, regarding the loads of the resources $x_k', x_i'=x_i$, and $r$, one of the cases shown in Table~\ref{table_after_k_is_placed}  needs to hold, where $M:=M(x)=\max\{\ell_{r'}(x): r'\in R\}$ denotes the maximum load with respect to $x$, that is, before player~$k$ is added. 
Also note that $x_i\neq x_k'$ holds since all players using $x_k'$ are happy with respect to $x'$. 
\end{applemma}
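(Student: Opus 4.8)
\emph{Step 1: $x_i\neq x_k'$ and monotonicity of player~$i$'s cost.} Since $x'$ arises from $x$ by adding player~$k$ on $x_k'$, Lemma~\ref{lemma_equal_load} applied to player~$k$ shows that every player~$j$ with $\ell_{x_j'}(x')=\ell_{x_k'}(x')$ is happy in $x'$; as player~$i$ is \emph{not} happy in $x'$, this forces $x_i\neq x_k'$, which is the last assertion of the lemma. Because $x_i\neq x_k'$ (and $x_i\neq r$ by definition of a deviation), the load $\ell_{x_i}(x')=\ell_{x_i}(x)$ is unchanged, so $c_{x_i}(x')$ and $c_{x_i}(x)$ differ only in the adversary term $\kappa^*_{x_i}$, and a short case check on $\ell_{x_k'}(x)$ versus $M:=M(x)$ shows that adding one player on $x_k'\neq x_i$ can only keep $\kappa^*_{x_i}$ the same or decrease it, so $c_{x_i}(x')\leq c_{x_i}(x)$. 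Since player~$i$ was happy in $x$, $c_{x_i}(x)\leq K\cdot c_r(x_{-i},r)$, and since she is unhappy in $x'$, $c_{x_i}(x')>K\cdot c_r(x_{-i}',r)$; combining these gives $c_r(x_{-i}',r)<c_r(x_{-i},r)$. Hence the \emph{only} way player~$i$ can become unhappy is that her deviation target~$r$ became strictly cheaper when player~$k$ was added.

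\emph{Step 2: why $r$ became cheaper.} Write $y:=(x_{-i},r)$ and $y':=(x_{-i}',r)$, so $y'$ is $y$ with player~$k$ placed on $x_k'$. First, $x_k'\neq r$: otherwise adding player~$k$ to $r$ would weakly increase the congestion term $a_r\ell_r(\cdot)$ and could only increase $\kappa^*_r$, contradicting $c_r(y')<c_r(y)$. Thus $\ell_r(y')=\ell_r(y)$ and the drop is entirely in the adversary term, $\kappa^*_r(y')<\kappa^*_r(y)$; in particular $\kappa^*_r(y)>0$, so $\ell_r(y)=M(y)$, and for the attack on $r$ to shrink upon adding player~$k$ on $x_k'\neq r$ we must have $\ell_{x_k'}(y)\in\{M(y)-1,M(y)\}$ (if $\ell_{x_k'}(y)<M(y)-1$ nothing changes, and $\ell_{x_k'}(y)>M(y)$ is impossible): either $x_k'$ joins the set of maximum-load resources, spreading the budget over one more resource, or $x_k'$ becomes the unique maximum and $r$ is no longer attacked.

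\emph{Step 3: translating to loads in~$x$.} Using $\ell_{x_i}(y)=\ell_{x_i}(x)-1$, $\ell_r(y)=\ell_r(x)+1$, $\ell_{x_k'}(y)=\ell_{x_k'}(x)$ (recall $x_k'\neq x_i,r$) together with the non-increasing loads along the resources in $x$, one reads off $M(y)$ from $M$: the requirement $\ell_r(y)=M(y)$ forces $\ell_r(x)\in\{M-2,M-1,M\}$, with $M(y)=M+1$ if $\ell_r(x)=M$, $M(y)=M$ if $\ell_r(x)=M-1$, and $M(y)=M-1$ only if $\ell_r(x)=M-2$ and $x_i$ is the unique maximum-load resource of $x$ (so $\ell_{x_i}(x)=M$). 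Substituting these into $\ell_{x_k'}(y)\in\{M(y)-1,M(y)\}$ and using $\ell_{x_k'}(x)\leq M$ leaves exactly $\ell_{x_k'}(x)\in\{M-2,M-1,M\}$ with the matching restrictions on $\ell_r(x)$ (and on $\ell_{x_i}(x)$ in the $\ell_r(x)=M-2$ branch), which yield the rows of Table~\ref{table_after_k_is_placed}.

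\emph{Main obstacle.} The two conceptual steps are short; the real work is the bookkeeping of Step~3, namely relating the maximum load $M(y)$ and the set of maximum-load resources of the deviation profile $y$ back to those of $x$ while correctly handling the degenerate cases (e.g.\ $x_i$ being the sole maximum-load resource of $x$, $r$ already sitting at or just below the maximum, $x_k'=r$, ties in the number of attacked resources) and confirming that the split $\ell_{x_k'}(x)\in\{M-2,M-1,M\}$ is genuinely exhaustive. This enumeration, together with a consistency check for each surviving combination, is what produces the table.
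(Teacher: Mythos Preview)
Your approach is correct and essentially the same as the paper's: show $c_{x_i}(x')\le c_{x_i}(x)$, deduce that the deviation cost to $r$ strictly dropped, rule out $r=x_k'$, and then read off the possible load configurations from the fact that the adversary share on $r$ must have been positive before and smaller afterwards. Your repackaging via the profiles $y=(x_{-i},r)$ and $y'$ is a clean way to phrase Step~2 and is equivalent to the paper's direct comparison of $\dev(i,r,x)$ and $\dev(i,r,x')$.

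One small gap: your Step~3 says the restrictions on $\ell_{x_i}(x)$ appear only ``in the $\ell_r(x)=M-2$ branch''. That is not quite right. Table~\ref{table_after_k_is_placed} also requires $\ell_{x_i}(x)\le M-1$ in both rows with $\ell_r(x')=M-1$. For the row $\ell_{x_k'}(x)=M-1$ this indeed follows from your Step~1 (if $\ell_{x_i}(x)=M$ then $\ell_{x_i}(x')=\ell_{x_k'}(x')$ and Lemma~\ref{lemma_equal_load} makes $i$ happy). But for the row $\ell_{x_k'}(x)=M$, $\ell_r(x')=M-1$, the loads $\ell_{x_i}(x')=M$ and $\ell_{x_k'}(x')=M+1$ differ, so Lemma~\ref{lemma_equal_load} does not help; here the paper invokes Lemma~\ref{lemma_load-1} (a player never wants to move to a resource whose load is exactly one less than her own), which immediately forces $\ell_{x_i}(x)\neq M$. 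You should add this ingredient to your consistency check; without it the enumeration does not reproduce that row of the table. Similarly, the exclusion of the combination $\ell_{x_k'}(x)=M-1$ with $\ell_r(x)=M-2$ (which your condition $\ell_{x_k'}(y)\in\{M(y)-1,M(y)\}$ would otherwise allow) relies on the full strength of Lemma~\ref{lemma_equal_load} from Step~1, not just on $x_i\neq x_k'$; make sure this is spelled out.
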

\begin{table*}[h]
\centering
\begin{tabular}{l c@{\hspace{0.1cm}} l c@{\hspace{0.1cm}} l c@{\hspace{0.1cm}} l}
\toprule
$\ell_{x_k'}(x)$ && $\ell_{x_i}(x')=\ell_{x_i}(x)$ && $\ell_r(x')$ && \text{further conditions} \\\midrule
$M-2$ && $M$&& $M-2$ && $x_i$ is the only resource with load $M$ in $x$ (also in $x'$) \\\hline\addlinespace[0.5em]
$M-1$ && $\leq M-1$ &&$M-1$  \\\hline\addlinespace[0.5em]
\multirow[t]{2}{*}{$M$} && $\leq M$ && $M$\\\addlinespace[0.25em]\cline{2-7}\addlinespace[0.25em]
&& $\leq M-1$&& $M-1$\\
\bottomrule 
\end{tabular}
\caption{Cases for the loads of the resources $x_k',x_i$ and $r$, where $r$ is a $K$-improving deviation for player~$i$ who becomes unhappy due to the addition of new player~$k$ on $x_k'$ (each row corresponds to one possible case).}
\label{table_after_k_is_placed}
\end{table*}
\begin{proof}
Let $i\in\{1,\ldots,k\}$ be a player who wants to change from her current strategy $x_i$ to a resource $r\neq x_i$.  

We first show that $c_{x_i}(x')\leq c_{x_i}(x)$, that is, the addition of player~$k$ to the game did not increase the cost experienced by player~$i$. 
Clearly, $\ell_{x_i}(x')=\ell_{x_i}(x)$, since the load on resources different from $x_k'$ has not changed. Therefore, the congestion-part of the cost experienced by player~$i$ has not changed. We now turn to the adversary part and show that it can only be smaller than before player~$k$ was added: Assume, by contradiction, that the adversary part has increased. In particular, the adversary part with respect to $x'$ has to be positive. This implies that $M:=M(x)=\max\{\ell_{r'}(x):r'\in R\}\geq \ell_{x_i}(x)=\ell_{x_i}(x')=\max\{\ell_{r'}(x'):r'\in R\}\geq M$ holds. Thus $x_i$ is a maximum load resource both before and after player~$k$ was added, and clearly the number of resources with load~$M$ can not decrease by the addition of player~$k$. This contradicts our assumption that the adversary part experienced by player~$i$ has increased, and altogether shows that the total cost of player~$i$ can only be smaller than before player~$k$ was added. 

Before player~$k$ was added, player~$i$ was happy with her strategy~$x_i$, thus she in particular did not want to change to resource~$r$ then. Furthermore, as we already showed, player~$i$'s own cost has not increased due to the addition of player~$k$. Since we assume that player~$i$ now wants to change to $r$, we conclude that the deviation cost of player~$i$ for $r$ has decreased, that is, $\dev(i,r,x')<\dev(i,r,x)$ holds. 
Clearly, the congestion part of the deviation cost did not decrease, since $\ell_r(x')\geq \ell_r(x)$ holds. Therefore, the adversary part needs to be smaller than before, and this is only possible if it was positive with respect to $x$. We thus get $\ell_r(x)\geq M-2$, where $\ell_r(x)=M-2$ is only possible if $x_i$ is the only resource with load $M$ in $x$. 
Furthermore, the adversary part of the deviation cost with respect to $x'$ needs to be smaller than $B$. Moreover $r=x_k'$ is not possible since the adversary part of the deviation cost needs to decrease. Therefore, we can assume in the following that $r\neq x_k'$ and consequently $\ell_r(x)=\ell_r(x')$.  
We now distinguish between the three possible values for $\ell_r(x)\in\{M-2,M-1,M\}$. 
If $\ell_r(x)=M$, then the adversary part of the deviation cost with respect to $x'$ can only be smaller than $B$ if $\ell_{x_k'}(x)=M$. 
If $\ell_r(x)=M-1$, we get that $\ell_{x_k'}(x)\in \{M-1,M\}$ needs to hold. 
Finally if $\ell_r(x)=M-2$ and $x_i$ is the only resource with load $M$ in $x$, we conclude $\ell_{x_k'}(x) =M-2$ (note that $x_i\neq x_k'$, and players having load $\ell_{x_k'}(x)+1$ are happy according to Lemma~\ref{lemma_equal_load}). 
 Using that $\ell_{r}(x)\neq \ell_{x_i}(x)-1$ (see Lemma~\ref{lemma_load-1}), this yields the conditions stated in Table~\ref{table_after_k_is_placed}, completing the proof.  
 \end{proof}

\begin{applemma}\label{lemma_cond2}
Assume that player~$i\in \{1,\ldots,k\}$ deviates in some iteration of the while-loop of Algorithm~\ref{alg:approxnash}, and denote the strategy profiles before and after this deviation by $x$ and $x':=(x_{-i},x_i')$, respectively. 
Assume further that player~$j\in \{1,\ldots,k\}\setminus\{i\}$ becomes unhappy due to player~$i$'s deviation, that is, $j$ was happy with respect to $x$, but she has a $K$-improving deviation $r \neq x_j'=x_j$ with respect to $x'$. 
Then, regarding the loads of the resources $x_i,x_i', x_j$ and $r$, one of the cases given in Table~\ref{table_new_possible_deviations} has to hold, where 
$M:=M(x)=\max\{\ell_{r'}(x): r'\in R\}$ denotes the maximum load with respect to $x$, that is, before player~$i$ deviated. 
Also note that $x_j \notin \{x_i,x_i'\}$, since all players using $x_i$ are already unhappy with respect to $x$, and all players using $x_i'$ are happy with respect to $x'$. 
\end{applemma}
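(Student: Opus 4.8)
The plan is to mirror the structure of the proof of Lemma~\ref{lemma_cond1}, but now the ``perturbation'' of the profile is a single deviation of player~$i$ from $x_i$ to $x_i'$ rather than the addition of a fresh player. The key observation is that the load vector changes only on the two resources $x_i$ and $x_i'$: the load of $x_i$ drops by one, the load of $x_i'$ increases by one, and all other loads are unchanged. Since player~$j$ was happy with respect to $x$ but is unhappy with respect to $x'$, for her $K$-improving deviation~$r$ we must have that either the cost $c_{x_j}(\cdot)$ of her current resource $x_j$ went up, or her deviation cost $\dev(j,r,\cdot)$ for~$r$ went down, when passing from $x$ to $x'$. I would treat these two effects separately and show that each forces one of the rows of Table~\ref{table_new_possible_deviations}.

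First I would analyze when $c_{x_j}(x') > c_{x_j}(x)$. Since $x_j \notin\{x_i,x_i'\}$ (the note in the statement, which follows because players on $x_i$ are already unhappy in $x$ and players on $x_i'$ are happy in $x'$ by the best-response property and Lemma~\ref{lemma_equal_load}), the congestion part $a_{x_j}\ell_{x_j}$ of player~$j$'s cost is unchanged, so only the adversary part $\kappa^*_{x_j}$ can have increased. Because $i$ moves \emph{onto} $x_i'$ and \emph{off} $x_i$, the maximum load $M(x')$ is either $M$ or $M+1$, and an increase of $\kappa^*_{x_j}$ can only happen if $x_j$ becomes (or stays) a maximum-load resource while the number of maximum-load resources drops — this pins down $\ell_{x_j}(x)$ relative to $M$ and the load of $x_i$ (e.g.\ $x_i$ was the unique max-load resource and $x_i'$ did not reach load $M$, or similar configurations), together with using Lemma~\ref{lemma_load-1} and Lemma~\ref{lemma_x_i_preference} to rule out degenerate sub-cases. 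Second, I would analyze when $\dev(j,r,x') < \dev(j,r,x)$. The congestion part $a_r(\ell_r+1)$ of the deviation cost only increases if $r=x_i'$ and stays equal if $r\ne x_i'$; and the adversary part of the deviation cost, $\kappa^*_r(x_{-j},r)$, can only decrease if after $j$'s hypothetical move to $r$ the resource set carrying maximum load grows or the maximum load itself decreases — again constraining $\ell_r(x)$, $\ell_{x_i}(x)$, and whether $x_i$ or $x_j$ was the unique top resource. Combining the subcase $r=x_i'$ (congestion up, so the adversary part must strictly drop) with the subcase $r\ne x_i'$ (so $\ell_r(x)=\ell_r(x')$ and the whole change is in the adversary part) yields the remaining rows.

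Throughout I would invoke Lemma~\ref{lemma_deviation} to know exactly \emph{when} two players on resources of different load have different deviation costs (this is what converts vague ``the adversary part changed'' statements into the precise load equalities $\ell_r = M-1$ or $\ell_r = M-2$ with ``$x_i$ the unique max-load resource'' side conditions), Lemma~\ref{lemma_load-1} to exclude $\ell_r(x)=\ell_{x_j}(x)-1$, and the fact that loads are decreasing along the ordered resource set so that $a_{x_j}\le a_r$ whenever $\ell_{x_j}(x)\ge\ell_r(x)$, which kills ``sideways'' deviations. Bookkeeping the interaction of the two perturbed resources $x_i$ and $x_i'$ with the maximum-load set — there are now two places where the $\arg\max$ can change, not one — is the main obstacle; I expect the cleanest route is to first fix the value of $M(x')\in\{M,M+1\}$ and $\ell_{x_i}(x)\in\{M,M-1,\dots\}$, and within each such branch read off the admissible $(\ell_{x_j},\ell_r)$ pairs, after which the entries of Table~\ref{table_new_possible_deviations} drop out by the same elementary case analysis as in Lemma~\ref{lemma_cond1}.
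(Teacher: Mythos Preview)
Your overall strategy matches the paper's: split into ``$c_{x_j}$ increased'' versus ``$\dev(j,r,\cdot)$ decreased'', trace the change to the adversary term, and then clean up with Lemmas~\ref{lemma_load-1}, \ref{lemma_x_i_preference}, and \ref{lemma_equal_load}. This is precisely the content of the paper's Claims~1 and~2, and for $r\notin\{x_i,x_i'\}$ your outline is sound.

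The gap is the case $r=x_i$. You assert that the congestion part of the deviation cost ``stays equal if $r\ne x_i'$'', but this is false when $r=x_i$: there $\ell_{x_i}(x')=\ell_{x_i}(x)-1$, so the congestion part of $\dev(j,x_i,\cdot)$ strictly \emph{decreases}. Your either/or dichotomy therefore carries no information in this case --- ``deviation cost went down'' is essentially automatic for $r=x_i$ and does not constrain the loads at all. The paper handles $r=x_i$ by a completely different argument (its Claim~3), which exploits that the algorithm selected $i$ rather than $j$ as the deviating player in $x$. Assuming simultaneously $\dev(j,x_i,x')\ge c_{x_i}(x)$, $\dev(i,x_i',x)\ge\dev(j,x_i',x)$, and $c_{x_j}(x)\ge c_{x_j}(x')$, one chains
\[
c_{x_j}(x)\ \ge\ c_{x_j}(x')\ >\ K\,\dev(j,x_i,x')\ \ge\ K\,c_{x_i}(x)\ >\ K^{2}\,\dev(i,x_i',x)\ \ge\ K^{2}\,\dev(j,x_i',x),
\]
which would make $j$ an unhappy player in $x$ with strictly larger cost than $i$, contradicting the while-loop's selection rule. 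Hence at least one of the three comparisons must fail, and analyzing which one fails (with the help of Lemma~\ref{lemma_deviation}) is what produces the rows of Table~\ref{table_new_possible_deviations} tagged ``$(x_i)$''. Without invoking the algorithm's choice of $i$, the case $r=x_i$ cannot be pinned down, so this ingredient is essential and missing from your plan.
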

\begin{table*}[h]\begin{center}
\begin{tabular}{l c@{\hspace{0.1cm}} l  c@{\hspace{0.1cm}} l c@{\hspace{0.1cm}}l c@{\hspace{0.1cm}} p{7cm}}
\toprule
$\ell_{x_i}(x)$ && $\ell_{x_i'}(x)$ && $\ell_{x_j}(x')=\ell_{x_j}(x)$ && $\ell_{r}(x')$ && \text{further conditions} \\\midrule
\multirow[t]{6}{*}{$M$} && \multirow[t]{2}{*}{$M$} && $\leq M-1$ && $M-1$ $(x_i)$ && \\\addlinespace[0.25em]\cline{4-9}\addlinespace[0.25em]
&&&& $\leq M$ && $M$ && \\\addlinespace[0.25em] \cline{2-9}\addlinespace[0.25em]
&& \multirow[t]{2}{*}{$\leq M-2$} && $M$ && $M$ &&$x_i$ is not the only resource with load $M$ in $x$ \\\addlinespace[0.25em]\cline{4-9}\addlinespace[0.25em]
&& && $M$ && $\leq M-2$ &&$x_i$ is not the only resource with load $M$ in $x$ \\\addlinespace[0.25em]\cline{2-9}\addlinespace[0.25em]
&& \multirow[t]{2}{*}{$\leq M-3$} && $M-1$ && $M-1$ $(x_i)$ &&$x_i$ is the only resource with load $M$ in $x$ \\\addlinespace[0.25em]\cline{4-9}\addlinespace[0.25em]
&& && $M-1$ && $\leq M-3$ &&$x_i$ is the only resource with load $M$ in $x$ \\\addlinespace[0.25em]\cline{1-9}\addlinespace[0.25em]
\multirow[t]{3}{*}{$\leq M-1$} &&  \multirow[t]{2}{*}{$M$} && $\leq M$ && $M$ && \\\addlinespace[0.25em]\cline{4-9}\addlinespace[0.25em]
&&&& $\leq M-1$ && $M-1$ && \\\addlinespace[0.25em]\cline{2-9}\addlinespace[0.25em] 
&& $M-1$ && $\leq M-1$ && $M-1$ && \\\addlinespace[0.25em]\cline{1-9}\addlinespace[0.25em]
\multirow[t]{2}{*}{$\leq M-2$} && \multirow[t]{2}{*}{$M-2$} && $M$ && $\leq M-3$ $(x_i)$ && $x_j$ is the only resource with load $M$ in $x$ \\\addlinespace[0.25em]\cline{2-9}\addlinespace[0.25em]
&&&& $M$ && $ M-2$ &&$x_j$ is the only resource with load $M$ in $x$ \\
\bottomrule 
\end{tabular}
\caption{Cases for the loads of the resources $x_i,x_i', x_j$ and $r$, where $r$ is a $K$-improving deviation for player~$j$ who becomes unhappy due to player~$i$'s deviation from $x_i$ to $x_i'$. Each row corresponds to one possible case. 
In some cases, we derive that if player~$j$ is chosen as the next deviating player in the while-loop, she deviates to $x_i$; and whenever this is the case,  $x_i$ is added in brackets.}
\label{table_new_possible_deviations}
\end{center}\end{table*}
 \begin{proof}
 Assume that player~$i$ is unhappy with her strategy $x_i$ in $x$ and changes to $x_i'$. Denote the resulting strategy profile by $x':=(x_{-i},x_i')$. Assume that player~$j$ is happy with her strategy $x_j$ in $x$ and wants deviate now, with respect to $x'$, to $r \ne x_j$. Note that $x_j \ne x_i$ since player~$j$ is happy with her strategy in $x$. Furthermore
$x_j = x_j'$ since only player~$i \ne j$ deviates in $x$. Additionally $x_j' \ne x_i'$ because player~$i$ is happy in $x'$. If player~$j$ is happy with respect to the profile $x$ and wants to deviate now in $x'$ from $x_j=x_j'$ to a resource $r \ne x_i$, then either her cost must be increase, that is $c_{x_j}(x')>c_{x_j}(x)$ or her deviation cost for $r$ must decrease, that is $\dev(j,r,x')<\dev(j,r,x)$. These two cases are discussed in Claim~\ref{claim_cost_increase} and Claim~\ref{claim_deviation_decrease}. The situations where a player~$j$ wants to deviate to $x_i$ in $x'$ are analyzed in Claim~\ref{claim_deviation_to_x_i}.

\begin{claim}\label{claim_cost_increase}
$c_{x_j}(x')>c_{x_j}(x) \ \Leftrightarrow$ One of the two cases displayed in Table~\ref{table_cost_increase} needs to hold, where $M:=M(x)=\max\{\ell_{r'}(x): r'\in R\}$ denotes the maximum load with respect to $x$. 
\begin{table*}[h]
\centering
\begin{tabular}{l c@{\hspace{0.5cm}} l c@{\hspace{0.5cm}} l c@{\hspace{0.5cm}}  p{7cm}}
\toprule
$\ell_{x_i}(x)$ && $\ell_{x_i'}(x)$ && $\ell_{x_j}(x)$ && further conditions \\\midrule
\multirow[t]{2}{*}{$M$} && \multirow[t]{2}{*}{$\leq M-2$} && $M$ &&  \\
\addlinespace[0.25em]\cline{5-7}\addlinespace[0.25em]
&&&& $M-1$ && $x_i$ is the only resource with load $M$ in $x$ \\
\bottomrule 
\end{tabular}
\caption{Conditions for a player~$j$ who is happy with his strategy in $x$ and wants to change from $x_j$ to $r$ considering the profile $x'$, which results from $x$ after player~$i$ moved from $x_i$ to $x_i'$ (each row corresponds to one possible case).}
\label{table_cost_increase}
\end{table*}
\end{claim}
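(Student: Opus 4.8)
The plan is to reduce the claim to a purely combinatorial analysis of how the single token move $x_i \to x_i'$ affects the maximum load and the number of maximum-load resources. First I would record the easy observation that, since $j \ne i$ and (as already noted in the statement of Lemma~\ref{lemma_cond2}) $x_j \notin \{x_i,x_i'\}$, the load on $x_j$ is untouched by the deviation, i.e.\ $\ell_{x_j}(x') = \ell_{x_j}(x)$. Hence the congestion term $a_{x_j}\ell_{x_j}$ of $c_{x_j}$ does not change, so $c_{x_j}(x') > c_{x_j}(x)$ is equivalent to $\kappa^*_{x_j}(x') > \kappa^*_{x_j}(x)$; the whole claim is therefore only about the adversary term on resource $x_j$.

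Next I would use the explicit form~\eqref{eq:attack} of $\kappa^*$: this term equals $0$ when $x_j$ is not of maximum load and $B/|M^{-1}(\cdot)|$ when it is. Since $B>0$, a strict increase of $\kappa^*_{x_j}$ can occur in exactly two ways: \emph{(a)} $x_j$ is not of maximum load in $x$ but is of maximum load in $x'$; or \emph{(b)} $x_j$ is of maximum load in both $x$ and $x'$ but $|M^{-1}(x')| < |M^{-1}(x)|$. I would then analyse the effect of moving one player off $x_i$ and onto $x_i'$ on $M(\cdot)$ and $|M^{-1}(\cdot)|$, the only nontrivial facts being: $M(x') < M$ can happen only if $x_i$ is the unique maximum-load resource in $x$ and $\ell_{x_i'}(x)\le M-2$, in which case $M(x') = M-1$; and $|M^{-1}(x')|$ can be smaller than $|M^{-1}(x)|$ only if $\ell_{x_i}(x) = M$ and $\ell_{x_i'}(x)\le M-2$. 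Feeding these into \emph{(a)} and \emph{(b)}: case \emph{(a)} forces $\ell_{x_i}(x)=M$ with $x_i$ the unique max-load resource, $\ell_{x_i'}(x)\le M-2$, and (because then $M(x')=M-1$) $\ell_{x_j}(x)=M-1$; case \emph{(b)} forces $\ell_{x_i}(x)=M$, $\ell_{x_i'}(x)\le M-2$, and $\ell_{x_j}(x)=M$ (which, together with $x_j\ne x_i$, automatically means there are at least two max-load resources in $x$). These are precisely the two rows of Table~\ref{table_cost_increase}, which gives the implication ``$\Rightarrow$''.

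For the converse I would verify directly that each of the two listed configurations does increase the adversary term. In the first row $x_i$ and $x_j$ are two distinct max-load resources of $x$, so $\kappa^*_{x_j}(x) = B/p$ with $p := |M^{-1}(x)|\ge 2$, while in $x'$ both $x_i$ and $x_i'$ have load below $M$ and $x_j$ still has load $M$, so $\kappa^*_{x_j}(x') = B/(p-1) > B/p$. In the second row $\kappa^*_{x_j}(x)=0$ since $\ell_{x_j}(x)=M-1<M$, while in $x'$ the maximum load has dropped to $M-1$ and $x_j$ attains it, so $\kappa^*_{x_j}(x')>0$. I expect the main obstacle to be the careful bookkeeping in the forward direction, namely ruling out every remaining combination of $\ell_{x_i}(x)$, $\ell_{x_i'}(x)$ and $\ell_{x_j}(x)$ by checking that none of them can make $\kappa^*_{x_j}$ strictly larger, rather than anything conceptually deep; notably the decreasing-load invariant of Algorithm~\ref{alg:approxnash} is not needed here, only that one move changes loads on two resources by one unit each.
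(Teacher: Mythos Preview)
Your proposal is correct and follows essentially the same approach as the paper's proof: both observe that the congestion part of $c_{x_j}$ is unchanged (since $x_j\notin\{x_i,x_i'\}$), reduce the question to a strict increase of the adversary term, and then split according to whether $x_j$ already has maximum load in $x$ or only in $x'$, deriving in each case exactly the load constraints of Table~\ref{table_cost_increase}. Your write-up is slightly more explicit about the combinatorics of how a single move can lower $M(\cdot)$ or $|M^{-1}(\cdot)|$, but there is no substantive difference.
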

\begin{proofClaim}[Proof of Claim~\ref{claim_cost_increase}]
Assume that player~$i$ changes from the only resource with load $M$ in $x$ to $x_i'$ with $\ell_{x_i'}(x) \le M-2$ and player~$j$ is on a resource $x_j'=x_j$ with $\ell_{x_j}(x)=M-1$. Since $x_j \neq x_i$, $x_j \neq x_i'$ (see above) we know $\ell_{x_j}(x')=M-1$. Then, player~$j$'s cost increases from $c_{x_j}(x)=a_{x_j}\ell_{x_j}(x)$ to $c_{x_j}(x')=a_{x_j}\ell_{x_j}(x)+\frac{B}{\tilde{p}}$, where $\tilde{p} \geq 2$ is the number of resources with load $M-1$ in $x'$. Assume now, player~$i$ changes from a resource $x_i \ne x_j$  with load $M$ to $x_i'$ with $\ell_{x_i'}(x) \le M-2$ and player~$j$ is on a resource $x_j$ with $\ell_{x_j}(x)=M$. Then, player~$j$'s cost increases from $c_{x_j}(x)=a_{x_j}\ell_{x_j}(x)+\frac{B}{p}$ to $c_{x_j}(x')=a_{x_j}\ell_{x_j}(x)+\frac{B}{p-1}$, where $p:=|M^{-1}(x)| \geq 2$ is the number of resources with load $M$ in $x$.

Now, we will show that if $c_{x_j}(x')>c_{x_j}(x)$, one of the two cases displayed in Table~\ref{table_cost_increase} needs to hold. Since $x_j=x_j'$, $x_j \ne x_i$ and $x_j \ne x_i'$ the congestion-part of the cost experienced by player $j$ does not change, comparing the profile $x'$ with $x$. Thus, if $c_{x_j}(x')>c_{x_j}(x)$ holds, the adversary-part must be increase. Therefore, the adversary-part with respect to $x'$ can not be $0$, which implies $x_j$ has maximum load in $x'$.  

If $x_j$ has not maximum load in $x$, then $x_i$ was the only resource with maximum load in $x$, which is $M$ and deviates to $x_i'$ with $\ell_{x_i'}(x) \leq M-2$, $\ell_{x_j}(x)=M-1$.

If $x_j$ has maximum load in $x$, which is $M$, the number of resources with maximum load must be decrease since the adversary-part must be increase. This implies that player $i$ deviates from $x_i$ with $\ell_{x_i}(x)= M$ to $x_i'$ with $\ell_{x_i'}(x) \leq M-2$.
\end{proofClaim}

\begin{claim}\label{claim_deviation_decrease}
$\dev(j,r,x')<\dev(j,r,x) \text{ for } r \ne x_i \Leftrightarrow r \ne x_i'$ and one of the four cases displayed in Table~\ref{table_deviation_decrease} needs to hold, where $M:=M(x)=\max\{\ell_{r'}(x): r'\in R\}$ denotes the maximum load with respect to $x$.
\begin{table*}[h]
\centering
\begin{tabular}{l c@{\hspace{1cm}} l c@{\hspace{1cm}} l c@{\hspace{1cm}} p{6.5cm}}
\toprule
$\ell_{x_i}(x)$  &&  $\ell_{x_i'}(x)$ && $\ell_{r}(x)$ && \text{further conditions} \\\midrule
\multirow[t]{2}{*}{$\leq M$} && \multirow[t]{2}{*}{$M$} &&  $M$ &&\\
\addlinespace[0.25em]\cline{5-7}\addlinespace[0.25em]
&&&&$M-1$ &&\\
\addlinespace[0.25em]\hline\addlinespace[0.5em]
\multirow[t]{2}{*}{$\leq M-1$} && \multirow[t]{2}{*}{$M-1$} && $M-1$ &&\\
\addlinespace[0.25em]\cline{5-7}\addlinespace[0.25em]
&&&&$M-2$ && $x_j$ is the only resource with load $M$ in $x$\\
\addlinespace[0.25em]\hline\addlinespace[0.5em]
$\leq M-2$ && $M-2$ && $M-2$ && $x_j$ is the only resource with load $M$ in $x$ \\
\bottomrule 
\end{tabular}
\caption{Conditions for a player~$j$ who is happy with his strategy in $x$ and wants to change from $x_j$ to $r$ considering the profile $x'$, which results from $x$ after player~$i$ moved from $x_i$ to $x_i'$ (each row corresponds to one possible case).}
\label{table_deviation_decrease}
\end{table*}
\end{claim}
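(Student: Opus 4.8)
The plan is to prove both implications of the equivalence by splitting the deviation cost $\dev(j,r,\cdot)=a_r(\ell_r(\cdot)+1)+\kappa_r^*(\cdot)$ into its \emph{congestion part} $a_r(\ell_r(\cdot)+1)$ and its \emph{adversary part} $\kappa_r^*(\cdot)$, and tracking how each of the two changes when player~$i$ moves from $x_i$ to $x_i'$, in exactly the same style as the proof of Claim~\ref{claim_cost_increase}. Throughout I would rely on the structural facts already established for profiles occurring in Algorithm~\ref{alg:approxnash} (loads decreasing along the resources, $x_i'$ a best response for~$i$, Lemma~\ref{lemma_load-1}) together with the standing assumptions $x_j=x_j'\notin\{x_i,x_i'\}$ recorded in the statement. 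The observation driving everything is that $\ell_r(x')=\ell_r(x)$ whenever $r\notin\{x_i,x_i'\}$, so for such $r$ the congestion part of $\dev(j,r,\cdot)$ is unchanged and a strict decrease of $\dev(j,r,\cdot)$ can only come from a strict decrease of the adversary part.

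For the forward direction, I would first dispose of the case $r=x_i'$: a deviation of $j$ onto $x_i'$ lands her on load $\ell_{x_i'}(x)+2$ in $x'$ but on load $\ell_{x_i'}(x)+1$ in $x$, so both the congestion part (weakly, and strictly if $a_{x_i'}>0$) and the adversary part can only \emph{increase}; a short check on the size $|M^{-1}(\cdot)|$ of the maximum-load set in the two deviated profiles rules out a strict decrease, so $r\neq x_i'$. With $r\notin\{x_i,x_i'\}$ fixed, the adversary part of $\dev(j,r,\cdot)$ must then strictly decrease; since $\kappa_r^*$ equals $B/|M^{-1}(\cdot)|$ on a maximum-load resource and $0$ otherwise, this already forces $r$ to have been a maximum-load resource of $(x_{-j},r)$, hence $\ell_r(x)\in\{M-2,M-1,M\}$, and it forces player~$i$'s move to have either (i) brought $x_i'$ up to tie with $r$ at the top of $(x_{-j}',r)$, so that $|M^{-1}(\cdot)|$ grows by one and $\kappa_r^*$ drops from $B/c$ to $B/(c+1)$, or (ii) brought $x_i'$ strictly above $r$, so that $r$ leaves the maximum-load set of $(x_{-j}',r)$ and $\kappa_r^*$ drops to $0$. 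Case (i) corresponds to the rows of Table~\ref{table_deviation_decrease} with $\ell_{x_i'}(x)=\ell_r(x)$ and case (ii) to the rows with $\ell_{x_i'}(x)=\ell_r(x)+1$; and in the sub-cases with $\ell_r(x)\le M-2$ one additionally needs $x_j$ to have been the unique maximum-load resource of $x$, since only then can the maximum load of $(x_{-j},r)$ drop to $\ell_r(x)+1\le M-1$ so that $r$ is maximal there at all, which is exactly the side condition attached to those rows.

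For the converse direction, I would simply run through the rows of Table~\ref{table_deviation_decrease} one at a time, evaluating $\dev(j,r,x)$ and $\dev(j,r,x')$ explicitly: in each row the congestion part is identical in $x$ and $x'$, while the adversary part strictly drops, either from $B/c$ to $B/(c+1)$ (mechanism (i)) or from a positive value to $0$ (mechanism (ii)). These are short, mechanical computations of the same flavour as those in the proof of Claim~\ref{claim_cost_increase}.

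The step I expect to be the main obstacle is the forward direction: keeping exact account of how $|M^{-1}(\cdot)|$---the denominator in the adversary's uniform split among maximum-load resources---behaves as player~$i$ is removed from $x_i$ and placed on $x_i'$, ruling out $r=x_i'$ cleanly (the adversary part there can genuinely increase rather than merely stay put), and checking that the resulting case distinction on $(\ell_{x_i}(x),\ell_{x_i'}(x),\ell_r(x))$ is exhaustive and that the ``$x_j$ is the only resource with load $M$'' side conditions appear in exactly the rows where they are forced and in no others.
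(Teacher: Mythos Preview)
Your plan is correct and essentially identical in substance to the paper's own proof: both first rule out $r=x_i'$ by observing that the congestion part and the adversary part of $\dev(j,x_i',\cdot)$ can only go up when $i$ moves onto $x_i'$, then use $\ell_r(x')=\ell_r(x)$ for $r\notin\{x_i,x_i'\}$ to reduce the forward direction to a strict drop of the adversary part, forcing $r$ to be a maximum-load resource of $(x_{-j},r)$ and hence $\ell_r(x)\in\{M-2,M-1,M\}$; the converse is the same routine verification row by row. The only organisational difference is that the paper splits the forward case analysis by the value $M_{j,r}:=\ell_r(x)+1\in\{M-1,M,M+1\}$, whereas you split by the mechanism by which $\kappa_r^*$ drops (either $x_i'$ ties with $r$ at the top so $|M^{-1}|$ grows by one, or $x_i'$ overtakes $r$ so $\kappa_r^*$ falls to $0$); these two decompositions are reparametrisations of each other and lead to the same rows of Table~\ref{table_deviation_decrease}, with the ``$x_j$ unique at load $M$'' side condition arising exactly when $M_{j,r}=M-1$ in the paper's language, or equivalently when $\ell_r(x)=M-2$ in yours.
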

\begin{proofClaim}[Proof of Claim~\ref{claim_deviation_decrease}]
Assume that player~$i$ changes from $x_i$ with $\ell_{x_i}(x) \leq M$ to $x_i'$ with $\ell_{x_i'}(x) = M$. Then, with respect to the profile $x'$, a deviation to a resource $r \ne x_i, x_i'$ with $\ell_{r}(x) = M$ would now be better for player~$j$. That is $\dev(j,r,x')=a_{r}(\ell_{r}(x)+1)+\frac{B}{2}<a_{r}(\ell_{r}(x)+1)+B=\dev(j,r,x)$. Let us consider now the situation where player~$i$ changes from a resource $x_i$ with load $\leq M$ to $x_i'$ with $\ell_{x_i'}(x)= M$, or from a resource $x_i$ with load $\leq M-1$ to $x_i'$ with $\ell_{x_i'}(x)= M-1$ and $r \ne x_i, x_i'$ is a resource with load $M-1$. Then, player~$j$'s deviation cost for $r$ decreases from $\dev(j,r,x)=a_{r}(\ell_{r}(x)+1)+\frac{B}{p+1}$ to $\dev(j,r,x')=a_{r}(\ell_{r}(x)+1)$, or from $\dev(j,r,x)=a_{r}(\ell_{r}(x)+1)+\frac{B}{p+1}$ to $\dev(j,r,x')=a_{r}(\ell_{r}(x)+1)+\frac{B}{p+2}$ respectively, where $p:=|M^{-1}(x)|$ is the number of resources with load $M$ in $x$. If player~$i$ changes from $x_i$ with load $ \leq M-1$ to $x_i'$ with load $M-1$, or from $x_i$ with load $ \leq M-2$ to $x_i'$ with load $M-2$ and player~$j$ is on the only resource with load $M$ in $x$, then a deviation to $r \ne x_i, x_i'$ with $\ell_{r}(x)= M-2$ is now better than before. In the first mentioned case we have $\dev(j,r,x')=a_{r}(\ell_{r}(x)+1)<a_{r}(\ell_{r}(x)+1)+\frac{B}{\tilde{p}+2}=\dev(j,r,x)$; in the second case we have $\dev(j,r,x')=a_{r}(\ell_{r}(x)+1)+\frac{B}{\tilde{p}+3}<a_{r}(\ell_{r}(x)+1)+\frac{B}{\tilde{p}+2}=\dev(j,r,x)$, where $\tilde{p}$ is the number of resources with load $M-1$ in $x$.

Now we will show that if $\dev(j,r,x')<\dev(j,r,x)$, one of the two cases displayed in Table~\ref{table_deviation_decrease} needs to hold and $r \ne x_i'$. First, we discuss the case $r=x_i'$. Since player $i$ changes from $x_i$ to $x_i'$ in $x$ resulting the profile $x'$, we thus get $\ell_r(x)+1=\ell_r(x')$. This implies that the congestion part of the deviation cost to $r=x_i'$ does not decrease. The same holds for the adversary part since the load from $x_i'$ increases while the load of the other resources do not increase. This implies $\dev(j,x_i',x') \geq \dev(j,x_i',x)$ which is a contraction to our assumption.

Assume now that $r \ne x_i'$. Since $r \ne x_i$ we get $\ell_{r}(x)=\ell_{r}(x')$ which implies that the congestion part of the deviation cost to $r$ is the same in $x$ and $x'$. Thus, the adversary part must decrease comparing $x$ to $x'$ to achieve $\dev(j,r,x')<\dev(j,r,x)$. Therefore, the adversary part of the deviation cost needs to be positive with respect to $x$. 
Consequently, $\ell_r(x)+1=M(x_{-j},r)=:M_{j,r} \in \{M-1,M,M+1\}$ needs to hold. 
We now distinguish between the three possible values for $M_{j,r}$.

Case $M_{j,r}=M+1$: This implies $\ell_r(x)=M$. If player~$j$ changes to $r$ in $x$ she has to pay the whole adversary budget $B$ (since $r$ is the only resource with load $M+1$ after this deviation). Now, with respect to $x'$, the adversary part of the deviation cost must decrease. Therefore, there must be another resource with load $M+1$ in $x'$. This implies that player~$i$'s move must be from a resource with load $\leq M$ to a resource with load $M$.

Case $M_{j,r}=M$: This implies $\ell_r(x) = M-1$ since we assume $\ell_r(x)+1=M_{j,r}=M$ (see above). The adversary part of the deviation cost to $r$ with respect to $x$ is either $\frac{B}{p}$ or $\frac{B}{p+1}$, where $p:=|M^{-1}(x)|$ is the number of resources with load $M$ in $x$. The first mentioned case occurs if $\ell_{x_j}(x)=M$, the second if $\ell_{x_j}(x)<M$. Now, with respect to $x'$, the adversary part needs to be smaller. Thus, there are two possible situations. Either the adversary part gets $0$ or $\frac{B}{\bar{p}}$, where $\bar{p}$ is the number of resources with load $M_{j,r}=M$ in $(x_{-j}',r)$. If the the adversary part gets $0$ we know with $\ell_r(x')+1=\ell_r(x)+1=M$ that the maximum load in $x'$ is $>M$. Thus, player~$i$'s move must be from a resource with load $\leq M$ to a resource with load $M$. If the adversary part gets $\frac{B}{\bar{p}}$ we know with the assumption that the adversary part needs to decrease, that player~$i$'s move must be from a resource with load $\leq M-1$ to a resource with load $M-1$. 

Case $M_{j,r}=M-1$: This implies that $x_j$ must be the only resource with load $M$ in $x$ and $\ell_r(x) \leq M-2$. Since we assumed that $\ell_r(x)+1=M_{j,r}=M-1$ (see above), we know $\ell_r(x)=M-2$. The adversary part of the deviation cost for $r$ with respect to $x$ is $\frac{B}{\tilde{p}+2}$, where $\tilde{p}$ is the number of resources with load $M-1$ in $x$. Note that $x_j$ and $r$ have load $M-1$ after the change. Now, with respect to $x'$, the adversary part needs to be smaller.
Thus there are two possible situations. Either the adversary part gets $0$ or $\frac{B}{\hat{p}}$, where $\hat{p}$ is the number of resources with load $M_{j,r}=M-1$ in $(x_{-j}',r)$. If the the adversary part gets $0$ we know with $\ell_r(x')+1=M-1$ that the maximum load in $x'$ is $>M-1$. Thus, player~$i$'s move must be from a resource with load $\leq M-1$ to a resource with load $M-1$. Note that $x_i$ and $x_i'$ have maximum load $M-1$ because $x_j \ne x_i,x_i'$ (see above) is the only resource with load $M$ in $x$. If the adversary part gets $\frac{B}{\hat{p}}$ we know with the assumption that the adversary part needs to decrease, that player~$i$'s move must be from a resource with load $\leq M-2$ to a resource with load $M-2$. 
\end{proofClaim}

\begin{claim}\label{claim_deviation_to_x_i}
Assume player~$i$ changed in $x$ from $x_i$ to $x_i'$ resulting the profile $x'$. Then, with respect to $x'$, a player~$j$ might want to deviate to $x_i$ in $x'$ if one of the six cases displayed in Table~\ref{table_deviation_to_x_i} holds, where $M:=M(x)=\max\{\ell_{r'}(x): r'\in R\}$ denotes the maximum load with respect to $x$. 
\begin{table*}[h]
\centering
\begin{tabular}{l c@{\hspace{1cm}} l c@{\hspace{1cm}} l c@{\hspace{1cm}} p{6.5cm}}
\toprule
$\ell_{x_i}(x)$ && $\ell_{x_i'}(x)$ && $\ell_{x_j}(x)$ && \text{further conditions} \\\midrule
\multirow[t]{2}{*}{$M$}&& $M$ && $\leq M$ && \\
\addlinespace[0.25em]\cline{3-7}\addlinespace[0.25em]
&& $M-1$ && $\leq M-1$ && \\\addlinespace[0.25em]\cline{3-7}\addlinespace[0.25em]
&& \multirow[t]{2}{*}{$\leq M-2$} &&  $M$ &&\\
\addlinespace[0.25em]\cline{5-7}\addlinespace[0.25em]
&&&&$M-1$ && $x_i$ is the only resource with load $M$ in $x$\\
\addlinespace[0.25em]\cline{1-7}\addlinespace[0.25em]
\multirow[t]{2}{*}{$\leq M-1$} && $M-1$ && $M$ && \\
\addlinespace[0.25em]\cline{3-7}\addlinespace[0.25em]
 && $M-2$ && $M$ && $x_j$ is the only resource with load $M$ in $x$ \\
\bottomrule 
\end{tabular}
\caption{Conditions for a player~$j$ who wants to change to $x_i$ considering the profile $x'$, which results from $x$ after player~$i$ moved from $x_i$ to $x_i'$ (each row corresponds to one possible case).}
\label{table_deviation_to_x_i}
\end{table*}
\end{claim}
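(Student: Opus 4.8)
The statement is the missing third branch of the case analysis in the proof of Lemma~\ref{lemma_cond2}, where the $K$-improving deviation of the newly unhappy player~$j$ points to the resource~$x_i$ just vacated by~$i$; I therefore work in the setting of that lemma, so I may assume $j$ is happy in~$x$, that $x_j=x_j'\notin\{x_i,x_i'\}$, and that $x'$ is obtained from~$x$ by moving~$i$ from~$x_i$ to~$x_i'$. Since $j$ is happy in~$x$ we have $c_{x_j}(x)\le K\cdot\dev(j,x_i,x)$, while the hypothesis that $j$ has a $K$-improving deviation to~$x_i$ with respect to~$x'$ gives $c_{x_j}(x')>K\cdot\dev(j,x_i,x')$. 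Were both $c_{x_j}(x')\le c_{x_j}(x)$ and $\dev(j,x_i,x')\ge\dev(j,x_i,x)$ true, we would obtain $c_{x_j}(x')\le c_{x_j}(x)\le K\cdot\dev(j,x_i,x)\le K\cdot\dev(j,x_i,x')$, contradicting the previous inequality. Hence at least one of the regimes (i)~$c_{x_j}(x')>c_{x_j}(x)$ and (ii)~$\dev(j,x_i,x')<\dev(j,x_i,x)$ occurs.

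In regime~(i), Claim~\ref{claim_cost_increase} applies and forces one of the two cases of Table~\ref{table_cost_increase}; a direct comparison shows these are exactly rows~3 and~4 of Table~\ref{table_deviation_to_x_i}. It therefore remains to show that regime~(ii) forces one of rows~1, 2, 5, 6.

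For regime~(ii) the decisive observation is that $i$ has just left~$x_i$ while $j$ is on neither $x_i$ nor $x_i'$, so $\ell_{x_i}(x')=\ell_{x_i}(x)-1$ and a hypothetical move of~$j$ onto~$x_i$ raises its load to $\ell_{x_i}(x)$ in~$x'$ but to $\ell_{x_i}(x)+1$ in~$x$. Consequently the congestion part of $\dev(j,x_i,\cdot)$ drops by exactly $a_{x_i}$, so $\dev(j,x_i,x')<\dev(j,x_i,x)$ holds iff the adversary part of $\dev(j,x_i,\cdot)$ does not grow by as much as $a_{x_i}$. I would then mirror the argument of Claim~\ref{claim_deviation_decrease}: split according to the maximum load $M_{j,x_i}:=M(x_{-j}',x_i)\in\{M-1,M,M+1\}$ produced by $j$'s hypothetical move, and in each case express the adversary part of $\dev(j,x_i,\cdot)$ in $x$ and in $x'$ as $B$ divided by the number of resources at that load level, the two counts differing only because of $i$'s move from~$x_i$ to~$x_i'$ and of $j$'s hypothetical move away from~$x_j$. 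Comparing these counts pins down the admissible values of $\ell_{x_i}(x)$, $\ell_{x_i'}(x)$ and $\ell_{x_j}(x)$, together with whether $x_i$ or $x_j$ is the \emph{unique} maximum-load resource of~$x$, to precisely the combinations recorded in rows~1, 2, 5, 6. Throughout, Lemma~\ref{lemma_load-1} (player~$i$ never deviates to a resource of load $\ell_{x_i}(x)-1$) and the invariant that loads decrease along $r_1,\dots,r_m$ discard combinations that cannot arise in the algorithm, while the fact that $i$ was selected as a maximally unhappy player (so its cost weakly dominates that of every unhappy player) is used to exclude the remaining spurious low-load configurations in which all adversary parts vanish.

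The main obstacle is the bookkeeping in regime~(ii): the adversary part is $B$ divided by the count of co-maximal resources, and this count is perturbed both by $i$'s move and by $j$'s hypothetical move, so one must carefully separate the sub-cases according to whether $x_i$, respectively~$x_j$, is the sole resource at load~$M$ in~$x$ --- this is exactly where the side conditions of rows~4 and~6 originate. Once the sub-cases are enumerated, each reduces to a short arithmetic check of the same flavour as those already performed in the proofs of Claims~\ref{claim_cost_increase} and~\ref{claim_deviation_decrease}, so the only genuinely new ingredient is the reduction to the two regimes carried out in the first paragraph.
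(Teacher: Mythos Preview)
Your two-regime split is not the paper's argument, and regime~(ii) is too weak to carry the weight you put on it. Observe that when $i$ leaves $x_i$ the congestion part of $\dev(j,x_i,\cdot)$ always drops by $a_{x_i}$; hence $\dev(j,x_i,x')<\dev(j,x_i,x)$ holds in essentially every configuration where the adversary part does not jump, in particular whenever $\ell_{x_i}(x),\ell_{x_i'}(x),\ell_{x_j}(x)$ are all well below $M$. In such configurations none of rows~1, 2, 5, 6 applies, yet your regime~(ii) is satisfied. Your appeal to the selection rule (``$i$ maximally unhappy'') and to Lemma~\ref{lemma_load-1} to ``exclude the remaining spurious low-load configurations'' is precisely where the real work lies, and you have not supplied a mechanism.

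The paper's proof does something different and sharper. It shows that the three inequalities $\dev(j,x_i,x')\ge c_{x_i}(x)$, $\dev(i,x_i',x)\ge\dev(j,x_i',x)$, and $c_{x_j}(x)\ge c_{x_j}(x')$ cannot all hold simultaneously: chaining them with $c_{x_j}(x')>K\,\dev(j,x_i,x')$ and $c_{x_i}(x)>K\,\dev(i,x_i',x)$ yields $c_{x_j}(x)>K\,\dev(j,x_i',x)$ together with $c_{x_j}(x)>c_{x_i}(x)$, so $j$ would already be unhappy in $x$ with strictly larger cost than $i$, contradicting the algorithm's choice of $i$. This gives a \emph{three}-way case split. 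Your regime~(i) coincides with the failure of the third inequality and indeed yields rows~3 and~4 via Claim~\ref{claim_cost_increase}. But rows~1 and~2 come from the failure of the \emph{first} inequality, $\dev(j,x_i,x')<c_{x_i}(x)$ (a comparison against $i$'s current cost, not against $\dev(j,x_i,x)$), which forces $\ell_{x_i}(x)=M$; and rows~5 and~6 come from the failure of the \emph{second} inequality, $\dev(i,x_i',x)<\dev(j,x_i',x)$, which via Lemma~\ref{lemma_deviation} pins down $\ell_{x_i'}(x)\in\{M-1,M-2\}$ and $\ell_{x_j}(x)=M$. Neither of these comparisons appears in your decomposition, and the chain through player~$i$'s own $K$-improving inequality is the missing idea.
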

\begin{proofClaim}[Proof of Claim~\ref{claim_deviation_to_x_i}]
Obviously, player~$j$ wants to change from her current strategy $x_j$ to $x_i$ in $x'$ if and only if 
\begin{equation}\label{deviation_j}
c_{x_j}(x')> K\dev(j,x_i,x').
\end{equation}
Moreover, player~$i$ wants to change from $x_i$ to $x_i'$ with respect to $x$ if and only if 
\begin{equation}\label{deviation_i}
c_{x_i}(x)> K\dev(i,x_i',x).
\end{equation}
We first show by contradiction that it is not possible that all of the three following inequalities are fulfilled: $\dev(j,x_i,x') \geq c_{x_i}(x)$, $\dev(i,x_i',x) \geq \dev(j,x_i',x)$, and $c_{x_j}(x) \geq c_{x_j}(x')$. Assume that these three inequalities are fulfilled. Together with  inequality~\eqref{deviation_j} and inequality~\eqref{deviation_i}, we thus get:
\begin{align*}
c_{x_j}(x) &\geq c_{x_j}(x')> K \dev(j,x_i,x') \geq K c_{x_i}(x)\\
&> K^2 \dev(i,x_i',x) \geq K^2 \dev(j,x_i',x).
\end{align*}
With $K^2 \dev(j,x_i',x)> K \dev(j,x_i',x)$ we can conclude that player~$j$ wants to deviate to $x_i'$ with respect to $x$ and $c_{x_j}(x)>Kc_{x_i}(x)>c_{x_i}(x)$. This contradicts our assumption that in the algorithm player~$i$ deviates in $x$ and not player~$j$. Thus, at least one of these three mentioned inequalities is not fulfilled. We now distinguish between which of these inequalities does not hold.

Case $\dev(j,x_i,x') < c_{x_i}(x)$: We will show that in this situation one of the two following cases holds. Either $\ell_{x_j}(x) \leq M$ and player~$i$ deviates from $x_i$ with load $M$ to $x_i'$ with load $M$ in $x$, or $\ell_{x_j}(x) \leq M-1$ and player~$i$ deviates from $x_i$ with load $M$ to $x_i'$ with load $M-1$ in $x$. Obviously, the congestion part of the cost experienced by player~$i$ on $x_i$ in $x$ and the congestion part of the deviation cost experienced by player~$j$ by changing to $x_i$ in $x'$ is the same since $\ell_{x_i}(x)=\ell_{x_i}(x_{-j}',x_i)$. Thus, to achieve $\dev(j,x_i,x') < c_{x_i}(x)$, the adversary part needs to be smaller. This implies that the adversary part of the cost experienced by player~$i$ on $x_i$ in $x$ needs to be positive. Therefore, $\ell_{x_i}(x)=M$ and the adversary part of the cost is $\frac{B}{p}$, where $p:=|M^{-1}(x)|$ is the number of resources with load $M$ in $x$. Since the adversary part of the deviation cost experienced by player~$j$ by changing to $x_i$ in $x'$ must be smaller, it can be $0$ or $\frac{B}{\tilde{p}}$, where $\tilde{p}$ is the number of resources with load $M$ in $(x_{-j}',x_i)$, and $\tilde{p}>p$. If the adversary part is $0$, we know with $\ell_{x_i}(x)=\ell_{x_i}(x_{-j}',x_i)=M$ that the maximum load is $M+1$ in $(x_{-j}',x_i)$. This implies that player~$i$ changes from $x_i$ with load $M$ to $x_i'$ with load $M$ in $x$. Now we discuss the case that the adversary part is $\frac{B}{\tilde{p}}$. In this situation player~$i$ changes from $x_i$ with load $M$ to $x_i'$ with load $ \leq M-1$ in $x$. Let $p'$ be the number of resources with load $M$ in $x'$. If $\ell_{x_j}(x)= M$ we get $\tilde{p}=p' \leq p$ which contradicts $\tilde{p}>p$. Thus, $\ell_{x_j}(x) \leq M-1$. If $\ell_{x_i'}(x) \leq M-2$ we get $p=p'+1$, and with $\ell_{x_j}(x) \leq M-1$ we get $\tilde{p}=p'+1$. Together we get $p=p'+1=\tilde{p}$ which contradicts $\tilde{p}>p$. Thus, $\ell_{x_i'}(x)= M-1$.

Case $\dev(i,x_i',x) < \dev(j,x_i',x)$: We show that in this situation one of the two following cases holds. Either $\ell_{x_j}(x) = M$ and player~$i$ deviates from $x_i$ with load $\leq M-1$ to $x_i'$ with load $M-1$ in $x$, or $x_j$ is the only resource with load $M$ in $x$ and player~$i$ deviates from $x_i$ with load $\leq M-1$ to $x_i'$ with load $M-2$ in $x$. According to Lemma~\ref{lemma_deviation} we know that either $\ell_{x_i'}(x)=M-1$ and one of the two resources $x_j$, $x_i$ has load $M$ while the other has load $<M$ in $x$, or $\ell_{x_i'}(x)=M-2$ and one of the two resources $x_j$, $x_i$ is the only resource with load $M$ in $x$. 
To make that more precise, we have to know the relation between the load of $x_j$ and $x_i$ in $x$. We prove by contradiction that $\ell_{x_j}(x) \geq \ell_{x_i}(x)$. Assume that $\ell_{x_j}(x) < \ell_{x_i}(x)$. This implies $\ell_{x_i}(x)=M$ and $\ell_{x_j}(x) \leq M-1$ with Lemma~\ref{lemma_deviation}. Furthermore, we know $\ell_{x_i'}(x) \in  \{M-1,M-2\}$ according to Lemma~\ref{lemma_deviation}. If $\ell_{x_i'}(x) = M-1$ we get $\dev(i,x_i',x)=a_{x_i'}M+\frac{B}{p} > a_{x_i'}M+\frac{B}{p+1}=\dev(j,x_i',x)$, where $p:=|M^{-1}(x)|$ is the number of resources with load $M$ in $x$. This contradicts our assumption that $\dev(i,x_i',x) < \dev(j,x_i',x)$. If $\ell_{x_i'}(x) = M-2$ and $x_i$ is the only resource with load $M$ in $x$, we get $\dev(i,x_i',x)=a_{x_i'}(M-1)+\frac{B}{\tilde{p}+2} > a_{x_i'}(M-1)=\dev(j,x_i',x)$, where $\tilde{p}$ is the number of resources with load $M-1$ in $x$. This contradicts our assumption, too. Thus, $\ell_{x_j}(x) \geq \ell_{x_i}(x)$ needs to hold and with Lemma~\ref{lemma_deviation} we know that either $\ell_{x_i'}(x)=M-1$, $\ell_{x_j}(x)=M$ and $\ell_{x_i}(x) \leq M-1$, or $\ell_{x_i'}(x)=M-2$ and $x_j$ is the only resource with load $M$ in $x$.

Case $c_{x_j}(x)<c_{x_j}(x')$: We show that in this situation one of the following two cases holds. Either $\ell_{x_j}(x) = M-1$ and player~$i$ deviates from $x_i$ which is the only resource with load $M$ in $x$ to $x_i'$ with load $ \leq M-2$ in $x$, or $\ell_{x_j}(x)= M$ and player~$i$ deviates from $x_i$ with load $M$ to $x_i'$ with load $\leq M-2$ in $x$. 
Clearly, the congestion part of the cost experienced by player~$j$ in $x$ and $x'$ does not change since $\ell_{x_j}(x)=\ell_{x_j}(x')$. Note that $x_j \neq x_i$ and $x_j \neq x_i'$ (see above). Thus, the adversary part of player~$j$'s cost must increase comparing $x'$ to $x$. This implies that the adversary part of player~$j$'s cost is positive with respect to $x'$. Therefore, the maximum load in $x'$ must be $\leq M$, since $\ell_{x_j}(x')=\ell_{x_j}(x) \leq M$. In particular, $\ell_{x_i'}(x') \leq M$ and thus $\ell_{x_i'}(x) \leq M-1$ needs to hold. Moreover $\ell_{x_j}(x') \in \{M-1,M\}$ is the maximum load in $x'$. We now distinguish between the possible values for the maximum load in $x'$. 
Assume first that $\ell_{x_j}(x')=M-1$. This implies that player~$i$ changes from $x_i$ which is the only resource with load $M$ in $x$ to $x_i'$ with load $ \leq M-2$ in $x$. Thus, the adversary part of player~$j$'s cost is $0$ with respect to $x$ and $\frac{B}{\tilde{p}}$ with respect to $x'$, where $\tilde{p} \ge 2$ is the number of resources with load $M-1$ in $x'$. This shows an increase of the adversary part.
Assume now that $\ell_{x_j}(x')=M$. Thus, the adversary part of player~$j$'s cost is $\frac{B}{p}$ with respect to $x$ and $\frac{B}{p'}$ with respect to $x'$, where $p$ and $p'$ denote the numbers of resources with load $M$ in $x$ and $x'$, respectively. Since the adversary part needs to increase we know $p>p'$. Therefore, player~$i$ deviates from $x_i$ with load $M$ to $x_i'$ with load $\leq M-2$ in $x$.
\end{proofClaim}

Altogether, we can conclude the conditions for a player~$j$ who is happy with her strategy in $x$ and wants to change from $x_j'=x_j$ to $r$ considering the profile $x'$, which results from $x$ after player~$i$ moved from $x_i$ to $x_i'$. These conditions are displayed in Table~\ref{table_new_possible_deviations_without_lemma}.

\begin{table*}[h]
\centering
\begin{tabular}{l c@{\hspace{0.1cm}} l  c@{\hspace{0.1cm}} l c@{\hspace{0.1cm}}l c@{\hspace{0.1cm}} p{7cm}}
\toprule
$\ell_{x_i}(x)$ && $\ell_{x_i'}(x)$ && $\ell_{x_j}(x')=\ell_{x_j}(x)$ && $\ell_{r}(x')$ && \text{further conditions} \\\midrule
\multirow[t]{13}{*}{$M$} && \multirow[t]{3}{*}{$M$} && $\leq M$ && $M$ && \\\addlinespace[0.25em]\cline{4-9}\addlinespace[0.25em]
&&&& $\leq M$ && $M-1$ && \\\addlinespace[0.25em] \cline{4-9}\addlinespace[0.25em]
&&&& $\leq M$ && $M-1$ $(x_i)$ && \\\addlinespace[0.25em] \cline{2-9}\addlinespace[0.25em]
&&$M-1$ && $\leq M-1$ && $M-1$ $(x_i)$ && \\\addlinespace[0.25em]\cline{2-9}\addlinespace[0.25em]
&& $M-2$ && \multirow[t]{4}{*}{$M$} && $\leq M$ &&$x_i$ is not the only resource with load $M$ in $x$ \\\addlinespace[0.25em]\cline{4-9}\addlinespace[0.25em]
&&&& $M$ && $\leq M-1$ $(x_i)$&&$x_i$ is not the only resource with load $M$ in $x$ \\\addlinespace[0.25em]\cline{4-9}\addlinespace[0.25em]
&&&& $M-1$ && $M-1$ $(x_i)$ &&$x_i$ is the only resource with load $M$ in $x$ \\\addlinespace[0.25em]\cline{4-9}\addlinespace[0.25em]
&&&& $M-1$ && $\leq M-1$ &&$x_i$ is the only resource with load $M$ in $x$ \\\addlinespace[0.25em]\cline{2-9}\addlinespace[0.25em]
&& \multirow[t]{4}{*}{$\leq M-3$} && $M$ && $\leq M$ &&$x_i$ is not the only resource with load $M$ in $x$ \\\addlinespace[0.25em]\cline{4-9}\addlinespace[0.25em]
&&&& $M$&& $M-1$ $(x_i)$ &&$x_i$ is not the only resource with load $M$ in $x$ \\\addlinespace[0.25em]\cline{4-9}\addlinespace[0.25em]
&&&&$M-1$ && $M-1$ $(x_i)$&& $x_i$ is the only resource with load $M$ in $x$\\\addlinespace[0.25em]\cline{4-9}\addlinespace[0.25em]
&&&& $M-1$&& $\leq M-1$&& $x_i$ is the only resource with load $M$ in $x$\\\addlinespace[0.25em]\cline{1-9}\addlinespace[0.25em] 
\multirow[t]{12}{*}{$\leq M-1$} && \multirow[t]{2}{*}{$M$} && $\leq M$ && $
M$&& \\\addlinespace[0.25em]\cline{4-9}\addlinespace[0.25em]
&&&& $\leq M$ && $M-1$ && \\\addlinespace[0.25em] \cline{2-9}\addlinespace[0.25em]
&&\multirow[t]{5}{*}{$M-1$} && $M$ && $M-2$ && $x_j$ is the only resource with load $M$ in $x$\\\addlinespace[0.25em]\cline{4-9}\addlinespace[0.25em]
&&&&$M$ && $\leq M-2$ $(x_i)$&&\\\addlinespace[0.25em] \cline{4-9}\addlinespace[0.25em]
&&&&$\leq M$ && $M-1$&& \\
\addlinespace[0.25em] \cline{4-9}\addlinespace[0.25em]
&&$M-2$ && $M$ && $\leq M-2$ $(x_i)$ && $x_j$ is the only resource with load $M$ in $x$\\\addlinespace[0.25em]\cline{1-9}\addlinespace[0.25em]
\multirow[t]{2}{*}{$\leq M-2$} && $M-2$ && $M$ && $M-2$ && $x_j$ is the only resource with load $M$ in $x$\\\addlinespace[0.25em] \cline{4-9}\addlinespace[0.25em]
&&&&$M$ && $\leq M-3$ $(x_i)$&& $x_j$ is the only resource with load $M$ in $x$\\
\bottomrule 
\end{tabular}
\caption{Conditions for a player~$j$ who is happy with her strategy in $x$ and wants to change from $x_j'=x_j$ to $r$ considering the profile $x'$, which results from $x$ after player~$i$ moved from $x_i$ to $x_i'$ (each row corresponds to one possible case). Whenever there is an explicit resource~$r$ where player~$i$ wants to deviate to, we add this resource in brackets.}
\label{table_new_possible_deviations_without_lemma}
\end{table*}

Using Lemma~\ref{lemma_equal_load}, Lemma~\ref{lemma_load-1} and Lemma~\ref{lemma_x_i_preference} we can reduce Table~\ref{table_new_possible_deviations_without_lemma} to Table \ref{table_new_possible_deviations}, completing the proof of Lemma~\ref{lemma_cond2}.
 \end{proof}
 
 \section{Omitted Proofs for the Main Results from Section~\ref{sec:unrestricted}}\label{appendix_maintheo}
 In this section, we provide complete proofs for Theorem~\ref{theo_symm_and_linear_1} and Corollary~\ref{cor_2}. 
 \begin{proof}[Proof of Theorem~\ref{theo_symm_and_linear_1}]
It suffices to show that in iteration~$k\geq 3$ of the for-loop, the while-loop terminates. Let $x$ and $x'$ denote the profiles directly before and after the new player~$k$ is added. 
If all players are happy with their strategy in $x'$, the statement follows; thus assume that player~$i$ changes from $x_i'=x_i$ to $r$ in the while-loop. 
Due to Lemma~\ref{lemma_cond1}, we know that one of the three cases displayed in Table~\ref{table_after_k_is_placed} needs to hold. We now analyze each of these cases, and make repeated use of Lemma~\ref{lemma_cond2}. 
To this end, let $M:=M(x)=\max\{\ell_{r'}(x):r'\in R\}$ be the maximum load in $x$, that is, before player~$k$ is added, and recall that $x_k'$ denotes the resource to which player~$k$ is added, thus $\ell_{x_{k}'}(x)$ denotes the corresponding load before player~$k$ is added. 
Figure~\ref{fig_tree} illustrates the complete case distinction that we carry out in the proof, and we start with the analysis of the three different cases regarding $\ell_{x_{k}'}(x)$.
 
\begin{figure*}[h]
\begin{center}
\begin{tikzpicture}[sibling distance=9em,
  every node/.style = {shape=rectangle, rounded corners,
    draw, align=center}]]
  \node {$\ell_{x_k'}(x)$}
    child { node {$M-2$} }
		child { node {$M-1$} }
    child { node {$M$}
      child { node {players on $r_1$ \\ never deviate} }
			child { node {player~$j$ deviates \\ from $r_1$ to $r'$}
        child { node {$\ell_{r'}(x^{t+1})=M-1$} }
        child { node {$r'=r^t$} 
				  child { node {next dev.: \\ $\leq M-1\rightarrow M-1$} } 
					child { node {next dev.: \\ $\bar{r}^t\rightarrow r_1$} }
					child { node {next dev.: \\ $\bar{r}^t\rightarrow r^{t-1}$} }
					}
        }
       };
\end{tikzpicture} 
\end{center}
\caption{Illustration for the case distinction in the proof of Theorem~\ref{theo_symm_and_linear_1}.}\label{fig_tree}
\end{figure*}
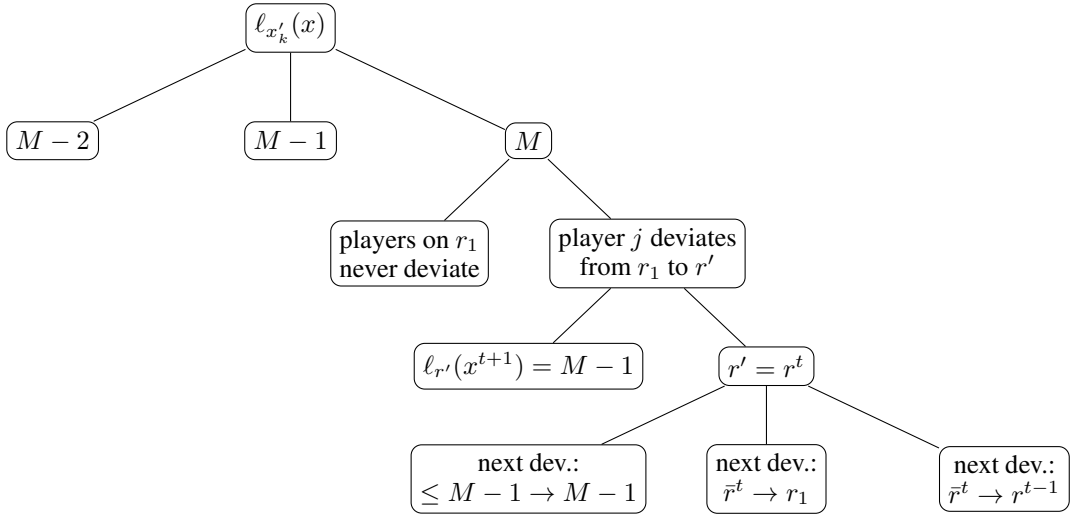

\begin{itemize}[leftmargin=*]
\item[] 
Case $M-2:$

If $\ell_{x_k'}(x)=M-2$, then $M(x')=M$ and $x_i=x_i'$ is the only resource with load $M$ in $x'$ and $\ell_{r}(x')=M-2$ holds.  
Using Lemma~\ref{lemma_cond2}, we conclude that there are no players who become unhappy due to player~$i$'s change, and the players on $x_i'=x_i$ (the only unhappy players w.r.t. $x'$) are now happy due to Lemma~\ref{lemma_equal_load}. Therefore, after player~$i$ deviated, all players are happy with their strategy and the while-loop terminates after one iteration. 

Case $M-1$:

If $\ell_{x_k'}(x)=M-1$, then $M(x')=M$, $\ell_{x_i}(x')\leq M-1$ and $\ell_{r}(x')=M-1$ hold. Lemma~\ref{lemma_cond2} yields that after this deviation, the only players who might be unhappy are players using resources  with load $\leq M-1$, and they want to change to resources with load $M-1$. 
But since there are at most $m-2$ resources with load $M-1$ in $x'$, we conclude that the while-loop terminates after $O(m)$ iterations.

Case $M$:

Now turn to the case that $\ell_{x'_k}(x)=M$. Note that $x'_k=r_1$ and $M(x')=M+1$ hold, and that $r_1$ is the only resource with load $M+1$ in $x'$. We have to consider the different possibilities regarding the loads of $x_i$ and $r$ as given in Table~\ref{table_after_k_is_placed}, namely that player~$i$ changes from load $\leq M$ to load~$M$ or from $\leq M-1$ to $M-1$. 
First note that if $\ell_{r}(x')=M$, the only players who might want to change in $x'':=(x'_{-i},r)$ are using resources with load~$\leq M$ and want to change to load~$M$, or want to change from load~$\leq M-1$ to load~$M-1$. 
This follows from Lemma~\ref{lemma_cond2}, where one should note that the maximum load in $x'$ is $M+1$. 
For the case that $\ell_{r}(x')=M-1$ (and consequently, $\ell_{x_i}(x')\leq M-1$), the only new unhappy players could be players using $x'_k=r_1$, and they might want to change to a resource with load~$M-1$, or to $x_i'$ (see Lemma~\ref{lemma_cond2}). 

Using the above, we can argue that if the players on $r_1$ never deviate during the while-loop, then the only changes are from resources having load $\leq M$ to $M$, or from $\leq M-1$ to $M-1$. But this terminates after $O(m)$ deviations (there can be at most $m-1$ deviations to a resource with load $M$, and at most $2m-2$ deviations to a resource with load $M-1$).

Thus we can now assume that in some iteration of the while-loop, a player~$j$ deviates from $r_1$ to a resource~$r'$.  
Consider the first such change, that is, all changes before have been from load $\leq M$ to $M$ or from $\leq M-1$ to $M-1$. More exactly, note that the changes before cannot include a change to load $M$, because after such a change, all players having load $M+1$ are happy (see Lemma~\ref{lemma_equal_load}) and never become unhappy afterwards (see Lemma~\ref{lemma_cond2} and note that $r_1$ is not the only resource with load~$M+1$ anymore). 
Therefore, all changes before were moves from load $\leq M-1$ to $M-1$ (and there was at least one such change). 
Let $i_1,\ldots,i_t$ be the corresponding sequence of deviating players, where $i_1=i$ is the first, and $i_t$ the last player deviating before player~$j$. 
Let $r^s$ and $\bar{r}^s$ denote player~$i_s$'s resources before and after her change, for $s=1,\ldots,t$, and let $x^{s+1}$ be the strategy profile resulting from $i_s$'s change.  
Finally, let $x^{t+2}:=(x_{-j}^{t+1},r')$ be the strategy profile resulting from $x^{t+1}$ due to player~$j$'s deviation from $r_1$ to $r'$. 
Figure~\ref{fig_3} shows an illustration for the situation before player~$j$ deviates. 

\begin{figure}[h]
\centering
\scalebox{0.8}{
\begin{tikzpicture}[decoration=brace]
\draw  (1, 0) rectangle (1.5, 4);
\draw[blue] (1, 4) rectangle (1.5, 4.5); 
\node[below] at (1.25, -0.2) {$r_1$};
\draw[decorate,  decoration={brace, mirror}]  (1,-1) -- node[below=0.6ex, text width=1cm]{load $M+1$}  (1.5,-1);
\path[->,blue] 
(1.25,4.5) edge[bend left=20]
node[above]{$j$} (4,5.5);
\draw  (2, 0) rectangle (2.5, 4); 
\draw  (3, 0) rectangle (3.5, 3.5); 
\draw[blue!20!black!30!green] (3, 3.5) rectangle (3.5, 4); 
\node[below] at (3.25, -0.1) {$\bar{r}^1$};
\draw  (4, 0) rectangle (4.5, 3.5); 
\draw[blue!20!black!30!green] (4, 3.5) rectangle (4.5, 4); 
\node[below] at (4.25, -0.1) {$\bar{r}^2$};
\draw  (5, 0) rectangle (5.5, 3.5); 
\draw[blue!20!black!30!green] (5, 3.5) rectangle (5.5, 4); 
\node[below] at (5.25, -0.1) {$\bar{r}^t$};
\draw[decorate,  decoration={brace, mirror}]  (2,-1) -- node[below=0.6ex]{load $M$}  (5.5,-1);
\draw  (6, 0) rectangle (6.5, 3.5); 
\draw  (7, 0) rectangle (7.5,3); 
\draw[blue!20!black!30!green,dashed] (7, 3) rectangle (7.5, 3.5); 
\node[below] at (7.25, -0.1) {$r^t$};
\path[->,blue!20!black!30!green] 
(9.25,3) edge[bend right=80]
node[above]{$i_1$} (3.25,4);
\draw  (8, 0) rectangle (8.5, 3); 
\draw[blue!20!black!30!green,dashed] (8, 3) rectangle (8.5, 3.5); 
\node[below] at (8.25, -0.1) {$r^2$};
\path[->,blue!20!black!30!green] 
(8.25,3.5) edge[bend right=50]
node[above]{$i_2$} (4.25,4);
\draw  (9, 0) rectangle (9.5, 2.5); 
\draw[blue!20!black!30!green,dashed] (9, 2.5) rectangle (9.5, 3); 
\node[below] at (9.25, -0.1) {$r^1$};
\draw  (10, 0) rectangle (10.5, 2.5); 
\path[->,blue!20!black!30!green] 
(7.15,3.5) edge[bend right=20]
node[above]{$i_t$} (5.5,4.05);
\draw[decorate,  decoration={brace, mirror}]  (6,-1) -- node[below=0.6ex]{load $\leq M-1$}  (10.5,-1);
\end{tikzpicture}}
\caption{Situation before player~$j$ deviates (with $t=3$).}%
\label{fig_3}%
\end{figure}
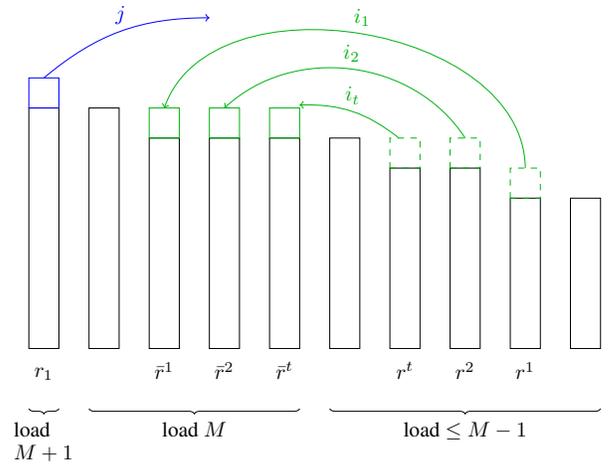

Note that $r_1$ is the only resource with load $M+1$ in $x^{t+1}$. Furthermore, all players having load $M$ are happy in $x^{t+1}$ (see Lemma~\ref{lemma_equal_load}), thus the only unhappy players in $x^{t+1}$ (except from the players using $r_1$) might want to change from load $\leq M-1$ to load $M-1$ or to load $M$. Now consider the situation after player~$j$ deviated. 
Recall that player~$j$ either deviated to a resource with load~$M-1$, or to $r^t$ (smallest resulting cost and smallest index among all resources where the players~$i_1,\ldots,i_t$ deviated from). 

Subcase $\ell_{r'}(x^{t+1})=M-1$:

If player~$j$ deviates to a resource with load~$M-1$, then all players having load $M$ are happy in $x^{t+2}$. Furthermore, there are no new unhappy players in $x^{t+2}$ (see Lemma~\ref{lemma_cond2}). Therefore, after player~$j$'s change, the only possible further changes are from load $\leq M-1$ to load $M-1$ or load $M$. 
Even more, we can also exclude changes from $\leq M-1$ to $M$, since the cost for deviating to $r_1$ (the best resource with load~$M$) would be the same as it was before the new player~$k$ was added to the game, and in $x$ no one wanted to change to $r_1$. Furthermore, the cost on a resource with load $\leq M-1$ can now only be smaller than it was in $x$. 
Altogether, the only further changes can be from load $\leq M-1$ to $M-1$, and this terminates after $O(m)$ iterations.

Subcase $r'=r^t$:

It remains to analyze the case that player~$j$ deviates from $r_1$ to $r^t$ (the resource where player~$i_t$ deviated from). 
New unhappy players might want to change from load~$M$ to $r_1$ or to a resource with load $\leq M-2$ (see Lemma~\ref{lemma_cond2}). 
Among these players, a player on $\bar{r}^t$ moves first (biggest index among the resources with load~$M$). 
Former unhappy players who are not included above want to change from $\leq M-1$ to $M-1$ or to $M$. More exactly, we can again exclude changes from $\leq M-1$ to $M$, since the cost for deviating to $r_1$ (the best resource with load $M$) would be the same as player~$j$ had to pay before her change, and if this is strictly smaller than the current cost of some unhappy player with load~$\leq M-1$, we get a contradiction to the fact that player~$j$ moved first (the other player could improve 'more' by deviating to $r^t$). 
Therefore, the next deviation may be from $\bar{r}^t$ to $r_1$, from $\bar{r}^t$ to $\leq M-2$, or from $\leq M-1$ to $M-1$.
 
In the last case, all players with load~$M$ are happy after the change, and the only further moves can be from load $\leq M-1$ to load~$M-1$. This terminates after $O(m)$ iterations.
 
The case $\bar{r}^t\rightarrow r_1$ (see Figure~\ref{fig_4} for an illustration) can be excluded since it leads to the following contradiction. 
Let $\ell_{r^t}$ denote the load of resource $r^t$ directly before player~$i_t$ deviated from it. Furthermore note that there are at least two resources with load~$M$ in $x^{t+2}$ (namely $r_1$ and $\bar r_1$). Then, the following three inequalities hold. 
 \begin{align}
a_{r^{t}}\cdot \ell_{r^t} &>K \cdot a_{\bar{r}^t}\cdot M \label{eq_2_proof_theo_symm_and_linear_1}\\
a_1\cdot(M+1)+B&>K \cdot  a_{r^{t}}\cdot \ell_{r^t}\label{eq_3_proof_theo_symm_and_linear_1}\\
a_{\bar{r}^t}\cdot M+B/2&>K\cdot (a_1\cdot(M+1)+B) \label{eq_4_proof_theo_symm_and_linear_1}
\end{align}
Using that~\eqref{eq_4_proof_theo_symm_and_linear_1} is equivalent to $K a_{\bar{r}^t}\cdot M>K^2a_1\cdot(M+1)+(K^2-K/2)B$
and combining this with~\eqref{eq_2_proof_theo_symm_and_linear_1} and~\eqref{eq_3_proof_theo_symm_and_linear_1} yields
\begin{align*}
K^2a_1\cdot(M+1)+(K^2-K/2)B&<a_{r^{t}}\cdot \ell_{r^t}\\ &<\frac{a_1\cdot(M+1)+B}{K}.
\end{align*}
From this we conclude the contradiction
\[
0\leq (K^3-1)a_1\cdot(M+1)<(-K^3+K^2/2+1)B=0,
\]
where we additionally used that $K^3>1$ and $-K^3+K^2/2+1=0$ by definition of $K$. 
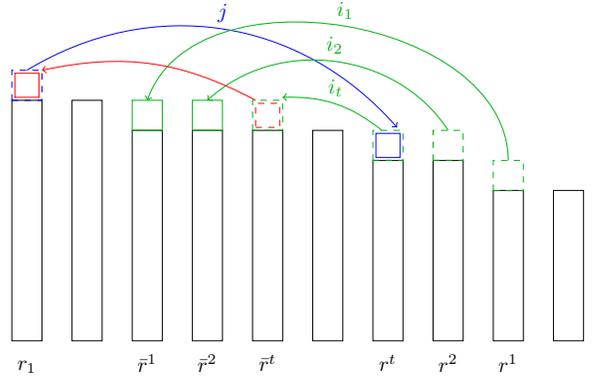
\begin{figure}[h]
\centering
\scalebox{0.8}{
\begin{tikzpicture}[decoration=brace]
\draw  (1, 0) rectangle (1.5, 4);
\draw[blue,dashed] (1, 4) rectangle (1.5, 4.5); 
\draw[red] (1.05, 4.05) rectangle (1.45, 4.45);
\node[below] at (1.25, -0.2) {$r_1$};
\path[->,blue] 
(1.25,4.5) edge[bend left=40]
node[above]{$j$} (7.4,3.55);
\draw  (2, 0) rectangle (2.5, 4); 
\draw  (3, 0) rectangle (3.5, 3.5); 
\draw[blue!20!black!30!green] (3, 3.5) rectangle (3.5, 4); 
\node[below] at (3.25, -0.1) {$\bar{r}^1$};
\draw  (4, 0) rectangle (4.5, 3.5); 
\draw[blue!20!black!30!green] (4, 3.5) rectangle (4.5, 4); 
\node[below] at (4.25, -0.1) {$\bar{r}^2$};
\draw  (5, 0) rectangle (5.5, 3.5); 
\draw[blue!20!black!30!green,dashed] (5, 3.5) rectangle (5.5, 4); 
\draw[red,dashed] (5.05, 3.55) rectangle (5.45, 3.95); 
\node[below] at (5.25, -0.1) {$\bar{r}^t$};
\draw  (6, 0) rectangle (6.5, 3.5); 
\draw  (7, 0) rectangle (7.5,3); 
\draw[blue!20!black!30!green,dashed] (7, 3) rectangle (7.5, 3.5); 
\draw[blue] (7.05, 3.05) rectangle (7.45, 3.45); 
\node[below] at (7.25, -0.1) {$r^t$};
\path[->,blue!20!black!30!green] 
(9.25,3) edge[bend right=80]
node[above]{$i_1$} (3.25,4);
\draw  (8, 0) rectangle (8.5, 3); 
\draw[blue!20!black!30!green,dashed] (8, 3) rectangle (8.5, 3.5); 
\node[below] at (8.25, -0.1) {$r^2$};
\path[->,blue!20!black!30!green] 
(8.25,3.5) edge[bend right=50]
node[above]{$i_2$} (4.25,4);
\draw  (9, 0) rectangle (9.5, 2.5); 
\draw[blue!20!black!30!green,dashed] (9, 2.5) rectangle (9.5, 3); 
\node[below] at (9.25, -0.1) {$r^1$};
\draw  (10, 0) rectangle (10.5, 2.5); 
\path[->,blue!20!black!30!green] 
(7.15,3.5) edge[bend right=20]
node[above]{$i_t$} (5.5,4.05);
\path[->,red]
(5.05,4) edge[bend right=20]
(1.5,4.5);
\end{tikzpicture}
}
\caption{Illustration for the case $\bar{r}^t \rightarrow r_1$ (with $t=3$).}%
\label{fig_4}%
\end{figure}

Therefore, the only remaining case is that a player on $\bar{r}^t$ changes to a resource $r''$ with load $\leq M-2$. 
We will first show that $r''=r^{t-1}$ holds. 
Note that the cost for deviating to $r''$ is strictly smaller than the cost for deviating to $r_1$ (otherwise, one would change to $r_1$). Therefore, $r''$ needs to have strictly smaller load now than it had before player~$k$ was added (otherwise, player~$k$ would be placed on $r''$ rather than on $r_1$). 
This shows that $r''\in \{r^1,\ldots,r^{t-1}\}$ needs to hold, and since deviating to $r^{t-1}$ is cheapest and $r^{t-1}$ has the smallest index among the cheapest deviations, the assertion follows.
Consider the situation after the change from $\bar{r}^t$ to $r^{t-1}$ (of some player~$j'$). It may be that a player on $\bar{r}^{t-1}$ (biggest index among the resources with load~$M$) wants to change to $r_1$ (the best resource with load $M$) or to a resource with load~$\leq M-2$. Players with load~$\leq M-1$ do not want to change, since a deviation to $\bar{r}^t$ (the best resource with load~$M-1$) cannot be beneficial (player~$j'$ moved first).  
Furthermore, the case $\bar{r}^{t-1} \rightarrow r_1$ leads to a contradiction (note that in this case, \eqref{eq_2_proof_theo_symm_and_linear_1}-\eqref{eq_4_proof_theo_symm_and_linear_1} hold since $a_{\bar{r}^t}\geq a_{\bar{r}^{t-1}}$). 
Thus we are in the situation that a player on $\bar{r}^{t-1}$ changes to a resource with load~$\leq M-2$. 
By repeating the argumentation of the last paragraph, we conclude that there can be no more than $t-3$ further deviations.  
\end{itemize}
\end{proof}

\begin{proof}[Proof of Corollary~\ref{cor_2}]
Note that the only case where the proof of Theorem~\ref{theo_symm_and_linear_1} fails if we consider $\alpha=1$ instead of $\alpha=K$ is the case corresponding to Figure~\ref{fig_4}. Since $r_1,\bar{r}^t$ and $r^t$ are three different resources, and with respect to $x'$ there are at least $3$ players using $r_1$, at least $1$ player using $\bar r^t$ and at least 1 player using $r^t$, the statement follows.
\end{proof}

\section{Omitted Proofs from Section~\ref{sec:optimal}}\label{appendix_sec4}
 \begin{proof}[Proof of Lemma~\ref{lemma_existence}]
Given $\bar{c}_{M}\geq 0$ and $\bar{c}_{< M}\geq 0$, the following procedure determines a vector $\ell=\ell(x)$ corresponding to an $\alpha$-PNE $x$ with $M(x)=M$, $k(x)=k$, $k'(x)=k'$ and $k''(x)=k''$, as well as $\bar{c}_M(x)\geq \bar{c}_M$ and $\bar{c}_{< M}(x)\geq\bar{c}_{< M}$, if such a strategy profile $x$ with the additional property that $\bar{c}_M(x)= \bar{c}_M$ and $\bar{c}_{< M}(x)=\bar{c}_{< M}$  exists.
\begin{enumerate}
	\item Test first whether all of the following inequalities are satisfied (if at least one inequality is not satisfied, there is no $\alpha$-PNE with the desired properties and we can stop):
	\begin{align*}
	a_k\cdot M+B/k &\leq \alpha\cdot\bar{c}_M \\
	\text{if $k'\geq k+2$: } a_{k'-1}\cdot(M-1)&\leq \alpha\cdot\bar{c}_{< M} \\
	\text{if $k'<k''$: } a_{k''-1}\cdot (M-2) & \leq \alpha\cdot\bar{c}_{< M} \\
	\text{if $k\geq 2$: } a_1\cdot (M+1)+B & \geq \bar{c}_M \\
	\text{if $k'\geq k+2$: } a_{k+1}\cdot M+\frac{B}{k} & \geq \bar{c}_M \\
	\text{if $k=1$ and $k'<k''$: } a_{k'}\cdot (M-1)+\frac{B}{k'} &\geq \bar{c}_M \\
	\text{if $k\geq 2$ and $k'<k''$: } a_{k'}\cdot (M-1) &\geq \bar{c}_M \\
	a_1\cdot(M+1)+B & \geq \bar{c}_{< M} \\
	\text{if $k'\geq k+2$: } a_{k+1}\cdot M +\frac{B}{k+1} & \geq \bar{c}_{< M}  \\
	\text{if $k'<k''$: } a_{k'}\cdot (M-1) &\geq \bar{c}_{< M}
	\end{align*}
	Note that the first three inequalities come from the Nash conditions (players using resources with load $M,M-1$ or $M-2$ do not want to deviate). The remaining inequalities are due to the minimum properties of $\bar{c}_M$ and $\bar{c}_{< M}$. 
	\item Set $\ell_r := M$ $\forall$ $r\in \{1,\ldots,k\}$, 
$\ell_r := M-1$ $\forall$ $r\in \{k+1,\ldots,k'-1\}$, 
$\ell_r := M-2$ $\forall$ $r\in \{k',\ldots,k''-1\}$. 

If $\ell_r<0$ for some of the above assigned values, we can stop since there is no $\alpha$-PNE with the desired properties. 
	\item Let $n':=n-k\cdot M-(k'-k-1)\cdot (M-1)-(k''-k')\cdot (M-2)$ be the number of players which are not assigned yet. 
	\item\label{eq:load} Test for all $r\geq k''$ whether the following inequalities lead to a contradiction (if this is the case, there is no $\alpha$-PNE with the desired properties and we can stop): 
	\begin{align*}
	\ell_r & \geq 0 \\
	\ell_r&\leq M-3 \\
		\ell_r& \leq \lfloor\frac{\alpha \cdot\bar{c}_{< M}}{a_r}\rfloor \\
		\ell_r& \geq \lceil\frac{\bar{c}_{< M}}{a_r}\rceil-1 \\
	\ell_r & \geq \lceil\frac{\bar{c}_{M}}{a_r}\rceil-1 
	\end{align*}
	Note that the third inequality ensures that no player on resource $r$ wants to deviate. The fourth and fifth inequality are due to the minimum properties of $\bar{c}_M$ and $\bar{c}_{< M}$.
	\item For all $r\geq k''$, let $b_r$ and $b_r'$ be the lower and upper bounds on $\ell_r$ which are induced by the inequalities in \ref{eq:load}., that is, $b_r:=\max\{0,\lceil\frac{\bar{c}_{< M}}{a_r}\rceil-1,\lceil\frac{\bar{c}_{M}}{a_r}\rceil-1\}$ and $b_r':=\min\{M-3,\lfloor\frac{\alpha \cdot\bar{c}_{< M}}{a_r}\rfloor\}$.
	\item If $n'\notin [\sum_{r\geq k''}{b_r}, \sum_{r\geq k''}{b_r'}]$, there is no $\alpha$-PNE with the desired properties and we can stop. 
	\item Else set $\ell_r:=b_r$ for all $r\geq k''$ and update $n'\leftarrow n'-\sum_{r\geq k''}{b_r}$. 
	\item If $n'>0$, then consider the resources $r\geq k''$ from $k''$ to $m$ and update $\ell_r\leftarrow \min\{b_r',b_r+n'\}$ and $n'\leftarrow n'-(\ell_r-b_r)$, until $n'=0$. 
\end{enumerate}
It is clear that if an $\alpha$-PNE $x$ with $M(x)=M$, $k(x)=k$, $k'(x)=k'$ and $k''(x)=k''$, as well as $\bar{c}_M(x)= \bar{c}_M$ and $\bar{c}_{< M}(x)=\bar{c}_{< M}$ exists, the above procedure terminates with a vector $\ell$. We now show that $\ell$ has the desired properties. 
Clearly, $\ell_r\geq 0$ for all $r\in R$ and $\sum_{r\in R}{\ell_r}=n$. Thus there is a strategy profile $x$ with load vector $\ell(x)=\ell$. We now argue that $x$ is an $\alpha$-approximate PNE with the desired properties. 
Clearly, $M(x)=M$, $k(x)=k$, $k'(x)=k'$ and $k''(x)=k''$. Furthermore, $\bar{c}_M\leq \bar{c}_M(x)$ and $\bar{c}_{< M}\leq \bar{c}_{< M}(x)$. Finally, $a_k\cdot M+B/k \leq \alpha\bar{c}_M$ and $a_r(\ell_r(x))\leq \alpha\bar{c}_{< M}$ for all $r>k$. Thus $x$ is in fact an $\alpha$-PNE. 

Note that for a strategy profile $x$ with $M(x)=M$, $k(x)=k$, $k'(x)=k'$ and $k''(x)=k''$, there are at most $3+(m-k''+1)\cdot(M-2)_+=O(mn)$ many possible values for $\bar{c}_{< M}(x)$, and also at most $3+(m-k''+1)\cdot(M-2)_+=O(mn)$ many possible values for $\bar{c}_{M}(x)$. 
Thus if we apply the above procedure for all possible choices of $\bar{c}_{M}(x)$ and $\bar{c}_{< M}(x)$, we get an $\alpha$-PNE~$x$ with $M(x)=M$, $k(x)=k$, $k'(x)=k'$ and $k''(x)=k''$, if such an $\alpha$-PNE exists. 
Since the number of times that we need to apply the above procedure (for given $M,k,k',k''$) is bounded by $O(n^2m^2)$, and the procedure itself is efficient, the overall algorithm is efficient, too.  
\end{proof}

\section{Existence of an approximate PNE with decreasing load profile}
\begin{applemma}\label{lemma_decreasingload}
For a multi-leader congestion game with an adversary where the resource set $R=[m]$ is ordered such that $a_1\leq \cdots \leq a_m$, the following holds for any $\alpha\in [1,2]$: 
If there exists an $\alpha$-approximate PNE, then there also exists an $\alpha$-approximate PNE~$x$ with decreasing loads, that is, with $\ell_1(x)\geq \cdots \geq \ell_m(x)$.
\end{applemma}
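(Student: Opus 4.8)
The plan is to reduce to a single adjacent transposition. Suppose $x$ is an $\alpha$-PNE whose load vector is not decreasing; pick $j$ with $\ell_j(x)<\ell_{j+1}(x)$ and let $x'$ be obtained from $x$ by swapping the two player sets occupying resources $j$ and $j+1$, so $\ell_j(x')=\ell_{j+1}(x)$, $\ell_{j+1}(x')=\ell_j(x)$, and all other loads are unchanged. Since $x'$ is a permutation of $x$ on the resources, $x$ and $x'$ have the same multiset of loads, hence the same maximum load $M$ and the same number $|M^{-1}|$ of maximum-load resources. Progress is measured by the number of inversions $|\{(p,q):p<q,\ \ell_p<\ell_q\}|$ of the current load vector: the swap destroys the inversion $(j,j+1)$ and leaves the inversion status of every other pair unchanged, so after at most $\binom{m}{2}$ swaps the load vector is decreasing. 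Hence it suffices to show that $x'$ is again an $\alpha$-PNE.

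To do so I would partition the players into three groups and exploit two elementary facts. First, $\kappa^*_r(y)$ depends only on the load multiset of $y$ and on $\ell_r(y)$; consequently, whenever a player deviates to a resource $t\notin\{j,j+1\}$, the profiles obtained from $x$ and from $x'$ have the same load multiset and the same load on $t$, so this deviation cost is the same in $x$ and in $x'$; a similar matching shows that deviating to $j$ in $x'$ has the same adversary part as deviating to $j+1$ in $x$, while $a_j\le a_{j+1}$ controls the congestion parts. Using this: \textbf{(i)} a player on a resource $c\notin\{j,j+1\}$ keeps exactly her cost and her deviation costs to all resources $\neq j,j+1$, and her deviation costs to $j,j+1$ can only grow, so she stays happy; \textbf{(ii)} a player who moves from $j+1$ to $j$ has her cost weakly drop (congestion drops from $a_{j+1}\ell_{j+1}(x)$ to $a_j\ell_{j+1}(x)$, adversary part unchanged) while her deviation costs do not drop, so she stays happy.

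The delicate case---and the step I expect to be the main obstacle---is group \textbf{(iii)}: the players moving from $j$ to $j+1$ (there are none if $\ell_j(x)=0$). Their resource keeps load $\ell_j(x)<M$, so the adversary part of their cost stays $0$, but the congestion coefficient rises from $a_j$ to $a_{j+1}$, so their cost may strictly increase, to $a_{j+1}\ell_j(x)$. To show such a player $i$ still has no $\alpha$-improving move I would use as witness any player $i'$ on resource $j+1$ in $x$ (one exists, since $\ell_{j+1}(x)\ge \ell_j(x)+1$); she is happy in $x$ and satisfies $c_{j+1}(x)\ge a_{j+1}\ell_{j+1}(x)\ge a_{j+1}(\ell_j(x)+1)>a_{j+1}\ell_j(x)=c_{j+1}(x')$. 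For a target $t\notin\{j,j+1\}$, $\dev(i,t,x')$ equals player $i$'s deviation cost from $j$ in $x$, which by Lemma~\ref{lemma_deviation} agrees with $i'$'s deviation cost from $j+1$ in $x$ \emph{except} in the two exceptional configurations of that lemma ($\ell_t(x)=M-1$ with resource $j+1$ at load $M$, or $\ell_t(x)=M-2$ with $j+1$ the unique maximum-load resource). Outside those cases we get at once $c_{j+1}(x')\le c_{j+1}(x)\le\alpha\cdot\dev(i',t,x)=\alpha\cdot\dev(i,t,x')$; in the two exceptional cases the two deviation costs differ only through the number of maximum-load resources entering the adversary term, and a short estimate---using $\ell_j(x)\le M-1$ and $|M^{-1}|\ge 1$---shows the desired inequality holds exactly because $\alpha\le 2$. (A group-(iii) player's deviation to resource $j$ itself is handled like (i)/(ii), via $a_j\le a_{j+1}$ and the witness $i'$.) This is the only place the hypothesis $\alpha\le 2$ is used; everything else is routine bookkeeping.
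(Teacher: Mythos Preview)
Your approach is essentially the paper's: swap the player sets on a badly-ordered pair of resources, show the result is still an $\alpha$-PNE, and iterate. Restricting to adjacent transpositions with a bubble-sort inversion count is a clean way to guarantee termination (the paper swaps an arbitrary pair $r,r'$ and leaves the iteration implicit). Your identification of group~(iii) as the delicate case, the use of a witness player on $j+1$ in $x$, and the appeal to Lemma~\ref{lemma_deviation} with the two exceptional configurations handled via $\alpha\le 2$ all match the paper's argument.

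There is, however, a real gap in your group~(i) step. Your matching correctly gives that deviating to $j$ in $x'$ has the \emph{same adversary part} as deviating to $j+1$ in $x$; but since $a_j\le a_{j+1}$, the congestion part is \emph{smaller}, so you only obtain $\dev(i,j,x')\le\dev(i,j+1,x)$ --- the wrong direction. Hence ``her deviation costs to $j,j+1$ can only grow'' is not justified by the matching alone: the cost of moving to $j$ can drop relative to the old cost of moving to $j+1$. (By the same token, $\dev(i,j+1,x')$ can be strictly smaller than $\dev(i,j+1,x)$; only the matched inequality $\dev(i,j+1,x')\ge\dev(i,j,x)$ survives.) What is needed, and what the paper does, is to compare $\dev(i,j,x')$ with $\dev(i,j,x)$ \emph{directly}: the load on $j$ strictly increased from $\ell_j(x)$ to $\ell_{j+1}(x)$, so the congestion part goes up, and one checks separately that the adversary part on $j$ cannot decrease (if $j$ was a max-load resource after the deviation in $x$ then $\ell_{j+1}(x)=\ell_j(x)+1$, forcing $j$ to be the unique max-load resource after the deviation in $x'$, so $\kappa^*_j$ jumps to $B$). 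This is a short argument, but it is not the matching you invoke, and without it group~(i) is incomplete. The remainder of your outline is sound.
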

\begin{proof}
Assume there is a $\alpha$-approximate Nash equilibrium $y$ with $a_r \geq a_{r'}$ and $\ell_r(y)>\ell_{r'}(y)$ for some $r,r' \in R$. Then, we show that the profile $x$ with $\ell_r(x)=\ell_{r'}(y)$, $\ell_{r'}(x)=\ell_{r}(y)$ and $\ell_{\bar{r}}(x)=\ell_{\bar{r}}(y)$ for all $\bar{r} \in R\setminus\{r,r'\}$ is a  $\alpha$-approximate Nash equilibrium too. The profile $x$ results from $y$ by exchanging all players from $r$ and $r'$. Denote $z_{u,v}$ for the profile which results from a profile $z \in X$ by moving one player from a resource $u \in R$ to a resource $v \in R$. Since $y$ is a $\alpha$-approximate Nash equilibrium we have:
\begin{equation}\label{ynash}
a_r\ell_r(y)+\kappa_r^*(y) \leq \alpha(a_{\tilde{r}}(\ell_{\tilde{r}}(y)+1)+\kappa_{\tilde{r}}^*(y_{r,\tilde{r}}))
\end{equation}
for all $\tilde{r} \in R \setminus\{r\} $. First, we show, that a player on the resource $r$ is satisfied with his strategy considering the profile $x$. Let $M:=M(y)=\max\{\ell_{\tilde{r}}(y): \tilde{r}\in R\}$ denotes the maximum load with respect to $y$. If $\ell_r(y)=M$, $\ell_{r'}(y) \leq M-1$ and $\ell_{\bar{r}}(y)=M-1$ for a resource $\bar{r} \in R \setminus\{r,r'\}$ or $\ell_r(y)=M$, $\ell_{r'}(y) \leq M-1$ and $\ell_{\bar{r}}(y)=M-2$ for a resource $\bar{r} \in R \setminus\{r,r'\}$ and $r$ is the only resource with load $M$ regarding $y$ we know with Lemma \ref{lemma_deviation} that the cost of a deviation from $r$ to $\bar{r}$ must be not the same than the costs of a deviation from $r'$ to $\bar{r}$. In all the other cases the cost are equal and with inequality \eqref{ynash} we get:
\begin{align*}
c_r(x)&=a_r\ell_r(x)+\kappa_r^*(x)\\
&\leq a_r\ell_r(y)\\
&\leq \alpha(a_{\bar{r}}(\ell_{\bar{r}}(y)+1)+\kappa_{\bar{r}}^*(y_{r,\bar{r}}))\\
&=\alpha(a_{\bar{r}}(\ell_{\bar{r}}(x)+1)+\kappa_{\bar{r}}^*(x_{r,\bar{r}})),
\end{align*}
for all $\bar{r} \in R \setminus\{r',r\}$. Note that $\ell_r(x)=\ell_{r'}(y)<\ell_r(y)=\ell_{r'}(x)$ and $\kappa_{r}^*(x)=0$ since $r$ has not maximal load regarding $x$. Now, let us discuss the other two cases. First let $\ell_r(y)=M$, $\ell_{r'}(y) \leq M-1$ and $\ell_{\bar{r}}(y)=M-1$ for a resource $\bar{r} \in R\setminus\{r,r'\}$. We prove by contradiction, that in this situation a player on $r$ does not want to deviate to $\bar{r}$ considering the profile $x$. Let us assume that a player on $r$ wants to deviate to $\bar{r}$ considering the profile $x$, that means:
\begin{equation}\label{dev}
c_r(x)=a_r\ell_r(x) > \alpha(a_{\bar{r}}M+\frac{B}{q}),
\end{equation}
where $q:=|M^{-1}(y)| \geq 1$ denotes the number of resources with load $M$ in $y$. Since $y$ is a Nash equilibrium we know:
\begin{equation}\label{ne}
c_r(y)=a_rM+\frac{B}{q} \leq \alpha(a_{\bar{r}}M+\frac{B}{q+1},
\end{equation}
where, again, $q:=|M^{-1}(y)| \geq 1$ denotes the number of resources with load $M$ in $y$. With inequality \eqref{dev} and \eqref{ne} we get:
\begin{equation*}
\frac{a_rM+\frac{B}{q}}{\alpha}-\frac{B}{q} < \frac{a_r(M-1)+\frac{B}{q}}{\alpha}-\frac{B}{q+1}.
\end{equation*}
Thus, we have
\begin{equation*}
B(q+1-\alpha) < -a_rq(q+1) \leq 0.
\end{equation*}
But with $B>0$ and $q+1-\alpha \geq 1+1-2=0$ we get a contradiction. 
If $\ell_r(y)=M$, $\ell_{r'}(y) \leq M-1$ and $\ell_{\bar{r}}(y)=M-2$ for a resource $\bar{r} \in R \setminus\{r,r'\}$ and $r$ is the only resource with load $M$ regarding $y$ we also can show by contradiction, that in this situation a player on $r$ does not want to deviate to $\bar{r}$ considering the profile $x$. Let us assume that a player on $r$ wants to deviate to $\bar{r}$ considering the profile $x$, that means:
\begin{equation}\label{dev2}
c_r(x)=a_r\ell_r(x) > \alpha(a_{\bar{r}}(M-1)).
\end{equation}
Since $y$ is a Nash equilibrium we know:
\begin{equation}\label{ne2}
c_r(y)=a_rM+B \leq \alpha(a_{\bar{r}}(M-1)+\frac{B}{p+2}),
\end{equation}
where $p$ denotes the denotes the number of resources with load $M-1$ in $y$. With inequality \eqref{dev2} and \eqref{ne2} we get:
\begin{equation*}
\frac{a_rM+B}{\alpha}-\frac{B}{p+2}+a_{\bar{r}} < \frac{a_r(M-1)}{\alpha}+a_{\bar{r}},
\end{equation*}
where it follows
\begin{equation*}
B(1-\frac{\alpha}{p+2}) < -a_r \leq 0.
\end{equation*}
But with $B>0$ and $1-\frac{\alpha}{p+2}\geq 1-\frac{2}{2}=0$ we get a contradiction. Altogether, 
we know that a player on the resource $r$ does not want to change to a resource $\bar{r} \in R\setminus\{r,r'\}$ considering the profile $x$.
It remains to show that no player on $r$ wants to deviate to $r'$. With inequality \eqref{ynash} we get:
\begin{align*}
c_r(x)&=a_r\ell_r(x)+\kappa_r^*(x)\\
&\leq a_r\ell_r(y)\\
&\leq \alpha(a_{r'}(\ell_{r'}(y)+1)+\kappa_{r'}^*(y_{r,r'}))\\
&\leq \alpha(a_{r'}(\ell_{r'}(x)+1)+\kappa_{r'}^*(x_{r,r'})).
\end{align*}
Thus, the players on $r$ are satisfied with their strategy considering $x$. Now, let us prove the same for the players on $r'$. Since $a_r \geq a_{r'}$ we have:
\begin{align*}
c_{r'}(x)&=a_{r'}\ell_{r'}(x)+\kappa_{r'}^*(x)\\ &\leq a_r\ell_r(y)+\kappa_r^*(y)\\
&\leq \alpha(a_{\bar{r}}(\ell_{\bar{r}}(y)+1)+\kappa_{\bar{r}}^*(y_{r,\bar{r}}))\\
&\leq \alpha(a_{\bar{r}}(\ell_{\bar{r}}(x)+1)+\kappa_{\bar{r}}^*(x_{r',\bar{r}}),
\end{align*}
for all $\bar{r} \in R \setminus\{r',r\}$. The second inequality holds due to inequality \eqref{ynash}. Furthermore with inequality \eqref{ynash} and $a_r \geq a_{r'}$ a player on $r'$ does not want to change to $r$:
\begin{align*}
c_{r'}(x)&=a_{r'}\ell_{r'}(x)+\kappa_{r'}^*(x)\\ &\leq a_r\ell_r(y)+\kappa_r^*(y)\\
&\leq \alpha(a_{r'}(\ell_{r'}(y)+1)+\kappa_{r'}^*(y_{r,r'}))\\
&\leq \alpha(a_{r}(\ell_{r}(x)+1)+\kappa_{r}^*(x_{r',r})).
\end{align*}
Thus, the players on $r'$ do not want to deviate considering the profile $x$. Finally, let us show that the players on $\bar{r} \in R \setminus\{r',r\}$ are satisfied with their strategy regarding $x$. Clearly, they do not want to deviate to a resource which is not $r,r'$ since they have the same costs regarding $x$ and $y$. Furthermore we know, that $y$ is a $\alpha$-approximate Nash equilibrium and a deviation to an resource which is not $r,r'$ would provide the same cost considering $x$ and $y$. Thus, it remains to show that a player on $\bar{r}$ does not want to change to $r$ and $r'$. First, let us consider a deviation from $\bar{r}$ to $r$. Since $a_r \geq a_{r'}$ and $y$ is a $\alpha$-approximate Nash equilibrium we have:
\begin{align*}
c_{\bar{r}}(x)&=a_{\bar{r}}\ell_{\bar{r}}(x)+\kappa_{\bar{r}}^*(x)\\ &=a_{\bar{r}}\ell_{\bar{r}}(y)+\kappa_{\bar{r}}^*(y)\\
&\leq \alpha(a_{r'}(\ell_{r'}(y)+1)+\kappa_{r'}^*(y_{\bar{r},r'}))\\
&\leq \alpha(a_{r}(\ell_{r}(x)+1)+\kappa_{r}^*(x_{\bar{r},r})).
\end{align*}
To complete the proof let us consider a deviation from $\bar{r}$ to $r'$. Since $a_r \geq a_{r'}$ and $y$ is a $\alpha$-approximate Nash equilibrium we get:
\begin{align*}
c_{\bar{r}}(x)&=a_{\bar{r}}\ell_{\bar{r}}(x)+\kappa_{\bar{r}}^*(x)\\ &=a_{\bar{r}}\ell_{\bar{r}}(y)+\kappa_{\bar{r}}^*(y)\\
&\leq \alpha(a_{r'}(\ell_{r'}(y)+1)+\kappa_{r'}^*(y_{\bar{r},r'}))\\
&\leq \alpha(a_{r'}(\ell_{r'}(x)+1)+\kappa_{r'}^*(x_{\bar{r},r'})).
\end{align*}
Altogether, we can conclude that $x$ is a $\alpha$-approximate Nash equilibrium.
\end{proof}

\section{Additively Approximate PNE}
\label{app:additive-approximate}
In addition to multiplicatively approximate equilibria, as studied in this paper, it also possible to define \emph{additively approximate equilibria}. A \emph{$\varepsilon$-additive approximate PNE} is a strategy profile $x \in X$ such that
\[ \pi_i(x)\leq \pi_i(y_i,x_{-i}) + \varepsilon \text{ for all }y_i\in X_i.\]
Contrasting the case of $K$-multiplicative approximate PNE, however, the existence of $\varepsilon$-additive approximate PNE for the class of games studied in this paper cannot be guaranteed for any constant $\varepsilon > 0$.

To see this, note that the considered multi-leader congestion games are invariant under scaling in the sense that multiplying all cost coefficients $a_r$ and the adversary's budget $B$ with the same factor $\lambda \geq 0$ results in a game in which each player's private cost for any given strategy profile is scaled by the same factor $\lambda$. As a result, given any constant $\varepsilon > 0$, any instance of a game that does not have an exact PNE (such as, e.g., the one described in 
Section~\ref{sec:tightness}) can be scaled in such a way that it does not allow for a $\varepsilon$-additive approximate PNE.

We remark that, while the existence of approximate PNE with a small additive constant cannot be guaranteed, the approach in Section~\ref{sec:optimal} can easily be adjusted to compute $\varepsilon$-additive approximate PNE with the smallest possible $\varepsilon$ for a given instance. 

\end{document}